\newtheorem{mydef}{Definition}
\newtheorem{mythe}{Theorem}
\newtheorem{mylem}{Lemma}
\newtheorem{myrem}{Remark}
\begin{document}

\title{Multi-Objective Framework for Dynamic Optimization of OFDMA Cellular Systems}

\author{Prabhu~Chandhar,~\IEEEmembership{Member,~IEEE,} and Suvra~Sekhar~Das,~\IEEEmembership{Member,~IEEE}
 \thanks{Prabhu Chandhar is with the Division of Communication Systems, Link\"{o}ping University, Sweden. Suvra Sekhar Das is with the Indian Institute of Technology Kharagpur, West Bengal, India - 721302.
(email: prabhu.c@liu.se, suvra@gssst.iitkgp.ernet.in). A part of this work was presented at IEEE ICC'14, Sydney, Australia.}
} 
\maketitle

\begin{abstract}\label{abstract}
Green cellular networking has become an important research area in recent years due to environmental and economical concerns. Switching off under-utilized base stations (BSs) during off-peak traffic load conditions is a promising approach to reduce energy consumption in cellular networks. In practice, during initial cell planning, the BS locations and Radio Access Network (RAN) parameters (BS transmit power, antenna height and antenna tilt) are optimized to meet the basic system design requirements like coverage, capacity, overlap, QoS etc. As these metrics are tightly coupled with each other due to co-channel interference, switching off certain BSs may affect the system requirements. Therefore, identifying a subset of large number of BSs which are to be put into sleep mode, is a challenging dynamic optimization problem. In this work, we develop a multi-objective framework for dynamic  optimization framework for Orthogonal Frequency Division Multiple Access based cellular systems. The objective is to identify the appropriate set of active sectors and RAN parameters that maximize coverage and area spectral efficiency while minimizing overlap and area power consumption without violating the QoS requirements for a given traffic demand density. The objective functions and constraints are obtained using appropriate analytical models which capture the traffic characteristics, propagation characteristics (path-loss, shadowing, and small scale fading) as well as load condition in neighbouring cells. A low complexity evolutionary algorithm is used for identifying the global Pareto optimal solutions at a faster convergence rate. The inter-relationships between the system objectives are studied and guidelines are provided to find an appropriate network configuration that provides the best achievable trade-offs. The results show that using the proposed framework, significant amount of energy saving can be achieved and with a low computational complexity while maintaining good trade-offs among the other objectives. 
\end{abstract}

\begin{keywords}
Green communications, OFDMA, Base station, Sleep mode, Coverage, Overlap, Area spectral efficiency, Area power consumption, Multi-objective optimization.

\end{keywords}

\IEEEpeerreviewmaketitle

\section{Introduction}
\IEEEPARstart{I}{n} recent years, the traffic demand in mobile networks has been growing exponentially due to the evolution of smart phones, applications such as web browsing, video streaming etc., and number of subscriptions. It has been estimated that by 2020, there would be 10 billion mobile devices thereby resulting an 11 fold increase in capacity compared with what they are experiencing today \cite{cisco2012}. To cater to the increased service requirements, a large number of base stations (BSs) are being deployed. Recent studies show that the number of BSs worldwide has doubled from 2007 to 2012, and the number of BSs as of today has reached more than 4 million \cite{GSMA,correia2010}. As a result, the energy consumption in cellular networks increases tremendously leading to an increase in the carbon footprint which leads to global warming \cite{correia2010}. A recent study show that by the end of 2012, the amount of CO$_2$ emissions from BS towers has reached 78 million tons \cite{wu2015}. Reducing energy consumption is an important concern for network operators as it leads to lower Operational Expenditure (OPEX) costs \cite{kyuho2011}.

Several approaches are being considered by the cellular operators to reduce energy consumption both at a component and at a network level \cite{correia2010}. There have been several international research projects (EARTH \cite{EARTH}, OPERANET \cite{OPERANET}, and eWin \cite{eWIN}) that are being carried out to improve energy efficiency in wireless systems. It is seen that 60-80\% of the power consumption in cellular networks is at the Radio Access Network (RAN), mainly at the BSs \cite{hasan2011,marsan2009}. At component level, using advanced design of power amplifiers, reconfigurable circuits, and downlink Discontinuous Transmission (DTX) techniques, only a small amount of energy saving (ES) can be achieved \cite{daquan2013}. However, at the network level, significant energy saving (ES) can be achieved by efficient design of network during planning and management phases \cite{correia2010,richter2009}. 

During cell dimensioning phase, the BSs are usually deployed considering peak traffic demand and future traffic growth. However, studies show that most of the time the BSs are largely underutilized due to spatial and temporal variations in traffic conditions \cite{lte_energyhuawei10,daquan2013}. It is studied that about {{$30\%$}} of time in a day, the traffic is below {{$10\%$}} of the peak \cite{kyuho2011}. Power consumption at the BS is composed of two parts: fixed and dynamic. Fixed power consumption is due to cooling, signal processing etc., and is independent of traffic load. It has been seen that the fixed part constitutes {{$50\%$}} of the total power consumption at full load condition \cite{ferling2010}. Dynamic power consumption is due to the RF transmission and varies with traffic load. During low traffic load conditions, operating under-utilized BSs leads to severe degradation of Area Energy Efficiency (AEE) [number of bits transmitted per Joule per unit area (unit: bits/Joule/m$^2$)] performance due to fixed part of power consumption. Therefore, during off-peak hours, putting under-utilized BSs to sleep mode (SLM) is considered as a promising approach for potential energy savings \cite{correia2010,daquan2013}, thereby increasing AEE performance. In SLM, the underutilized BS is switched OFF and the users associated with it are attached to neighbouring BSs. Moreover, for the purpose of activating/deactivating the BSs, distributed and centralized ES functionalities have been included in 3rd Generation Partnership Project (3GPP) standard \cite{3gpp.32.551}. 

\subsection{Consequences of Putting BSs into Sleep Mode}
Although putting BSs into SLM reduces energy consumption significantly, it has some practical implications. In current cellular systems the BSs are equipped with several (typically three) directional antennas, each radiating within a specified orientation (sector). Therefore, during cell dimensioning phase, the BS locations and RAN parameters (transmit power, antenna height, tilt angle)  are optimized in order to meet various system performance metrics such as coverage, capacity, overlap, Quality of Service (QoS) etc \cite{lieska2002,calegari1997,lieska1998,meunier2000}. Switching OFF certain number of BSs may affect these basic system requirements. For instance, it may result in coverage holes. The interference pattern and Signal to Interference-plus-Noise Ratio (SINR) experienced by the users may be deteriorated due to change in network configuration. Therefore, proper coverage estimation has to be performed before selecting BSs for SLM. 

In order to support hand-overs, it is required to ensure reception of signals from more than one BS \cite{taiPo1994} particularly for the users at the cell edge regions. This means that some amount of overlapping of the coverage region of remaining active BSs has to be ensured. At the same time an excess amount of overlap may significantly increase inter-cell interference (ICI) from the neighbouring cells especially in single frequency networks (SFNs) such as Long Term Evolution (LTE) and Worldwide Interoperability for Microwave Access (WiMAX). This increase in ICI would degrade the SINR experienced by the users. As a consequence, the capacity may reduce significantly as the capacity of a cell depends on the SINR experienced by the users in that cell. Therefore, an excess amount of overlap should be controlled in order to keep ICI within the acceptable level. Further, after switching off certain BSs, the remaining set of active cells should not be overloaded as it would lead to increased blocking (or poor grade of service) in those cells. At the same time, the remaining active BSs should have sufficient resources to serve the offered traffic demand.

Moreover, it is also important to take into account the dynamics of ICI due to switching OFF certain BSs as it has direct impact on the performance metrics. For example, SINR is affected by the radio resource occupancy in neighbouring cells due to varying co-channel interference pattern. The radio resource occupancy in the neighbouring cells is dependent upon the offered traffic load in those cells. After SLM, with the new set of active BSs, the non-uniform RAN parameter configurations and inhomogeneous traffic load lead to non-uniform interference scenarios across the network. Therefore, better estimate of system performance metrics such as coverage and capacity is required to exploit the full potential of the radio resources.

Overall, the subset of BSs to be put into SLM should be identified such that it meets the following requirements.
\begin{itemize}
\item The remaining active BS set provides sufficient coverage, capacity, and overlap requirements with minimal energy consumption. However, the above mentioned performance metrics are tightly coupled with each other due to complex interference patterns. Therefore, improving (or reducing) one metric may end up in degrading the performance of the others. Hence, while finding the appropriate BS subset for BS SLM, it is important to consider the trade-offs among these performance metrics. 
\item Traffic reorientation is required without major changes in blocking or cell overloading to ensure QoS or capacity is not affected. 
\item Along with the new set of BSs, it is required to reconfigure the RAN parameters (BS transmit power, BS antenna height, and antenna's vertical tilt angle) as well. 
\item It is also required to study the inter-relationships between different performance metrics in order to utilize the energy efficiency gains due to BS SLM in current and future cellular systems. 
\item Finding an optimal subset of BSs and RAN parameters with good trade-offs between the system performance metrics is a challenging optimization problem \cite{kyuho2011} because it involves huge search space due to a large scale of mobile networks.
\item Since the requirements of the network differs with fast as well as slow traffic variations \cite{correia2010,aliu2013}, the solution approach should provide appropriate solutions at faster time scales with minimal computational complexity. 
\end{itemize}

In the line of above discussion, in this work, we develop a multi-objective optimization framework for efficient utilization of network resources through BS SLM. The motivation for multi-objective optimization is discussed in the next section.

\subsection{Need for Multi-Objective Optimization in BS Sleep Mode}
The problem of radio network planning for 2G and 3G cellular systems considering multi-objective optimization has been addressed in \cite{lieska2002,calegari1997,lieska1998,meunier2000}. In \cite{lieska2002}, during initial coverage and capacity planning, multi-objective optimization framework is used to find the BS locations, transmission power of BSs and channels per cell. In \cite{calegari1997}, the  BS locations required to cover a given geographical region is obtained using multi-objective optimization for Universal Mobile Telecommunications Systems (UMTS) networks. In \cite{lieska1998}, the problem of maximizing coverage with minimum number of BSs is addressed. In \cite{meunier2000}, a multi-objective framework is proposed for optimizing the number of sites, traffic with minimal interference in radio network design. Since, identifying BS locations is a difficult combinatorial optimization problem, the solutions are obtained using meta-heuristic algorithms such as Genetic Algorithm (GA). In \cite{gordejuela2009}, a meta-heuristic Tabu search multi-objective optimization framework is developed for WiMAX networks to find appropriate BS locations taking into consideration the coverage, interference, and cost criteria. In \cite{lee2000,hurley2002}, a method based on Tabu search based method is used to find the BS locations during initial cell planning considering different objectives such as coverage, site cost, interference, and handover. However, the above mentioned works do not consider energy saving as a design criteria during initial cell planning.

As it is required to find new set of active BSs for different traffic load conditions, the BS SLM approach can also be seen as the \textit{dynamic network planning}. Therefore, it is required to consider multiple system objectives such as coverage, capacity, etc., as is done during initial cell planning. During initial cell planning, as detailed in the above mentioned works, the objective is to find the \textit{location of BSs} that provides maximum coverage, capacity and minimum interference for a peak traffic condition. 
Whereas, the BS SLM problem is different from the initial cell planning in the sense that in dynamic network planning, it is required to find the subset of BSs ($\mathcal{B}_{\mathrm{on}}$) and RAN parameters from an already deployed BS set ($\mathcal{B}$). In this work, we treat the problem as a dynamic optimization various system objectives according to spatial and temporal traffic variations. The related works are detailed in the next section.

\subsection{Related Works}
There have been several studies concerning ES using BS SLM. A recent survey article \cite{wu2015} lists the relevant works on BS SLM. In \cite{marsan2009}, the authors have studied the amount of possible ES through BS SLM. In \cite{tsilimantos2013}, the authors have studied ES with BS SLM in Orthogonal Frequency Devision Multiple Access (OFDMA) based cellular networks using stochastic geometry. In \cite{eunsung2013}, a distributed switching ON/OFF scheme based on the mean and variance of traffic profiles is proposed. In \cite{chiaraviglio2012}, a GA based algorithm is proposed with a single objective of minimizing the number of BSs. In \cite{han2013}, the authors have proposed a BS switching off strategy where the set of BSs are activated from the predefined network pattern according to the offered traffic load. In their work, channel outage probability and call blocking probability are used as the QoS metrics. In \cite{kyuho2011,dufková2010}, greedy-style heuristic algorithms are proposed to find the user association and the set of BSs which need to be switched ON/OFF and analyzed trade-off between energy and delay considering elastic traffic. In \cite{ghosh2012}, the possibility of ES through site-level SLM in 3GPP-LTE networks is studied using system level simulations. In \cite{das2013_es_ga}, Genetic Algorithm (GA) based approach is proposed to identify the sites for SLM. 

Most of the above literature on ES through BS SLM have focused on the single objective of minimizing energy consumption without considering consequences on other metrics such as network coverage, overlap, and QoS degradation etc. The problem is more critical in 4G and beyond cellular systems which employ single frequency reuse. In \cite{gonzalez2014}, a multi-objective optimization framework for cell switch-off is proposed to minimize energy consumption while maximizing system capacity. 

Further, the dynamics of ICI due to SLM has not been addressed in the above mentioned works. For example, in \cite{eunsung2013,han2013}, ICI has been assumed as a static Gaussian-like noise. This is not a valid assumption, because after deactivating certain number of BSs, the service area covered by an individual active BSs would be different. So, the SINR experienced by the users attached to those BSs would also be different due to change in ICI pattern. 

The work presented in this paper is an extension of our previous work \cite{prabhu2014icc}. In \cite{prabhu2014icc}, we have proposed a framework for site-level SLM without RAN parameter optimization. In this work, we present sector-level SLM along with RAN parameter optimization. Further, we used an approach called sum of weighted objectives (SWO), to solve the framed multi-objective optimization problem. In SWO, all the objectives are added together with appropriate weight values to form a single objective. The drawback of this approach is that in order to obtain the best solution one has to find the appropriate weight vector through trial and error method which may take several trails. In this work we use a different approach, where initially the set of Pareto optimal solutions are obtained and then the final solution is selected based on the preferences.

\subsection{Proposed Multi-Objective Optimization Framework for Dynamic Network Planning}
In this work, we develop a multi-objective optimization framework to identify an appropriate network configuration (set of BSs and RAN parameters) that maintains good trade-off between various system performance metrics at different traffic load conditions. The main motivation for the multi-objective optimization framework is that it provides network providers a clear idea about the different options and solutions to achieve optimal network performance. Further, the proposed framework provides flexibility in choosing the objective functions and control variables according to network conditions. 

\subsubsection{System Modeling Approach}
In this work, we model interference between different cells by statistical approach in order to capture the characteristics of propagation conditions such as shadowing and traffic conditions such as arrival rate. In practice, the network setup has to be reconfigured for every few minutes to few hours according to traffic variations. During the reconfiguration interval, one has to capture the variations in ICI which occurs due to user movements and call arrivals. The previous works considered fixed user locations for computing system performance metrics. Computing the system metrics considering fixed user locations may not be able to find the appropriate network configurations. A possible approach is to perform Monte-Carlo simulations. However, it requires high computational complexity and large number of trails to obtain the solutions. As it is required to obtain the solutions in a smaller time scales, Monte-Carlo simulation based solutions may not be a suitable approach. The statistical approach helps to accurately compute the system metrics by capturing spatial and temporal variations of the parameters with less computational burden and processing time. 

\subsection{Contributions}
The contributions of the paper are summarized as follows:
\begin{itemize}
\item Unlike previous works on BS SLM, we consider the following four important system objectives: area power consumption (APC) minimization, area spectral efficiency (ASE) maximization, coverage maximization, and overlap minimization while finding an appropriate network configuration (the sectors and their RAN parameters) in OFDMA based cellular networks such as LTE and WiMAX for a given traffic demand density. We study the inter-relationship between the different system objectives and provide guidelines to achieve best trade-offs between the conflicting objectives.

\item Unlike previous works, ICI is accurately modeled taking into account the large scale fading (shadowing), small scale fading, and load condition in neighbouring cells. The accuracy of the model is verified through simulations and it is seen that it works for wide range of standard deviation of shadowing and resource utilization (sub-carrier occupancy) in neighbouring cells. The objective functions: network coverage, overlap probability, ASE, and APC are derived using the developed ICI model.

\item  A cell load model that relates the offered traffic load to the radio resource utilization is developed in this paper. The model incorporates the traffic characteristics, and the propagation characteristics such as path loss, shadowing, and small scale fading. The resource utilization and blocking probability in an individual cell is obtained using Kaufman Roberts Algorithm (KRA) through traffic and SINR statistics which take into account the load condition in the neighbouring cells. The accuracy of the model is verified through event-driven simulations. 

\item The framed multi-objective optimization problem in this work is a complex combinatorial problem. The number of possible BS combinations exponentially increases with the number of sectors. Identifying RAN parameter along with the BS configuration adds further complexity. Genetic Algorithm (GA) based evolutionary approach is used to find solutions of the multi-objective optimization problem. The advantage of using the proposed framework in dynamically adjusting the network configuration according to varying traffic conditions with reduced complexity is discussed. 

\item The performance of the proposed framework is evaluated on the dense Urban Micro (UMi) scenario which include realistic propagation conditions, three dimensional antenna pattern etc. It is found that for a given traffic demand density, the solutions are converged to a particular region in the search space. Results demonstrate that using the proposed framework, significant amount of ES can be achieved while maintaining good trade-offs between the other system objectives at faster convergence and with lower computational complexity.
\end{itemize}

\subsection{Paper Organization}
The paper is organized as follows. Section \ref{system_model} describes the system model. Problem formulation is given in Section \ref{problem_formulation}. The details of ICI modeling, derivation of network coverage, overlap, ASE, and APC are provided in Section \ref{analysis}. Solution approach for the considered multi-objective optimization framework is given in Section \ref{solution_approach}. Results are provided in Section \ref{results}. Finally, conclusions are given in Section \ref{conclusion}.

\section{System Model}\label{system_model}
Let $\mathcal{B}$ be the set of $N_{\mathcal{B}}$ sectors located inside a geographical region $\mathbb{D}\subset \mathbb{R}^2$ with area $\mathcal{A}_{\mathbb{D}}$. The geographical region $\mathbb{D}$ is divided into $n$ small rectangular grids each with area $dA_i$, where $i$ is grid index. The users are spatially distributed within the network with some distribution $p(.)$ such that $\int_\mathbb{D}\ p(i) \ dA_i = 1$. The network uses OFDMA as the air interface with frequency reuse of unity. The available total system bandwidth {{$B$}} Hz consists of $N_{sc}$ number of sub-channels each with bandwidth $\Delta f_{sc}$ Hz. Table \ref{summary_of_notations} summarizes the list of symbols used in this work. 
     \begin{table}[htb]
     \centering
     \caption{Summary of Notations}\label{summary_of_notations}
     \begin{tabular}{|p{1.015cm}|p{6.75cm}|} \hline
     \textbf{Symbol} & \textbf{Description} \\ \hline
     $i$ & user location \\ \hline
     $j$ & sector index \\ \hline
     $N_{\mathcal{B}}$ & total number of sectors \\ \hline
     $H_{tj}$ & $j$-th sector antenna height (in meters) \\ \hline
     $\phi_{\mathrm{tilt}_j}$ & $j$-th sector antenna's tilt angle (in Deg.s) \\ \hline
     $\theta_{ij}$ & azimuth angle between $j$-th sector antenna and location $i$ (in Deg.s) \\ \hline
     $\phi_{ij}$ & elevation angle between $j$-th sector antenna and location $i$ (in Deg.s) \\ \hline
     $B$ & system bandwidth (in Hz) \\ \hline
     $N_{sc}$ & number of sub-channels \\ \hline
     $\Delta f_{sc}$  & sub-channel spacing \\ \hline
     $k$ & sub-channel index \\ \hline
     $l$ & MCS level index \\ \hline
     $N_L$ & number of MCS levels \\ \hline
     $b_l$ & number of bits transmitted using $l$-th MCS \\ \hline
     $P_{tj}$ & transmit power of $j$-th sector \\ \hline
     $w_l$ & fraction of users belongs to $l$-th MCS \\ \hline
     $v_{kj}$ & activity status of $j$-th cell on sub-channel $k$ \\ \hline
     $\beta_j$ & activity factor of $j$-th cell \\ \hline
     $h_{ij}$ & Nakagami-$m$ distributed channel gain between location $i$ and $j$-th sector antenna\\ \hline
     $\xi$ & Gaussian distributed shadowing coefficient (in dB) \\ \hline
     $\sigma_{\xi}$ & standard deviation of shadow fading (in dB) \\ \hline
     $\alpha$ & path-loss exponent \\ \hline
     $R_{\mathrm{req}}$ & user rate requirement (bits/sec)\\ \hline
     $\mathcal{B}_{\mathrm{on}}$ & set of active sectors \\ \hline
     $\lambda$ & inter-arrival time (in seconds)\\ \hline
     $\mu$ & call holding time (in seconds) \\ \hline
     $T$ & average cell spectral efficiency (in b/s/Hz) \\ \hline
     $C$ & average cell throughput (bits/sec)\\ \hline
     $P_{r,\mathrm{min}}$& received power threshold (in dBm) \\ \hline
     $\Gamma_{\mathrm{min}}$ & minimum required SINR (in dB)\\ \hline
     $P_{b}(l)$ & blocking probability of $l$-th class \\ \hline
     $P_{bj}$ & average blocking probability in $j$-th cell \\ \hline
     $P_{b_{\mathrm{max}}}$ & maximum allowable blocking probability \\ \hline
     ${f_{\mathrm{APC}}}$ & area power consumption (in W/m$^2$) \\ \hline
     ${f_{\mathrm{ASE}}}$ & area spectral efficiency (in b/s/Hz/m$^2$) \\ \hline
     ${f_{\mathrm{COV}}}$ & network coverage (in \%) \\ \hline
     ${f_{\mathrm{OL}}}$ & coverage overlap (in \%)\\ \hline
     \end{tabular}
     \end{table}

\subsubsection{Channel Model}\label{channel_model}
The signal power received at location $i$ from sector $j$ is modeled as 
\begin{equation}\label{rx_power}
P_{r,ij} = P_{tj} \ G_A(\theta_{ij},\phi_{ij}) \ PL(d_{ij},{\alpha},f_c) \ |h_{ij}|^2 \ \chi_{ij},
\end{equation}
where 
\begin{itemize}
\item {{$P_{tj}$}} is transmit power of $j$-th sector
\item Antenna gain \cite{m2135} 
\begin{equation}
G_A(\theta_{ij},\phi_{ij}) = -\operatorname{\mathrm{min}}[-(A_{\theta_{ij}}+A_{\phi_{ij}}),A_{m}], 
\end{equation}
where \begin{equation}
A_{\theta_{ij}}=-\operatorname{\mathrm{min}}\left[\left(\frac{\theta_{ij}}{\theta_{3dBj}}\right)^2,A_{m}\right]\nonumber
\end{equation}
and 
\begin{equation}
A_{\phi_{ij}}=-\operatorname{\mathrm{min}}\left[\left(\frac{\phi_{ij}-\phi_{\mathrm{tilt}j}}{\phi_{3dBj}}\right)^2,A_{m}\right]\nonumber 
\end{equation}
are the horizontal and vertical antenna patterns, respectively \cite{m2135}, $\theta_{ij}$ and $\phi_{ij}$ are the azimuth and elevation angle between $j$-th sector antenna and location $i$, respectively, $\phi_{\mathrm{tilt}j}$ is the vertical tilt angle of $j$-th sector antenna, $\theta_{3dBj}$ and $\phi_{3dBj}$ are the 3 dB beamwidth of horizontal and vertical antenna patterns of $j$-th sector antenna, and {{$A_{m}$}} is maximum attenuation \cite{m2135}. The elevation angle is calculated as 
\begin{equation}
\phi_{ij}=\operatorname{arctan}\bigg(\frac{H_{tj}-H_{ti}}{d_{ij}}\bigg), 
\end{equation}
where $H_{tj}$ and $H_{ti}$ are the heights of $j$-th sector and user at location $i$, respectively. 

\item Nakagami-$m$ distributed random variable $|h_{ij}|$ represents the envelope of small scale fading gain between location {{$i$}} and $j$-th sector, so the power of fast fading $|h_{ij}|^2$ follows Gamma distribution. 

\item Log-normally distributed random variable $\chi_{ij}$ represents the shadowing component between location $i$ and $j$-th sector i.e. 
\begin{equation}
\chi_{ij}=\exp(\eta \xi_{ij}), 
\end{equation}
where $\xi_{ij}$ is a Gaussian random variable (in dB) with zero mean and variance $\sigma_{\xi_{ij}}^2$ and $\eta = \frac{\ln(10)}{10}$. 

\item Path-loss component $PL(d_{ij},{\alpha},f_c)$ is denoted as a function of the distance between $j$-th sector and location $i$ ($d_{ij}$), path-loss exponent $\alpha$, and carrier frequency $f_c$. 

\item The set of transmit powers, sector antenna's vertical tilt angles and antenna heights are denoted by \begin{equation}
\boldsymbol{\mathcal{P}}=[P_{t1},P_{t2},...,P_{tN_{\mathcal{B}}}], \nonumber 
\end{equation}
\begin{equation}
\boldsymbol{\phi}=[\phi_{\mathrm{tilt}1},\phi_{\mathrm{tilt}2},...,\phi_{\mathrm{tilt}N_{\mathcal{B}}}], \nonumber 
\end{equation}
and 
\begin{equation}
\boldsymbol{\mathcal{H}}=[H_{t1},H_{t2},...,H_{t{N_{\mathcal{B}}}}],\nonumber 
\end{equation}
respectively.
\end{itemize}

\subsubsection{Traffic Model}
In this work we consider streaming traffic and blocking probability as the QoS metric. Let $\tau_a$ be independently distributed exponential random variable represents the inter-arrival time of the streaming calls per unit area with mean $1/\lambda$. Let $\tau_s$ be independently distributed exponential random variable represents the duration of the call with mean $1/\mu(i)$ at location $i$. Then the traffic demand density of the call at location $i$ is $\rho(i)=\lambda(i)/\mu(i)$ (Erlang/m$^2$). Let ${\mathbb{D}_j}$ be the set of locations covered by sector $j$. Then, the offered traffic by the users in cell $j$ to the network in Erlang is $\rho_{j} = \int_{\mathbb{D}_j}\rho(i) \ dA_i$. The framework can be easily extended to mixed traffic scenarios to analyze energy saving in terms of additional QoS parameters such as throughput and delay as well. 

\subsubsection{Traffic Load and Activity Status of Interfering Sectors} \label{traffic_load_assumption}
As mentioned in the introduction section, the previous works considered that the interferers are always active on a sub-channel and ICI is modeled as a Gaussian-like noise \cite{eunsung2013,han2013}. However, in practice, the number of occupied sub-channels in a cell depends on the offered traffic load and the SINR statistics in that cell. Therefore, in this work, first we obtain SINR distribution by using the accurate estimate of resource utilization in neighbouring cells. Next, we use the SINR distribution to obtain the important system performance metrics such as coverage, overlap, ASE, and APC. Let $\beta_j$ be the ratio of number of occupied sub-channels to the total number of sub-channels $N_{sc}$ in $j$-th sector for a given traffic demand density $\rho$ (Erlang/m$^2$). Assuming that the BS uniformly choose the sub-channels for it users, the $j$-th interfering sector's activity status on sub-channel $k$ can be modeled as a binary random variable $v_{kj}$ with the first order probability mass function \cite{prabhu2014_twc}
\begin{equation}
Pr[v_{kj}=1]= \beta_j 
\end{equation}
and 
\begin{equation}
Pr[v_{kj}=0]= 1-\beta_j.
\end{equation}
Calculation of $\beta_j$ for a given traffic demand density and SINR distribution is discussed in Section \ref{sec:blocking_probability}. 

\section{Problem formulation}\label{problem_formulation}
Given the traffic demand density {{$\rho$}} (Erlang/m$^2$), our objective is to find the set of active sectors ($\mathcal{B}_{\mathrm{on}}$), sector antenna's vertical tilt angles ($\boldsymbol{\phi}$), sector transmit powers ($\boldsymbol{\mathcal{P}}$), sector antenna heights ($\boldsymbol{\mathcal{H}}$) that jointly maximizes the network coverage (${f_{\mathrm{COV}}}$), area spectral efficiency (${f_{\mathrm{ASE}}}$), and minimizes area energy consumption (${f_{\mathrm{APC}}}$) and the overlap (${f_{\mathrm{OL}}}$) while satisfying target blocking probability requirements. Let 
\begin{equation}\textbf{x}=[x_1,x_2,x_3,x_4,x_5]^T=[\mathcal{B}_{\mathrm{on}},\boldsymbol{\beta},\boldsymbol{\mathcal{P}},\boldsymbol{\phi},\boldsymbol{\mathcal{H}}]^T,\nonumber \end{equation}
\begin{equation}g_1(\textbf{x})={-f_{\mathrm{APC}}}(\mathcal{B}_{\mathrm{on}},\boldsymbol{\beta},\boldsymbol{\mathcal{P}},\boldsymbol{\phi},\boldsymbol{\mathcal{H}}), \nonumber
\end{equation}
\begin{equation}
g_2(\textbf{x})={f_{\mathrm{ASE}}}(\mathcal{B}_{\mathrm{on}},\boldsymbol{\beta},\boldsymbol{\mathcal{P}},\boldsymbol{\phi},\boldsymbol{\mathcal{H}}), \nonumber
\end{equation}
\begin{equation}
g_3(\textbf{x})={f_{\mathrm{COV}}}(\mathcal{B}_{\mathrm{on}},\boldsymbol{\beta},\boldsymbol{\mathcal{P}},\boldsymbol{\phi},\boldsymbol{\mathcal{H}}),\nonumber
\end{equation} and 
\begin{equation}
g_4(\textbf{x})={-f_{\mathrm{OL}}}(\mathcal{B}_{\mathrm{on}},\boldsymbol{\beta},\boldsymbol{\mathcal{P}},\boldsymbol{\phi},
\boldsymbol{\mathcal{H}}).\nonumber
\end{equation}
The multi-objective optimization problem is formulated as 
\begin{equation}\label{objective_function}
{\underset{\textbf{x}} {\mathrm{max}} \ \ \textbf{g}(\textbf{x})=[g_1(\textbf{x}),g_2(\textbf{x}),g_3(\textbf{x}),g_4(\textbf{x})]^T}
\end{equation}

\begin{align}
\normalsize{\text{Subject to:}} \ \ & C1:P_{bj}\leq P_{b,\mathrm{max}},\ \forall j \in \mathcal{B}_{\mathrm{on}};\ \nonumber \\ &C2: g_3(\textbf{x})\ge {f_{\mathrm{COV}}}_{\mathrm{min}};\nonumber \\& C3:\boldsymbol{\mathcal{P}}_{\mathrm{min}}\leq x_3 \leq \boldsymbol{\mathcal{P}}_{\mathrm{max}}; \ \nonumber \\ & C4:\boldsymbol{\phi}_{\mathrm{min}}\leq x_4  \leq \boldsymbol{\phi}_{\mathrm{max}}; \ \nonumber \\ & C5:\boldsymbol{\mathcal{H}}_{\mathrm{min}}\leq x_5  \leq \boldsymbol{\mathcal{H}}_{\mathrm{max}}. \nonumber
\end{align}
\normalsize
Here, constraint $C1$ represents the blocking probability requirement i.e. with the given  set of active sectors ($\mathcal{B}_{\mathrm{on}}$), the blocking probability in every cell $j$ should not exceed the maximum allowable blocking probability $P_{b,\mathrm{max}}$. Constraint $C2$ represent the minimum coverage requirement. Constraints $C3$, $C4$, and $C5$ represent the range of transmit power, tilt, and height, respectively at individual sector. $\boldsymbol{\mathcal{P}}_{\mathrm{min}}$, $\boldsymbol{\phi}_{\mathrm{min}}$, and $\boldsymbol{\mathcal{H}}_{\mathrm{min}}$ are the set of minimum value of transmit power, tilt angle, and height available at individual sector, respectively. $\boldsymbol{\mathcal{P}}_{\mathrm{max}}$, $\boldsymbol{\phi}_{\mathrm{max}}$, and $\boldsymbol{\mathcal{H}}_{\mathrm{max}}$ are the set of maximum value of transmit power, tilt angle, and height available at individual sector, respectively.

In the next section, first we derive the above mentioned objective functions and constraints. Then we study the inter-relationships between the objective functions and different variables defined in the optimization problem in \eqref{objective_function}.

\section{Analysis of Network coverage, Overlap, ASE, and APC}\label{analysis}
In this section, we derive the interference statistics, blocking probability, resource utilization and the objective functions: network coverage $f_{\mathrm{COV}}$, overlap probability ${f_{\mathrm{OL}}}$, ASE ${f_{\mathrm{ASE}}}$, and APC ${f_{\mathrm{APC}}}$, as a function of decision variables: active set of sectors $\mathcal{B}_{\mathrm{on}}$, transmit powers $\boldsymbol{\mathcal{P}}$, sector antenna's vertical tilt angles $\boldsymbol{\phi}$,  sector antenna heights $\boldsymbol{\mathcal{H}}$, and resource occupancy $\boldsymbol{\beta}$. However, for notational simplicity, we omit these parameters in the subsequent sections. 

We also validate the models presented in this section through Monte-Carlo simulations. The results are generated for the dense Urban Micro-cell (UMi) network scenario with an inter-site-distance of 200 m as shown in Figure \ref{Layout_UMi_UF_TWC}. Wrap around model is used to incorporate equal ICI at the edge regions. The user locations are assumed to be uniformly distributed within the network i.e. $p(i) = \frac{1}{\mathcal{A}_\mathbb{D}}, \ \forall i$. The total system bandwidth is 10 MHz composed of 600 useful sub-channels each with bandwidth $15$ kHz. For the traffic model, we assume streaming call requests with a rate requirement ($R_{\mathrm{req}}$) of $128$ kbps. The maximum allowable blocking probability in a cell is assumed to be $P_{b,\mathrm{max}}= .02$. Further, homogeneous traffic distribution is assumed (i.e. $\rho(i) = \rho, \ \forall i$). The non-line of sight (NLoS) path-loss model for UMi scenario is used as given in \cite{m2135}. The horizontal and vertical antenna pattern parameters are used as per \cite{m2135}. The additional system parameters are given in Table \ref{system_parameters}. Minimum required received power ($P_{r,\mathrm{min}}$) and minimum required SINR ($\Gamma_{\mathrm{min}}$) thresholds are assumed to be $-102$ dBm ($10$ MHz) and $-10$ dB, respectively. The SINR thresholds for $15$ different MCS levels for SISO Rayleigh fading scenario is used with the target BLER of $.1$ \cite{mehlfuhrer2009}. The corresponding number of sub-channels required to make a call is given in Table \ref{SINR_class}.

\begin{table}[thb]
\caption{ADDITIONAL SYSTEM PARAMETERS}
\begin{tabular}{|p{5.25cm}|p{2.75cm}|}\hline
\label{system_parameters}
\textbf{Parameter} &  \textbf{Value}\\\hline
Carrier Frequency & 2.5 GHz\\\hline
Bandwidth ($B$) & 10 MHz\\\hline
Sub-carrier spacing ($\Delta f_{sc}$) & 15 KHz\\\hline
Number of MCS levels ($N_L$) & 11\\\hline
3 dB vertical beamwidth ($\phi_{3dB}$) & 15 Deg. \\\hline
3 dB horizontal beamwidth ($\theta_{3dB}$) & 70 Deg.\\\hline
Antenna gain (boresight) & 17 dBi\\\hline
Sector antenna height ($H_t$) & 20 m\\\hline
Vertical tilt angle ($\phi_{\mathrm{tilt}}$) & 12 Deg. \\\hline
Thermal noise level ($N_0$) & -104 dBm/10 MHz \\\hline
Minimum required SINR threshold ($\Gamma_{\mathrm{min}}$) & -10 dB \\\hline
Target blocking probability ($P_{b,\mathrm{max}}$) & 2\%\\\hline
Std. of shadow fading ($\sigma$) & 6 dB\\ \hline
\end{tabular}
\end{table}

     \begin{table*}[htb]
     \centering
     \caption{MCS levels and corresponding SINR Thresholds}\tiny{
     \begin{tabular}{|p{4cm}|p{.35cm}|p{.36cm}|p{.36cm}|p{.36cm}|p{.36cm}|p{.36cm}|p{.36cm}|p{.36cm}|p{.36cm}|p{.36cm}|p{.36cm}|p{.36cm}|p{.36cm}|p{.36cm}|p{.36cm}|p{.36cm}|p{.36cm}|}
     \hline \textbf{MCS level index ($l$)} & 1  &2&3&4&5&6&7&8&9&10&11&12&13&14&15\\
     \hline \textbf{Modulation } & QPSK  &QPSK&QPSK&QPSK&QPSK&QPSK&16-QAM&16-QAM&16-QAM&64-QAM&64-QAM&64-QAM&64-QAM&64-QAM&64-QAM\\
     \hline \textbf{Coding rate} & .076  &.12&.19&.3&.44&.59&.37&.48&.6&.45&.55&.65&.75&.85&.93\\
     \hline \textbf{SINR Threshold (dB) ($\gamma_l$)} & -7.5 &-5&-3&-1&1&3.5&5&7&9&11&13.5&15&16&17.5&19\\
     \hline \textbf{Number of sub-channels required ($n_{sc}(l)$)}& 56  &36&22&14&10&7&6&4&4&3&3&2&2&2&2\\
     \hline
     \end{tabular}}\label{SINR_class}
     \end{table*}

\subsection{SINR Model}
We model SINR as a function of shadow parameter ($\chi$), small scale fading parameter ($h$) and activity status of interfering cells ($v$). Following the channel model given in \eqref{channel_model}, the received SINR for a user at location $i$ when attached to sector $j$ can be expressed as

\small
\begin{align}\label{sinr_load}
&\gamma_{k,ij}(\chi,h,v)=\nonumber \\& \frac{P_{tkj} \ G_{ij}(\theta_{ij},\phi_{ij}) \  PL(d_{ij},{\alpha},f_c) \ \chi_{ij}\ |h_{kij}|^2 }{\sum\limits_{g\neq j, g\in \mathcal{B}_{\mathrm{on}}}\! \! \! \! {v_{kg} P_{tkg} G_{ig}(\theta_{ig},\phi_{ig})  PL(d_{ig},{\alpha},f_c) \chi_{ig} |h_{kig}|^2 } + \Delta f_{sc} N_0},
\end{align}\normalsize
where $\mathcal{B}_{\mathrm{on}}\subseteq \mathcal{B}$ is set of active sectors, $P_{tkj}$ and $P_{tkg}$ denote the transmit power of desired sector $j$ and interfering sector $g$, respectively, on sub-channel $k$ and $N_0$ is noise power spectral density in W/Hz. $v_{kg}$ denotes the activity status of $g$-th interfering sector which is related to resource occupancy in $g$-th sector. 

For further analysis, we rewrite the SINR in \eqref{sinr_load} as
\begin{equation}\label{sinr_load_1}
\gamma_{k,ij}(\chi,h,v)=\frac{\hat{\chi}_{ij}\ |h_{kij}|^2 }{\sum\limits_{g\neq j, g\in \mathcal{B}_{\mathrm{on}}}\! \! \! \! {v_{kg} \ \hat{\chi}_{ig} \ |h_{kig}|^2 } + P_{N}},
\end{equation}
where 
\begin{equation}
\hat{\chi}_{ij} = \exp(\ln(P_{tkj} \ G_{ij}(\theta_{ig},\phi_{ig}) \  PL(d_{ij},{\alpha},f_c))+\eta\xi_{ij})\nonumber 
\end{equation} and $P_N=\Delta f_{sc} N_0$.

\subsection{Received Power Statistics}
The desired signal power i.e. the numerator in \eqref{sinr_load_1} is the product of log-normal RV {{$\hat{\chi}_{ij}$}} with mean 
\begin{equation}
\mu_{\hat{\chi}_{ij}}=\ln(P_{tkj} \ G_{ij}(\theta_{ij},\phi_{ij}) \  PL(d_{ij},{\alpha},f_c))\nonumber \end{equation}
and variance \begin{equation}
\sigma_{\hat{\chi}_{ij}}^2=\eta^2\sigma_{\xi_{ij}}^2,\end{equation} and Gamma RV {{$|h_{kij}|^2$}}. The combined PDF of log-normal-Gamma RV {{$P_{r,ij}=\hat{\chi}_{ij} |h_{kij}|^2$}} is expressed as \cite[Sec: 4.2.1]{stuber2001} \vspace{-.5cm}

\begin{align}
\!\! p_{P_{r,ij}}(x)\! =\!& \int_0^\infty p_{{|h_{kij}|^2}/\hat{\chi}_{ij}}(x/y) \ p_{\hat{\chi}_{ij}}(y) \ dy \nonumber \\\! =\!&  \int_0^\infty\!\! \bigg(\frac{m}{y}\bigg)^m \frac{x^{m-1}e^{ -\frac{mx}{y}}}{\Gamma(m)} \frac{e^{-\frac{(\ln y-\mu_{\hat{\chi}_{ij}})^2}{2\sigma_{\hat{\chi}_{ij}}^2}}}{\sqrt{2\pi} y \sigma_{\hat{\chi}_{ij}}} dy.
\end{align}\normalsize

The above PDF can be approximated as a single log-normal random variable \cite[Sec: 4.2.1]{stuber2001} as
\begin{equation}\label{pdf_ln}
p_{P_{r,ij}}(P_r)=\frac{1}{P_r \sigma_{P_{r,ij}} \sqrt{2\pi}} \exp{\left[\frac{-(\ln P_r-\mu_{P_{r,ij}})^2}{2\sigma_{P_{r,ij}}^2}\right]},
\end{equation}
In \eqref{pdf_ln}, 
\begin{equation}
\mu_{P_{r,ij}} =(\psi(m)-\ln(m))+\mu_{\hat{\chi}_{ij}}\nonumber \end{equation}
 and 
\begin{equation}
\sigma_{P_{r,ij}}^2 = \zeta(2,m)+\sigma_{\hat{\chi}_{ij}}^2,\nonumber \end{equation}
 where $\psi(.)$ is the Euler psi function and {{$\zeta(.,.)$}} is Riemann's zeta function.

\subsection{Interference Power Statistics : An MGF Based Approximation Approach}
Since the interference pattern changes with the set of active sectors, user traffic, and resource occupancy in an individual sector, the network performance metrics: coverage, overlap, area spectral efficiency, and area energy efficiency also changes. Therefore, it is important to capture the dynamics of various parameters into the interference model in order to understand the behaviors of these performance metrics. In this work we derive the distribution of interference power taking into account the shadowing, small scale fading, and resource occupancy in the neighbouring sectors. Since the distribution of the sum of interference powers is unknown, it is approximated as a single log-normal RV using various methods (F-W method, moment generating function (MGF) method \cite{mehta2007}, etc.). However, all these approximations considered full resource occupancy in the neighbouring cells (i.e. interferers are always active). A modified F-W method is proposed in \cite{fischione2007} for approximating the sum of log-normal processes weighted by binary processes. However, the modified F-W method works for standard deviation less than 4 dB and high values of activity factor (i.e. $\beta > .4$). In this work, by extending the MGF method proposed in \cite{mehta2007}, we approximate the total interference power at location $i$  i.e.
\begin{equation}
{\sum\limits_{g\neq j, g\in \mathcal{B}_{\mathrm{on}}} {v_{kg} P_{tkg} G_{ig}(\theta_{ig},\phi_{ig})  PL(d_{ig},{\alpha},f_c) \chi_{ig} |h_{kig}|^2 }} \nonumber
\end{equation}
as a single log-normal RV $P_{I_i}=10^{X/10}$, where $X$ is Gaussian RV with mean $\mu_X$ and variance $\sigma_{X}^2$.

The PDF of $P_{I_i}$ is 
\begin{equation}\label{intf_dist}
{p_{P_{I_i}}(I)=\frac{1}{I \sigma_{P_{r,ij}} \sqrt{2\pi}} \exp{\left[\frac{-(\ln I-\mu_{P_{I_i}})^2}{2\sigma_{P_{I_i}}^2}\right]}}.
\end{equation}
Detailed procedure for obtaining the above PDF using MGF method is given in Appendix A. 

\subsection{Network Coverage}

\begin{mydef} \textit{(Coverage probability of a location associated with sector $j$)}
The user at location $i$ is said to be under the coverage of sector-$j$, if the received power from the sector-$j$ i.e. $P_{r,ij}$ is above the threshold value $P_{r,\mathrm{min}}$ and the SIR  $\gamma_{ij}$ is above the threshold value $\Gamma_{\mathrm{min}}$ \cite{3gpp.32.522}. 
\end{mydef}

According to Definition 1, the probability that the user at location $i$ is under the coverage of sector-$j$ is obtained using $p_{P_{r,ij}}(P_r)$ and $p_{P_{I_i}}(I)$ as 

\begin{align}\label{cov_prob_location}
Pr(\gamma_{ij}& \ge \Gamma_{\mathrm{min}},P_{r,ij}\ge P_{r,\mathrm{min}})\nonumber\\
&=Pr\bigg(\frac{P_{r,ij}}{\Gamma_{\mathrm{min}}} \ge P_{I_i} ,P_{r,ij}\ge P_{r,\mathrm{min}}\bigg)\nonumber \\&=\!\int_{P_{r,\mathrm{min}}}^{\infty}\int_0^{Pr/\Gamma_{\mathrm{min}}} p_{P_{I_i}}(I) \ dI\  p_{P_{r,ij}}(P_r) \ dP_r.
\end{align}

The above integral can be approximated as  
\begin{align}\label{coverage_final_exp}
\hat{Pr}(\gamma_{ij} &\ge \Gamma_{\mathrm{min}},P_{r,ij}\ge P_{r,\mathrm{min}})\nonumber\\=&\exp{\bigg(-\frac{\big(\ln(P_{r,\mathrm{min}})-\mu_{P_{r,ij}}\big)^2}{2\sigma_{P_{r,ij}}^2}\bigg)}\nonumber \\
&-\frac{\sigma_{P_{I_i}}\exp{\bigg(E-\bigg(\frac{\ln(\frac{P_{r,\mathrm{min}}}{\Gamma_{\mathrm{min}}})-\mu_c}{\sqrt{2}{\sigma_c}}\bigg)^2\bigg)}}{\sqrt{\sigma_{P_{I_i}}^2+\sigma_{P_{r,ij}}^2}},
\end{align}
where 
\begin{equation}
E=\frac{-(\mu_{P_{I_i}}-\mu_{P_{r,ij}})^2-(\mu_{P_{I_i}}+\ln \Gamma_{\mathrm{min}})^2+\mu_{P_{I_i}}^2\big(1+\frac{2\mu_{P_{r,ij}}\ln \Gamma_{\mathrm{min}}}{\sigma_{P_{I_i}}^2}\big)}{\sigma_{P_{I_i}}^2+\sigma_{P_{r,ij}}^2}, \nonumber 
\end{equation}
\begin{equation}
\mu_c=\bigg(\frac{(\mu_{P_{r,ij}}-\ln \Gamma_{\mathrm{min}})\sigma_{P_{I_i}}^2+\mu_{P_{I_i}}\sigma_{P_{r,ij}}^2}{\sigma_{P_{I_i}}^2+\sigma_{P_{r,ij}}^2}\bigg), \nonumber 
\end{equation}
and 
\begin{equation}
\sigma_c={\frac{\sigma_{P_{I_i}}\sigma_{P_{r,ij}}}{\sqrt{\sigma_{P_{I_i}}^2+\sigma_{P_{r,ij}}^2}}}. \nonumber 
\end{equation} 
The derivation of \eqref{coverage_final_exp} is given in Appendix B.

\begin{myrem}
\textit{Bounds on coverage probability at any location $i$}: Since the coverage is defined based on both the SIR and received signal strength, at any location $i$, for a fixed active sector set $\boldsymbol{\mathrm{B}}_{\mathrm{on}}$, coverage would be maximum (i.e. best SIR condition) when the resource utilization in neighbouring sectors is low and it would be minimum when the resource utilization in neighbouring sectors is high. Further, let us assume that there are $N_{\mathcal{B}_{\mathrm{on}}}(\rho)$ sectors required to satisfy the traffic demand density of $\rho$. In this particular case, there are 
\begin{equation}\binom{N_{\mathcal{B}}}{N_{\mathcal{B}_{\mathrm{on}}}(\rho) }= \frac{N_{\mathcal{B}}!}{N_{\mathcal{B}_{\mathrm{on}}}(\rho) ! (N_{\mathcal{B}} - N_{\mathcal{B}_{\mathrm{on}}}(\rho) )!}\nonumber \end{equation} possible sector configurations. Over all possible sector configurations, assuming equal RAN parameter configurations, the sector set with the minimum resource utilization will provide the maximum coverage.
\end{myrem}

Further, the fraction of total area covered by the sector $j$ is obtained by averaging over the locations i.e. 
\begin{align}\label{coverage_individual_BS}
P_{Cov,j}(&\Gamma_{\mathrm{min}},P_{r,\mathrm{min}}) \nonumber \\
&= \sum_{i \in \mathbb{D}}\!  \hat{Pr}(\gamma_{ij} \ge \Gamma_{\mathrm{min}},P_{r,ij}\ge P_{r,\mathrm{min}}) p(i) \ dA_i. 
\end{align}

\begin{figure}[htb]
\centering
\includegraphics[scale=.58]{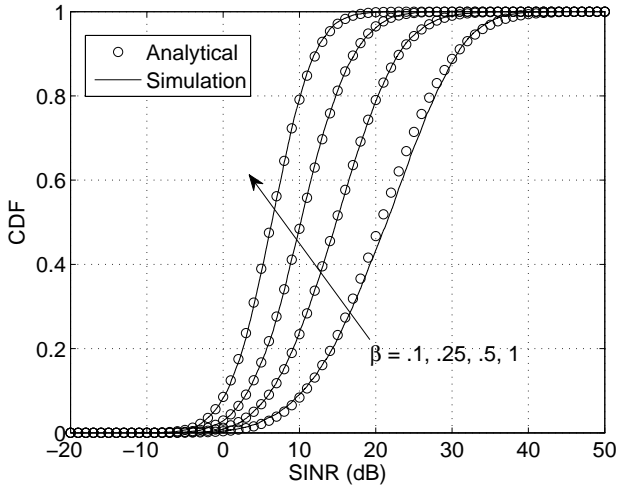}
\caption{CDF of downlink SINR}
\label{CDF_SINR_TVT}
\end{figure}

Fig. \ref{CDF_SINR_TVT} plots Eqn. \eqref{coverage_individual_BS} for path-loss exponent $\alpha = 4$ and $\sigma = 6$ dB. It can be seen that the analytical results are closely matching with simulation results for all ranges of load factor ($0<\beta_g\le1$). It is also observed that the approximation is good for wide range of standard deviation (up to 12 dB). 

\begin{mydef} (\textit{Coverage probability of a location associated with a set of active sectors }) The location $i$ is said to be under coverage, if the received signal power and the SINR from at least any one sector is greater than the minimum required received power $P_{r,\mathrm{min}}$ and the minimum required SIR $\Gamma_{\mathrm{min}}$, respectively.
\end{mydef}

According to Definition 2 the probability that the location $i$ is covered by at least one sector can be obtained as
\begin{align}
P_{Cov,i}(&\Gamma_{\mathrm{min}},P_{r,\mathrm{min}})\nonumber \\&  = 1 -  \prod_{j \in \mathcal{B}_{\mathrm{on}}}  \big[1-\hat{Pr}(\gamma_{ij} \ge \Gamma_{\mathrm{min}}P_{r,ij}\ge P_{r,\mathrm{min}})\big]. 
\end{align} \normalsize

Finally, the coverage probability for the entire region can be obtained by averaging over all locations i.e. 
\begin{align}\label{cost_func_cov}
{f_{\mathrm{COV}}}(\mathcal{B}_{\mathrm{on}},\boldsymbol{\beta},\boldsymbol{\mathcal{P}},\boldsymbol{\phi},\boldsymbol{\mathcal{H}}) = \sum_{i \in \mathbb{D}} P_{Cov,i}(\Gamma_{\mathrm{min}},P_{r,\mathrm{min}}) \ p(i) \ dA_i.
\end{align}
\begin{figure*}[ht]
\hrule
\footnotesize
\begin{equation}\label{overlap_ind_cell}
 \hspace{-.51cm} P_{OL,i}^{(b)} = \sum_{x_1=1}^{N-b+1} (1-Q_{x_1})\Bigg[ \sum_{x_2=x_1+1}^{N-b+2} (1-Q_{x_2})\Bigg[......\bigg[\sum_{x_{N-2}=x_{N-3}+1}^{N-1} (1-Q_{x_{N-2}})\bigg[\sum_{x_{N-1}=x_{N-2}+1}^{N} (1-Q_{x_{N-1}}) \sum_{x_{N+1}\neq x_1,x_2,x_3,...,x_N}^{N}Q_{x_{N+1}} \bigg]\bigg] ...... \Bigg]\Bigg].
\end{equation}\normalsize
For example, when $N_{\mathcal{B}}=4$ and $b=2$, 

\begin{align*}
P_{OL,i}^{(2)} = &(1-Q_{1})\Bigg[ \sum_{x_2=2}^{2} (1-Q_{x_2})\Bigg[\sum_{x_{3}=x_{2}+1}^{3} (1-Q_{x_{3}})\bigg[\sum_{x_4=x_3+1}^{4} (1-Q_{x_{4}})\sum_{x_{5}\neq x_1,x_2,x_3,x_4}^{4}Q_{x_{5}} \bigg]\Bigg]\Bigg]. \nonumber \\
= &(1-Q_1)(1-Q_2)Q_3 Q_4 + (1-Q_1)(1-Q_3)Q_2 Q_4 +(1-Q_1)(1-Q_4)Q_2 Q_3 \nonumber \\ 
& +(1-Q_2)(1-Q_3)Q_1 Q_4 +(1-Q_2)(1-Q_4)Q_1 Q_3 +(1-Q_3)(1-Q_4)Q_1 Q_2. \nonumber
\end{align*}\normalsize 
\hrule
\end{figure*}

\subsection{Overlap probability}
As mentioned in the introduction section it is important to cover a location by more than one sector in order to support hand-overs and reduce call drop probability. 
\begin{mydef}  \textit{(Probability that a location is under the coverage of $b$ active sectors)} The location $i$ is said to be under the coverage of $b$ sectors, if the received signal power from any of $b$ of active sector set is greater than the minimum required received power $P_{r,\mathrm{min}}$ and is less than that for remaining sectors. 
\end{mydef}

According to Definition 3 the overlap probability can be represented in terms of coverage and outage probabilities. The probability that the location $i$ is covered by $b$ sectors is obtained as given in \eqref{overlap_ind_cell}. In \eqref{overlap_ind_cell}, $Q_{x_1}$ represents the outage probability with respect to $x_1$-th sector i.e. \begin{equation}
Q_{x_1} = Q\bigg[\frac{\mu_{P_{r_{ix_1}}}-\ln(P_{r,\mathrm{min}})}{\sigma_{ix_1}}\bigg].\nonumber \end{equation}
Similarly, the probability that the location is covered by at least $b$ number of sectors cab be obtained as 
\begin{equation}P_{OL,i}^{(>b)} = \sum_{b' = b}^{N} P_{OL,i}^{(b')}.\nonumber
 \end{equation}
The fraction of total area covered by at least $b$ sectors is obtained by
\begin{equation}\label{cost_func_ol}
f_{\mathrm{OL}}(\mathcal{B}_{\mathrm{on}},\boldsymbol{\beta},\boldsymbol{\mathcal{P}},\boldsymbol{\phi},\boldsymbol{\mathcal{H}})= \sum_{i \in \mathbb{D}}P_{OL,i}^{(>b)} \ p(i) \ dA_i.
\end{equation}

\begin{myrem} The call dropping probability is minimum when the location $i$ is covered by more number of sectors. On the other hand, increased number of sectors would result in reduction in coverage because of increased ICI.
\end{myrem}

\subsection{Area Spectral Efficiency}
Let $N_L$ be the number of MCS levels and the corresponding SINR thresholds be $\Gamma_1$, $\Gamma_{2},..., \Gamma_{N_L-1}$, $\Gamma_{N_L}$. In the long-term, the fraction of users belonging to a particular range of SINR thresholds with respect to $j$-th sector can be obtained as 
\begin{align}\label{w_j_eqn}
w_j&(l)=  \sum_{i \in \mathbb{D}} \bigg[\hat{Pr}(\gamma_{ij} \ge \Gamma_{l},P_{r,ij}\ge P_{r,\mathrm{min}}) \nonumber \\ &\hspace{1cm} -\hat{Pr}(\gamma_{ij} \ge \Gamma_{l+1},P_{r,ij}\ge P_{r,\mathrm{min}})\bigg] \ p(i) \ dA_i.
\end{align}
\normalsize

Let $b_l$ be the number of bits transmitted per Hz by using $l$-th MCS i.e. \begin{equation}b_l = \log_2\bigg(1+\frac{\Gamma_l}{G}\bigg), \nonumber 
\end{equation} where $G$ is Shannon gap. Then the average spectral efficiency can be written as 
\begin{equation}\overline{T}_j = \sum_{l=1}^{N_{L}} w_j(l) . \ b_l\nonumber 
\end{equation} and the average throughput (bits/sec) achieved by the $j$-th cell is obtained by 
\begin{equation}
C_{j} =  B  . \overline{T}_j = B. \sum_{l=1}^{N_{L}} w_j(l) . \ b_l.
\end{equation}

Finally, the ASE (b/s/Hz/m$^2$) is obtained by sum of spectral efficiencies achieved by all the active sectors divided by total area i.e.
\begin{equation}\label{cost_func_ase}
{f_{\mathrm{ASE}}}(\mathcal{B}_{\mathrm{on}},\boldsymbol{\beta},\boldsymbol{\mathcal{P}},\boldsymbol{\phi},\boldsymbol{\mathcal{H}}) = \frac{1}{B\mathcal{A}_{\mathbb{D}}}  \sum_{j\in \mathcal{B}_{\mathrm{on}}} C_{j}. \ 
\end{equation}

\begin{myrem}
The higher the overlap between the sectors lesser the call drop probability whereas lesser the spectral efficiency achieved by a cell as the SIR degrades with increased number of interfering components. However, whenever the resource utilization in the interfering cells is minimum, then the spectral efficiency achieved by the cell would be high.
\end{myrem}

\subsection{Blocking Probability and Resource Utilization} \label{sec:blocking_probability}
Since the shadowing and ICI are the important factors that significantly affect the coverage and QoS performance, in this work we obtain the blocking probability and the resource utilization from the SINR statistics of an individual sector. The blocking probability is defined as follows.

\begin{mydef} Blocking probability in cell $j$, i.e. $P_{bj}$ is defined as the ratio of the number of blocked calls to the total number of call arrivals in cell $j$ in the long run. 
\end{mydef}

Since the BS assigns a certain number of sub-channels according to the feedback provided by the UEs on the channel conditions, we consider each MCS level as a class. For example, the UEs located near the BS may require only a few sub-channels compared to the UEs located far away from the BS. In adaptive transmission, the BS assigns a particular MCS level if the received SINR lies between a particular range of SIR thresholds. When the $l$-th MCS level is chosen for transmission, the bandwidth required to support a call with rate requirement $R_{\mathrm{req}}$ is 
\begin{equation}
 B_{\mathrm{req}}(l) = \frac{R_{\mathrm{req}}}{b_l}, \ \Gamma_l<\gamma<\Gamma_{l+1}.
\end{equation}
Then the number of sub-channels required by $l$-th class can be written as 
\begin{equation}
n_{sc}(l) = \Big\lceil \frac{B_{\mathrm{req}}(l)}{\Delta f_{sc}}\Big\rceil\ , \ l = 1,...,N_L. \nonumber 
\end{equation}

Let $n_{uj}(l)$ be the number of class-$l$ calls in $j$-th cell. The total number of sub-channels utilized by the calls is 
\begin{equation}
\boldsymbol{n_{uj}}.\boldsymbol{n_{sc}}= \sum_{l=1}^{L}n_{uj}(l)n_{sc}(l), \nonumber 
\end{equation} 
where \begin{equation}
\boldsymbol{n_{uj}}=(n_{uj}(1),...,n_{uj}(l),...,n_{uj}(L))\end{equation}
 and \begin{equation}
\boldsymbol{n_{sc}}=(n_{sc}(1),...,n_{sc}(l),...,n_{sc}(L)).\end{equation}
 A new call is admitted in cell $j$ when there is a required amount of resources available for the new coming calls. That is, the number of sub-channels utilized by the ongoing calls should be less than the total number of available sub-channels $N_{sc}$. Since the calls arrive from different locations inside the network at different times , the dynamic variation of the call arrival process can be treated as a Markov process \cite{karray2010,ross1995}. The Markov process is defined as the state space 
\begin{equation}
\boldsymbol{\mathcal{S}}:=\bigg\{\boldsymbol{n_{uj}} \in \mathcal{I}^L :  \boldsymbol{n_{uj}}.\boldsymbol{n_{sc}}  \leq N_{sc}\bigg \},\nonumber 
\end{equation}
 where $\mathcal{I}^L$ is the set of non-negative integers. The blocking probability experienced by the streaming calls is as follows:

\begin{mythe} \textit{For a given traffic demand density $\rho$, the average blocking probability in $j$-th cell is given by}
\begin{equation} \label{average_bp_cell}
P_{bj} = \frac{1}{\sum_{c=0}^{N_{sc}} g_j(c)}\sum_{l=1}^{N_L}w_j(l)\sum_{c=N_{sc}-n_{sc}(l)+1}^{N_{sc}} g_j(c),
\end{equation}
where $g(c)$ represents the probability that there are $c$ number of sub-channels are occupied i.e. 
\begin{equation}\frac{1}{c} \sum_{l=1}^{N_L}\rho_j(l) . n_{sc}(l). g_j(c-n_{sc}(l)), \ c=0,...,N_{sc}. \nonumber 
\end{equation}
\end{mythe}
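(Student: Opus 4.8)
The plan is to recognize the Markov chain on the state space $\boldsymbol{\mathcal{S}}$ as the multi-rate Erlang loss model and to obtain \eqref{average_bp_cell} as the Kaufman--Roberts recursion combined with the PASTA property; the framework in \cite{karray2010,ross1995} supplies the needed tools. First I would set up the continuous-time Markov chain $\{\boldsymbol{n_{uj}}(t)\}$ with transitions $\boldsymbol{n_{uj}}\to\boldsymbol{n_{uj}}+\mathbf{e}_l$ at rate $\lambda_j(l)$ whenever the admission test $\boldsymbol{n_{uj}}\cdot\boldsymbol{n_{sc}}+n_{sc}(l)\le N_{sc}$ holds, and $\boldsymbol{n_{uj}}\to\boldsymbol{n_{uj}}-\mathbf{e}_l$ at rate $n_{uj}(l)\mu_j(l)$, where $\rho_j(l)=\lambda_j(l)/\mu_j(l)=w_j(l)\,\rho_j$ is the offered load of MCS class $l$ in cell $j$ with $w_j(l)$ as in \eqref{w_j_eqn}. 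Verifying Kolmogorov's criterion (equivalently detailed balance on each directed edge) shows this truncated chain is reversible with the product-form stationary law $\pi_j(\boldsymbol{n_{uj}})=G_j^{-1}\prod_{l=1}^{N_L}\rho_j(l)^{n_{uj}(l)}/n_{uj}(l)!$ on $\boldsymbol{\mathcal{S}}$, with $G_j$ the normalizing constant; this relies on the standard fact that truncating a reversible chain to a sub-state-space preserves reversibility with the conditioned stationary law.

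Next I would introduce the unnormalized occupancy weights $g_j(c)=\sum_{\boldsymbol{n_{uj}}\in\boldsymbol{\mathcal{S}}:\ \boldsymbol{n_{uj}}\cdot\boldsymbol{n_{sc}}=c}\prod_{l=1}^{N_L}\rho_j(l)^{n_{uj}(l)}/n_{uj}(l)!$, so that the probability that exactly $c$ sub-channels are busy is $g_j(c)/\sum_{c'=0}^{N_{sc}}g_j(c')$. The core algebraic step is to multiply $g_j(c)$ by $c=\sum_{l}n_{sc}(l)\,n_{uj}(l)$ inside the sum, interchange the summation over $l$ with the summation over states, and apply the shift $n_{uj}(l)\mapsto n_{uj}(l)-1$ using $n_{uj}(l)\,\rho_j(l)^{n_{uj}(l)}/n_{uj}(l)!=\rho_j(l)\,\rho_j(l)^{n_{uj}(l)-1}/(n_{uj}(l)-1)!$; the shifted constraint becomes $\boldsymbol{n_{uj}}\cdot\boldsymbol{n_{sc}}=c-n_{sc}(l)$, which for $c\le N_{sc}$ still describes states inside $\boldsymbol{\mathcal{S}}$, so truncation is harmless. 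This yields $c\,g_j(c)=\sum_{l=1}^{N_L}\rho_j(l)\,n_{sc}(l)\,g_j(c-n_{sc}(l))$ with $g_j(0)=1$ and $g_j(c)=0$ for $c<0$, which is exactly the recursion in the statement and the reason KRA avoids enumerating $\boldsymbol{\mathcal{S}}$.

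Then, since each class's arrival stream is an independent Poisson process, PASTA implies that a class-$l$ arrival sees the time-stationary occupancy and is blocked precisely when that occupancy lies in $\{N_{sc}-n_{sc}(l)+1,\dots,N_{sc}\}$, giving the per-class blocking probability $P_b(l)=\big(\sum_{c=N_{sc}-n_{sc}(l)+1}^{N_{sc}}g_j(c)\big)\big/\big(\sum_{c=0}^{N_{sc}}g_j(c)\big)$. Finally, because class $l$ accounts for a fraction $w_j(l)$ of all call attempts in cell $j$, the quantity of Definition 4 is the traffic-weighted average $P_{bj}=\sum_{l=1}^{N_L}w_j(l)\,P_b(l)$, which is \eqref{average_bp_cell}.

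I expect the main obstacle to be the bookkeeping in the second paragraph: rigorously justifying the index shift that converts the product-form sum into a linear recursion, including the empty-product initialization $g_j(0)=1$, the convention $g_j(c)=0$ for $c<0$, and the check that truncation to $\boldsymbol{\mathcal{S}}$ does not disturb the recursion for $0\le c\le N_{sc}$. A secondary point needing care is establishing reversibility and product form on the truncated state space rather than on $\mathcal{I}^{N_L}$; the remaining ingredients (PASTA and the final averaging over $w_j(l)$) are standard.
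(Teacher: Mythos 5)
Your proposal is correct and follows essentially the same route as the paper's Appendix C: product-form equilibrium distribution of the multi-class loss model, occupancy probabilities $g_j(c)$, the Kaufman--Roberts recursion, per-class tail blocking probabilities, and the $w_j(l)$-weighted average. The only difference is one of completeness rather than method --- you derive the product form (via reversibility) and the KR recursion (via the index shift) explicitly, and invoke PASTA for the per-class blocking, where the paper simply cites \cite{ross1995} and \cite{kaufman1981} for these standard facts.
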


\begin{proof}
The proof is given in Appendix C.
\end{proof}

\begin{myrem}
The blocking probability in cell $j$ is primarily affected by the fraction of users belonging to lower class i.e. users with poor SIR condition especially at the edge regions. This is because the number of sub-channels required to make a call is more for those users with poor SIR which results in increased resource consumption i.e. the higher the number of low class users higher the blocking probability. One way to maintain the blocking probability requirements is to keep the overlap in a controlled manner.
\end{myrem}

 The maximum traffic supported by the $j$-th cell is obtained when the blocking probability is equal to the maximum allowable blocking probability $P_{b,\mathrm{max}}$ i.e. 
 \begin{equation}\label{max_traffic_cell}
 \rho_{j}^{\mathrm{max}} = \big\{\rho_{j}(\rho) \big|P_{bj}(\rho) = P_{b,\mathrm{max}} \big\}.
 \end{equation}
 
Next, we state the resource utilization from the occupancy probabilities.
\begin{mydef} \textit{(Average Resource Utilization:)} The resource utilization in cell $j$ is defined as the ratio of the number of occupied sub-channels due to offered traffic load $\rho_j$ to the total number of sub-channels $N_{sc}$.
\end{mydef}

\paragraph*{\textbf{Corollary 1}} 
\textit{The fraction of bandwidth utilized in cell $j$ can be obtained from the occupancy probabilities} \eqref{occupancy_probability_1} i.e. 

\begin{align} \label{occupancy_probability}
\beta_{j} &= \frac{1}{N_{sc}}\frac{\sum_{c=0}^{N_{sc}}c. g_j(c)}{\sum_{c=0}^{N_{sc}} g_j(c)} \nonumber \\ & =\frac{1}{N_{sc}}\frac{\sum_{c=0}^{N_{sc}} \sum_{l=1}^{N_L} \rho_{j} w_j(l). n_{sc}(l). g_j(c-n_{sc}(l))}{\sum_{c=0}^{N_{sc}}\frac{1}{c} \sum_{l=1}^{N_L} \rho_{j} w_j(l). n_{sc}(l). g_j(c-n_{sc}(l))}.
\end{align}

\begin{figure}[htb]
\centering 
\includegraphics[scale=.55]{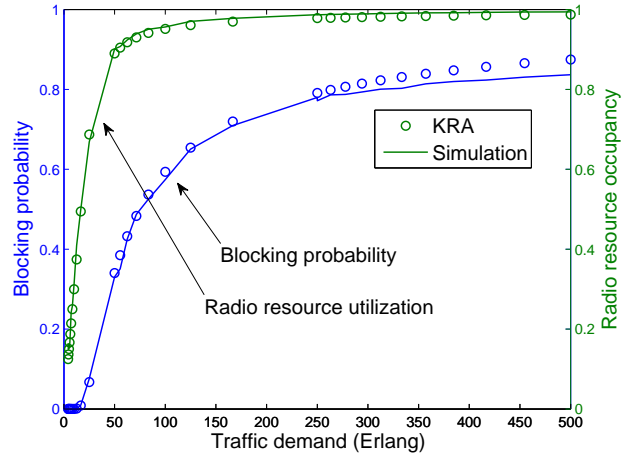}
\caption{{Blocking probability and corresponding resource occupancy}}
\label{BP_RU_Synopsis}
\end{figure}

Fig. \ref{BP_RU_Synopsis} shows the blocking probability and corresponding resource utilization in a single cell scenario (cell radius: 400 m, transmit power: $P_t$=43 dBm, bandwidth: $B$=10 MHz, and $R_{\mathrm{req}}=128 $) kbps) for varying traffic demand density of streaming users. It can be observed that the maximum resource occupancy is achieved at nearly 25 Erlang with very high blocking probability. Whereas, for 2\% blocking probability requirement, the corresponding resource utilization is only 60\%.

Next, we explain the properties of resource utilization vector for a given set of active sectors.
\subsection{Average Resource Utilization Vector}
From and \eqref{average_bp_cell} and \eqref{occupancy_probability}, the feasible load vector that satisfies the blocking probability requirements (i.e. $P_{bj}\leq P_{b,\mathrm{max}}$) can be written as

\begin{align} \label{load_vector}
 \mathcal{F}\big(\boldsymbol{\beta}(\mathcal{B}_{\mathrm{on}},&\boldsymbol{\beta},\boldsymbol{\mathcal{P}},\boldsymbol{\phi},\boldsymbol{\mathcal{H}})\big) = \nonumber \\
\Bigg\{ \boldsymbol{\beta} \Bigg| \beta_j = &\frac{1}{N_{sc}}\frac{\sum_{c=0}^{N_{sc}} \sum_{l=1}^{N_L} \rho_{j} w_j(l). n_{sc}(l). g_j(c-n_{sc}(l))}{\sum_{c=0}^{N_{sc}}\frac{1}{c} \sum_{l=1}^{N_L} \rho_{j} w_j(l). n_{sc}(l). g_j(c-n_{sc}(l))},\nonumber \\ &  0 \leq \beta_j \leq 1-\epsilon, \ P_{bj}\leq P_{b,\mathrm{max}},  \ \forall j\in{\mathcal{B_{\mathrm{on}}}} 
\Bigg\}, 
\end{align}
where $\epsilon>0$ is an arbitrarily small positive number. The load vector is solution of the system
\begin{equation}\label{load_vector_solution}
\boldsymbol{\beta^*}(\mathcal{B}_{\mathrm{on}},\boldsymbol{\beta},\boldsymbol{\mathcal{P}},\boldsymbol{\phi},\boldsymbol{\mathcal{H}}) = \mathcal{F}\big(\boldsymbol{\beta}(\mathcal{B}_{\mathrm{on}},\boldsymbol{\beta},\boldsymbol{\mathcal{P}},\boldsymbol{\phi},\boldsymbol{\mathcal{H}})\big).
\end{equation}

\begin{myrem}
 The fraction of time-frequency resources utilized in $j$-th cell depends on the following factors: distribution of SINR ($Pr(\gamma_{ij} \ge \Gamma_{l},P_{r,ij}\ge P_{r,\mathrm{min}}),\ l = 1,2,...,N_L$) and traffic generated by the users ($\rho_j(l), \ l = 1,2,...,N_L$, resource utilization in neighbouring cells ($\beta_{j^{'}},j^{'}\neq j$). Note that the values of $w_j(l)$ are obtained from the distribution of SINR which takes into account shadowing and small scale fading, and activity status of interferers. Since the SINR in \eqref{sinr_load} is a function of activity status of interfering sectors, the resource utilization in cell $j$ is coupled with resource utilization in neighbouring cells through \eqref{cost_func_cov}, \eqref{w_j_eqn} and \eqref{occupancy_probability}.
\end{myrem}

\begin{mydef} A function $\boldsymbol{f}:\mathbb{R}_+^M \rightarrow \mathbb{R}_{++}$ is said to be standard interference function if it satisfies the following conditions \cite[Definition 1]{cavalcante2014}:
\begin{itemize}
\item \textit{Monotonicity: $\alpha \boldsymbol{f}(\boldsymbol{x}) > \boldsymbol{f}(\alpha \boldsymbol{x}), \ \forall \boldsymbol{x} \in \mathbb{R}_{+}^{M}, \ \forall \alpha > 1$}.
 \item \textit{Scalability: $\boldsymbol{f}(\boldsymbol{x_1})\ge \boldsymbol{f}(\boldsymbol{x_2})$ if $\boldsymbol{x_1} \ge \boldsymbol{x_2}$}.\\
\end{itemize}
\end{mydef}

\begin{mydef} Concave functions $\boldsymbol{f}:\mathbb{R}_{+}^{M} \rightarrow \mathbb{R}_{++}$ are standard interference functions \cite[Preposition 1]{cavalcante2014}.\\
\end{mydef}

To show the uniqueness of the solution to \eqref{load_vector_solution} we use the following lemma.
\begin{mylem}
\textit{For any cell $j$, the average resource utilization $\beta_j$ is strictly concave for $\beta_{j^{'}}, j^{'}\neq j, j^{'} \in \mathcal{B}_{\mathrm{on}}$}.
\end{mylem}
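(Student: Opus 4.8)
The plan is to follow the dependence of $\beta_j$ on the vector of interfering loads $\boldsymbol{\beta}_{-j}:=(\beta_{j'})_{j'\in\mathcal{B}_{\mathrm{on}},\,j'\neq j}$ through the short chain $\boldsymbol{\beta}_{-j}\ \mapsto\ \big(\mu_{P_{I_i}},\sigma_{P_{I_i}}^2\big)_i\ \mapsto\ \big(w_j(l)\big)_{l=1}^{N_L}\ \mapsto\ \beta_j$, verifying at each stage the concavity/monotonicity that makes the standard composition rules applicable (a concave function which is nondecreasing in each argument, composed with concave maps, is concave; nonnegatively weighted sums and averages over locations preserve concavity). By Corollary~1, i.e.\ \eqref{occupancy_probability}, $\beta_j$ is determined by the occupancy probabilities $g_j(\cdot)$, and through the Kaufman--Roberts recursion these depend on $\boldsymbol{\beta}_{-j}$ \emph{only} via the per-class offered loads $\rho_j(l)=\rho_j\,w_j(l)$, i.e.\ only via the vector $(w_j(1),\dots,w_j(N_L))$; and each $w_j(l)$ is, by \eqref{w_j_eqn}, a nonnegatively weighted average over locations $i$ of the coverage terms \eqref{coverage_final_exp}. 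It therefore suffices to control the two genuinely nonlinear links, $\boldsymbol{\beta}_{-j}\mapsto$ coverage terms and per-class loads $\mapsto\beta_j$.

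For the first link I would exploit the Bernoulli-thinned structure of the interference in \eqref{sinr_load_1}: the aggregate $\sum_{g\neq j}v_{kg}\hat{\chi}_{ig}|h_{kig}|^2$ has, conditionally on all interferers but one, a law that is affine in the activity probability $\beta_{g'}$ of the remaining one --- that interferer contributes the mixture $(1-\beta_{g'})\delta_0+\beta_{g'}\,\mathrm{Law}(\hat{\chi}_{ig'}|h_{kig'}|^2)$ --- so in the exact model the SINR complementary CDF, and hence each $w_j(l)$, is affine in every single $\beta_{g'}$ (and multilinear jointly), which already yields coordinate-wise concavity here. For the log-normal surrogate actually used in \eqref{intf_dist}--\eqref{coverage_final_exp} I would instead note that the cumulant generating function of the thinned sum, $s\mapsto\sum_{g\neq j}\ln\!\big[(1-\beta_g)+\beta_g\psi_{ig}(s)\big]$ with $\psi_{ig}(s)=\mathbb{E}[e^{-s\hat{\chi}_{ig}|h_{kig}|^2}]$, is for each fixed $s$ a sum of terms each \emph{strictly} concave in the corresponding $\beta_g$ (second derivative $-(\psi_{ig}(s)-1)^2/[(1-\beta_g)+\beta_g\psi_{ig}(s)]^2<0$), hence jointly concave in $\boldsymbol{\beta}_{-j}$; since the MGF-matching of Appendix~A reads $\mu_{P_{I_i}},\sigma_{P_{I_i}}^2$ off this function at the two matching points, and \eqref{coverage_final_exp} depends smoothly and monotonically on $(\mu_{P_{I_i}},\sigma_{P_{I_i}}^2)$, one then checks on the shadowing/occupancy range over which the model was validated (Fig.~\ref{CDF_SINR_TVT}) that the coverage terms keep a concave, monotone dependence on $\boldsymbol{\beta}_{-j}$.

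For the second link I would show that the Kaufman--Roberts occupancy functional $(\rho_j(1),\dots,\rho_j(N_L))\mapsto\beta_j$ of \eqref{average_bp_cell}--\eqref{occupancy_probability} is concave and nondecreasing in each argument. Monotonicity is the easy direction and follows from a coupling of the underlying birth--death chains. Concavity is the substantive point: writing $g_j(c)$ from the recursion as a polynomial in $(\rho_j(l))_l$ with nonnegative coefficients and $\beta_j=\tfrac{1}{N_{sc}}\big(\sum_c c\,g_j(c)\big)\big/\big(\sum_c g_j(c)\big)$, I would prove that this normalized truncated mean --- equivalently the carried ``bandwidth'' of the associated multi-rate loss (stochastic-knapsack) system --- is concave in the arrival-rate vector, by induction on $N_{sc}$ on the recursion or by appealing to known concavity of carried traffic in loss networks. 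Composing this concave, coordinate-wise nondecreasing map with the coverage-to-$w_j(l)$ maps of the first link (affine in the exact model, coordinate-wise concave for the surrogate) and averaging over locations $i$ then gives that $\beta_j$ is concave in $\boldsymbol{\beta}_{-j}$, with strict concavity coming from the strictly concave cumulant-generating-function terms above together with the positivity of the interferer masses, which rules out degenerate directions.

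The hard part will be the second link: establishing concavity, with the correct monotonicity, of the Kaufman--Roberts mean-occupancy functional in the per-class offered loads, since the recursion makes $g_j(c)$ an intricate polynomial and the truncated, normalized mean a nonlinear ratio, so concavity is not apparent and will need a careful induction on $N_{sc}$ or a stochastic-dominance/coupling argument on the loss system. A secondary difficulty is the approximation step --- checking that the MGF-matching of Appendix~A and the closed form \eqref{coverage_final_exp} do not destroy the coordinate-wise concave, monotone dependence on $\boldsymbol{\beta}_{-j}$ that is transparent for the exact Bernoulli-thinned model (in particular that the difference of CCDF values defining each $w_j(l)$ stays concave) --- which may confine the statement to the shadowing and neighbour-occupancy ranges over which the approximation was validated; strictness, by contrast, is routine once a single strictly concave component survives the compositions.
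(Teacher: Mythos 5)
Your route is genuinely different from the paper's, and considerably more ambitious. The paper does not attempt a general argument at all: it explicitly restricts to a \emph{single} interferer (so that $w_j(l)=\beta_2[X(\Gamma_{l+1})-X(\Gamma_l)]$ is linear in $\beta_2$) and to the toy loss system $N_L=2$, $N_{sc}=2$, $\boldsymbol{n_{sc}}=(1,1)$, for which the Kaufman--Roberts recursion collapses to the explicit M\"obius map $\beta_j=\bigl(1+\beta_2[X(\Gamma_{\mathrm{max}})-X(\Gamma_1)]\bigr)/\bigl(2+\beta_2[X(\Gamma_{\mathrm{max}})-X(\Gamma_1)]\bigr)$, and concavity is read off from the sign of the second derivative. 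Your compositional plan ($\boldsymbol{\beta}_{-j}\mapsto w_j(l)\mapsto\beta_j$) is the right way to think about the general statement, and your observation that in the exact Bernoulli-thinned model each $w_j(l)$ is affine in each $\beta_{g'}$ is correct and is exactly what makes the paper's single-interferer computation go through.

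There is, however, a genuine gap at what you yourself flag as the hard part. The concavity and coordinate-wise monotonicity of the multirate Kaufman--Roberts carried-load functional $(\rho_j(1),\dots,\rho_j(N_L))\mapsto\beta_j$ is the entire substance of the lemma once the inner maps are affine, and you leave it resting on an appeal to ``known concavity of carried traffic in loss networks.'' That fact is classical only for the single-class Erlang-B system; for multirate loss systems (stochastic knapsacks) the analogous monotonicity and convexity properties are known to fail in general, so the induction on $N_{sc}$ you propose is not routine bookkeeping but would have to navigate genuine counterexample territory. A second, smaller gap: each $w_j(l)$ is a difference of CCDF values, so as an interferer's load grows some classes gain probability mass while others lose it; the rule ``nondecreasing concave outer composed with concave inner is concave'' therefore cannot be applied to the log-normal surrogate (where the inner maps are only claimed concave, not affine) without first establishing a consistent sign of the outer map's partial derivatives in every coordinate, and your fallback of checking the validated shadowing/occupancy range is a numerical observation rather than a proof. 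In fairness, the paper's own proof establishes only the special case described above, so the lemma is not proved in full generality there either; but your proposal, as written, does not close the general argument it sets out to give.
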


\begin{proof}
The proof is given in Appendix D.
\end{proof}

From lemma 1, we obtain the following theorems on the existence of the unique solution of the system \eqref{load_vector_solution}. \\
\begin{mythe} 
\textit{The resource utilization in vector $\beta_j$ is standard interference function.}
\end{mythe}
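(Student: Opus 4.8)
The plan is to obtain the statement as a short consequence of Lemma~1 together with the cited proposition that a concave map $\mathbb{R}_+^{M}\!\to\!\mathbb{R}_{++}$ is a standard interference function. First I would make the coordinate structure explicit. For a fixed active set $\mathcal{B}_{\mathrm{on}}$, fixed RAN parameters $(\boldsymbol{\mathcal{P}},\boldsymbol{\phi},\boldsymbol{\mathcal{H}})$ and fixed traffic density $\rho$, the map $\mathcal{F}$ in \eqref{load_vector} is the stacking of the scalar maps $\beta_j$ of \eqref{occupancy_probability}, one per $j\in\mathcal{B}_{\mathrm{on}}$; and since the interfering sectors enter the SINR \eqref{sinr_load} only through their activity indicators $v_{kg}$, i.e.\ through $\beta_g$ with $g\neq j$, the $j$-th component depends only on the reduced vector $\boldsymbol{\beta}_{-j}:=(\beta_g)_{g\in\mathcal{B}_{\mathrm{on}},\,g\neq j}\in\mathbb{R}_{+}^{M}$ with $M=|\mathcal{B}_{\mathrm{on}}|-1$. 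Because both conditions in the definition of a standard interference function are imposed coordinate by coordinate, it suffices to show that each scalar map $\boldsymbol{\beta}_{-j}\mapsto\beta_j(\boldsymbol{\beta}_{-j})$ is a standard interference function, and for this I would verify the two hypotheses of the cited proposition: concavity and strict positivity of the range.

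Concavity — indeed strict concavity — of $\beta_j$ in $\boldsymbol{\beta}_{-j}$ is exactly Lemma~1. For the range condition $\beta_j\in\mathbb{R}_{++}$: the MCS thresholds $\Gamma_1,\dots,\Gamma_{N_L}$ are finite, so the class weights $w_j(l)$ of \eqref{w_j_eqn} cannot all vanish (their sum is the per-sector coverage probability, which is positive whenever the service area $\mathbb{D}_j$ has positive measure); hence, for any strictly positive offered load $\rho_j=\int_{\mathbb{D}_j}\rho(i)\,dA_i>0$, the occupancy probabilities $g_j(c)$ of Theorem~1 put positive mass on some $c\ge 1$, so $\beta_j>0$ by \eqref{occupancy_probability}. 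Moreover $\beta_j$ is a ratio of finite sums, hence finite and bounded above (on the feasible set of \eqref{load_vector} even by $1-\epsilon$). Thus $\beta_j(\cdot)$ takes values in $\mathbb{R}_{++}$ for every admissible interference configuration.

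Given concavity and positivity, the cited proposition yields that each $\beta_j(\cdot)$ is a standard interference function: the scalability-type condition $\alpha\,\beta_j(\boldsymbol{x})>\beta_j(\alpha\boldsymbol{x})$ for $\alpha>1$ follows from writing $\boldsymbol{x}=\tfrac1\alpha(\alpha\boldsymbol{x})+(1-\tfrac1\alpha)\boldsymbol{0}$ and invoking concavity with $\beta_j(\boldsymbol{0})>0$, and the order-preservation (monotonicity) condition is likewise part of the cited proposition — on the cone a concave, strictly positive function cannot decrease along a ray — and is in any case evident from \eqref{occupancy_probability}, where raising any $\beta_g$ degrades the SINR, enlarges $n_{sc}(l)$ on the affected classes, and hence cannot decrease $\beta_j$. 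Stacking the coordinates, $\mathcal{F}=(\beta_j)_{j\in\mathcal{B}_{\mathrm{on}}}$ inherits both properties componentwise, so the resource-utilization vector map is a standard interference function; this is the ingredient that, via the standard fixed-point theory for such maps, yields the unique solution of \eqref{load_vector_solution}.

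The real content is entirely in Lemma~1; once it is available the theorem is almost immediate. The two points that still require care are (i) the coordinate reduction, so that the scalar concavity statement of Lemma~1 can be promoted to a statement about the vector map $\mathcal{F}$, and (ii) the domain/positivity bookkeeping needed to land inside $\mathbb{R}_{++}$ so that the concavity-implies-standard-IF proposition applies — in particular one should either invoke a version of that framework on the feasible box $[0,1-\epsilon]^{M}$ (the natural domain here, since each $\beta_g$ is a probability) or read off order preservation directly from \eqref{occupancy_probability}. The latter monotonicity is physically transparent, so no delicate estimate is needed beyond Lemma~1.
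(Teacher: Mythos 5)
Your proposal follows essentially the same route as the paper: the paper's proof is the single observation that Lemma~1 (strict concavity of $\beta_j$ in the other cells' loads) combined with the cited proposition that concave maps $\mathbb{R}_{+}^{M}\to\mathbb{R}_{++}$ are standard interference functions yields the claim. The extra bookkeeping you supply --- the coordinate reduction to $\boldsymbol{\beta}_{-j}$, the strict positivity of the range, and the caveat about applying the concavity-implies-standard-IF result on the bounded feasible box rather than the full cone --- is a welcome tightening of details the paper leaves implicit, but it is not a different argument.
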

\begin{proof} 
 From Definition 2 and Lemma 2, it can be concluded that the resource utilization vector is standard interference function. 
\end{proof}

\begin{mythe} 
(\textit{Existence of unique fixed point}) \textit{
\begin{itemize}
 \item The standard interference mapping $\beta$ has a fixed point and the fixed point is unique.
 \item For an arbitrary vector $\boldsymbol{\beta}^{0} \in \mathbb{R}_{+}^{|\mathcal{B}_{\mathrm{on}}|}$, the sequence $\{\boldsymbol{\beta}^{k}\}_{k\in \mathbb{N}}$ converges to the fixed point $\boldsymbol{\beta}^{*}\in \mathrm{Fix}(f)$.
\end{itemize}}
\end{mythe}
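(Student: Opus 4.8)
The plan is to split the claim into three pieces and assemble them: (i) the mapping $\boldsymbol{\beta}\mapsto\mathcal{F}(\boldsymbol{\beta})$ of \eqref{load_vector_solution} is a standard interference function, which is exactly the content of Theorem~2 (itself resting on Lemma~1); (ii) this mapping has \emph{at least one} fixed point, which I will obtain by a compactness/continuity argument; and (iii) for a standard interference function the set of fixed points is a singleton and the Picard iteration converges to it from any nonnegative starting point, which is the classical Yates-type result (cf.\ \cite{cavalcante2014}). Steps (i) and (iii) are essentially citations; the work is in (ii) and in checking that the hypotheses are met.

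For \emph{existence}, first I would observe that each coordinate of $\mathcal{F}$ in \eqref{load_vector} is the expected number of occupied sub-channels divided by $N_{sc}$, hence intrinsically in $[0,1]$; under the standing hypothesis that the traffic demand density $\rho$ is feasible (so that $P_{bj}\le P_{b,\max}$, i.e.\ constraint $C1$, holds), each coordinate is in fact bounded by $1-\epsilon$, so $\mathcal{F}$ maps the compact convex box $\mathcal{K}=\prod_{j\in\mathcal{B}_{\mathrm{on}}}[0,1-\epsilon]$ into itself (and maps all of $\mathbb{R}_{+}^{|\mathcal{B}_{\mathrm{on}}|}$ into $\mathcal{K}$ after one step). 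Continuity of $\mathcal{F}$ on $\mathcal{K}$ follows by chasing dependencies: the interferer activity factors $\beta_{j'}$ enter the SINR only through $\Pr[v_{kj'}=1]=\beta_{j'}$, hence continuously through the interference-power moments $\mu_{P_{I_i}},\sigma_{P_{I_i}}^{2}$ of Appendix~A; these feed continuously into the coverage/weight expressions \eqref{coverage_final_exp}--\eqref{w_j_eqn}, and the class weights $w_j(l)$ then determine the occupancy probabilities $g_j(c)$ through the finite Kaufman--Roberts recursion and the blocking probability \eqref{average_bp_cell}, all continuous finite operations. Brouwer's fixed-point theorem applied to $\mathcal{F}:\mathcal{K}\to\mathcal{K}$ yields a $\boldsymbol{\beta}^{*}\in\mathcal{K}$ with $\mathcal{F}(\boldsymbol{\beta}^{*})=\boldsymbol{\beta}^{*}$.

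For \emph{uniqueness}, I would run the standard scaling argument. Suppose $\boldsymbol{\beta}^{*}$ and $\tilde{\boldsymbol{\beta}}$ are two fixed points; both are strictly positive (every active cell serves nonzero traffic when $\rho>0$), so $\alpha:=\max_{j}\beta_{j}^{*}/\tilde{\beta}_{j}$ is finite, and after possibly swapping the two points we may assume $\alpha\ge 1$. If $\alpha>1$, then $\alpha\tilde{\boldsymbol{\beta}}\ge\boldsymbol{\beta}^{*}$ with equality in at least one coordinate, and combining the order-preservation property with the scaling property $\alpha\boldsymbol{f}(\boldsymbol{x})>\boldsymbol{f}(\alpha\boldsymbol{x})$ gives
\[
\alpha\tilde{\boldsymbol{\beta}}=\alpha\,\mathcal{F}(\tilde{\boldsymbol{\beta}})>\mathcal{F}(\alpha\tilde{\boldsymbol{\beta}})\ge\mathcal{F}(\boldsymbol{\beta}^{*})=\boldsymbol{\beta}^{*},
\]
a \emph{strict} componentwise inequality, contradicting the forced equality in one coordinate. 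Hence $\alpha=1$, i.e.\ $\tilde{\boldsymbol{\beta}}=\boldsymbol{\beta}^{*}$. (Equivalently, one may simply invoke that a standard interference mapping has at most one fixed point.)

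For \emph{convergence}, since $\mathcal{F}$ is standard and admits a fixed point, the iteration started at $\boldsymbol{0}$ is nondecreasing (by order-preservation and positivity) and bounded above by $\boldsymbol{\beta}^{*}$, hence converges, and its limit is a fixed point by continuity, so it equals $\boldsymbol{\beta}^{*}$; the iteration started at the all-$(1-\epsilon)$ vector is nonincreasing and bounded below by $\boldsymbol{\beta}^{*}$, hence likewise converges to $\boldsymbol{\beta}^{*}$; and for arbitrary $\boldsymbol{\beta}^{0}\in\mathbb{R}_{+}^{|\mathcal{B}_{\mathrm{on}}|}$, after one step $\boldsymbol{\beta}^{1}\in\mathcal{K}$, so order-preservation sandwiches $\{\boldsymbol{\beta}^{k}\}_{k\ge 1}$ between these two monotone sequences, forcing $\boldsymbol{\beta}^{k}\to\boldsymbol{\beta}^{*}\in\mathrm{Fix}(\mathcal{F})$. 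I expect the main obstacle to be the existence half: making the feasibility hypothesis do its job cleanly (so that $\mathcal{F}$ truly self-maps the truncated box, equivalently so that a feasible load vector exists in the sense required by the Yates condition), and verifying that the Appendix-A MGF approximation renders $\mathcal{F}$ continuous in $\boldsymbol{\beta}$ over the whole box rather than only on its interior; once those are in hand, uniqueness and convergence follow mechanically from Theorem~2 and \cite{cavalcante2014}.
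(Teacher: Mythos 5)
Your proposal is correct in substance, but it takes a genuinely different and far more self-contained route than the paper. The paper's entire proof is a one-line citation: since $\boldsymbol{\beta}\mapsto\mathcal{F}(\boldsymbol{\beta})$ is a standard interference function (Theorem~2, via Lemma~1), Fact~3 of \cite{cavalcante2014} is invoked to assert a unique fixed point in $[0,1)^{|\mathcal{B}_{\mathrm{on}}|}$, with convergence of the iteration delegated to \cite{siomina2012}; existence there comes from the boundedness of the (concave, hence standard) mapping, which is exactly the role your Brouwer argument on the truncated box $\mathcal{K}$ plays. What you buy by reconstructing the argument is transparency about where each hypothesis is used: the feasibility assumption enters only through the self-map property of $\mathcal{K}$, continuity of $\mathcal{F}$ is needed only for existence and for identifying the limit of the monotone iterates, and uniqueness plus convergence need nothing beyond the two Yates properties (note the paper's Definition swaps the names ``monotonicity'' and ``scalability'' relative to Yates, but you use the properties correctly). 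Two small points: your uniqueness argument as written concludes from $\alpha=1$ that $\tilde{\boldsymbol{\beta}}=\boldsymbol{\beta}^{*}$, whereas $\alpha=1$ only gives $\boldsymbol{\beta}^{*}\le\tilde{\boldsymbol{\beta}}$ and you must run the scaling argument symmetrically (or, as you note, just cite the at-most-one-fixed-point fact); and the continuity of the MGF-based $\mathcal{F}$ at the boundary $\beta_{j'}=0$, which you rightly flag, is a real technical wrinkle (the interference distribution degenerates there, cf.\ the $Q(\cdot/0)$ expressions in Appendix~D) that the paper sidesteps entirely by outsourcing to the cited fact.
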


\begin{proof} 
 Since $\beta$ is standard interference function, according to \cite[Fact 3]{cavalcante2014} the solution of \eqref{load_vector_solution} has a fixed point in $[0,1)^{|\mathcal{B}_{\mathrm{on}}|}$. The solution can be iteratively obtained using fixed point iteration method \cite{siomina2012}.
\end{proof}

\subsection{Area Power Consumption}
Power consumption at the BS comprises two parts: fixed power consumption and dynamic power consumption. Fixed power consumption is due to signal processing at the base band section and the RF section. Dynamic part of the power consumption is due to Power Amplifier (PA) section which varies with the fraction of time-frequency resources occupied (i.e. $\beta$). The power consumption at sector $j$ is modeled as \cite{auer2011}
\begin{equation}\label{power_consumption_eqn}
P_{Cj} = \frac{N_{\mathrm{TRX}}. (\beta_j . \frac{P_{tmax,j}}{\eta_{\mathrm{eff}}}+P_{RF}+P_{BB})}{(1-\sigma_{DC})(1-\sigma_{MS})(1-\sigma_{cool})},
\end{equation}
where $N_{\mathrm{TRX}}$ denotes the number of TRX chains of the sector $j$, $P_{tmax,j}$ denotes the maximum RF output power of sector $j$ at peak load, $\eta_{\mathrm{eff}}$ denotes the PA efficiency. $\sigma_{DC}$, $\sigma_{MS}$ and $\sigma_{cool}$ are the loss factors due to DC-DC power supply, main supply and cooling, respectively.

The APC (W/m$^2$) by all the active BSs ($\mathcal{B}_{\mathrm{on}}$) is obtained by 
\begin{equation}\label{cost_func_apc}
{f_{\mathrm{APC}}}(\mathcal{B}_{\mathrm{on}},\boldsymbol{\beta},\boldsymbol{\mathcal{P}},\boldsymbol{\phi},\boldsymbol{\mathcal{H}}) = \frac{1}{\mathcal{A}_{\mathbb{D}}} \sum_{j\in \mathcal{B}_{\mathrm{on}}} P_{Cj}.
\end{equation}

The objective functions: network coverage (Eqn. \eqref{cost_func_cov}), overlap (Eqn. \eqref{cost_func_ol}), ASE (Eqn. \eqref{cost_func_ase}), and APC (Eqn. \eqref{cost_func_apc}), and the blocking probability constraint (Eqn. \eqref{average_bp_cell}) are coupled with each other through ICI. The detailed analysis is presented in the next section. 

\subsection{Effect of traffic demand density on cost functions} \label{effect_objective}

\begin{table}[thb]
\centering
\caption{POWER CONSUMPTION PARAMETERS (PER SECTOR)}
\begin{tabular}{|p{4.5cm} | p{3cm}|}\hline 
\label{power_consumption_parameters}
\textbf{Parameter}  & \textbf{Value} \\
\hline $N_{\mathrm{TRX}}$ & 1 \\
\hline \textbf{PA power consumption:} &  \\
 $P_{\mathrm{max}}$ & 20.0 W (43 dBm) \\
Back-off & 8 dB \\
PA Efficiency $\eta_{\mathrm{eff}}$ & 31.1 \%\\ 
\textbf{Total PA} $P_{PA} = \frac{P_{\mathrm{max}}}{\eta_{\mathrm{eff}}}$ & \textbf{64 W}\\
\hline \textbf{RF power consumption:} & \\
$P_{TX}, P_{RX}$ & 6.8 W, 6.1 W\\
\textbf{Total RF} $P_{RF} = P_{TX}+P_{RX}$ & \textbf{13 W} \\
\hline \textbf{Baseband power consumption:} &  \\
$P_{BB}$ & \textbf{29.5 W} \\ 
\hline \textbf{Loss factors:} $\sigma_{DC}, \sigma_{MS}, \sigma_{cool}$  & 7.5\%, \ 9.0\%,\ 10.0\% \\
\hline \textbf{Total $P_{in}$} & \textbf{140.5 W} \\
\hline
\end{tabular}
\end{table}

\begin{figure*}[thb]
\centering
\hspace{-.45cm}
\subfigure[]{\includegraphics[scale=.36]{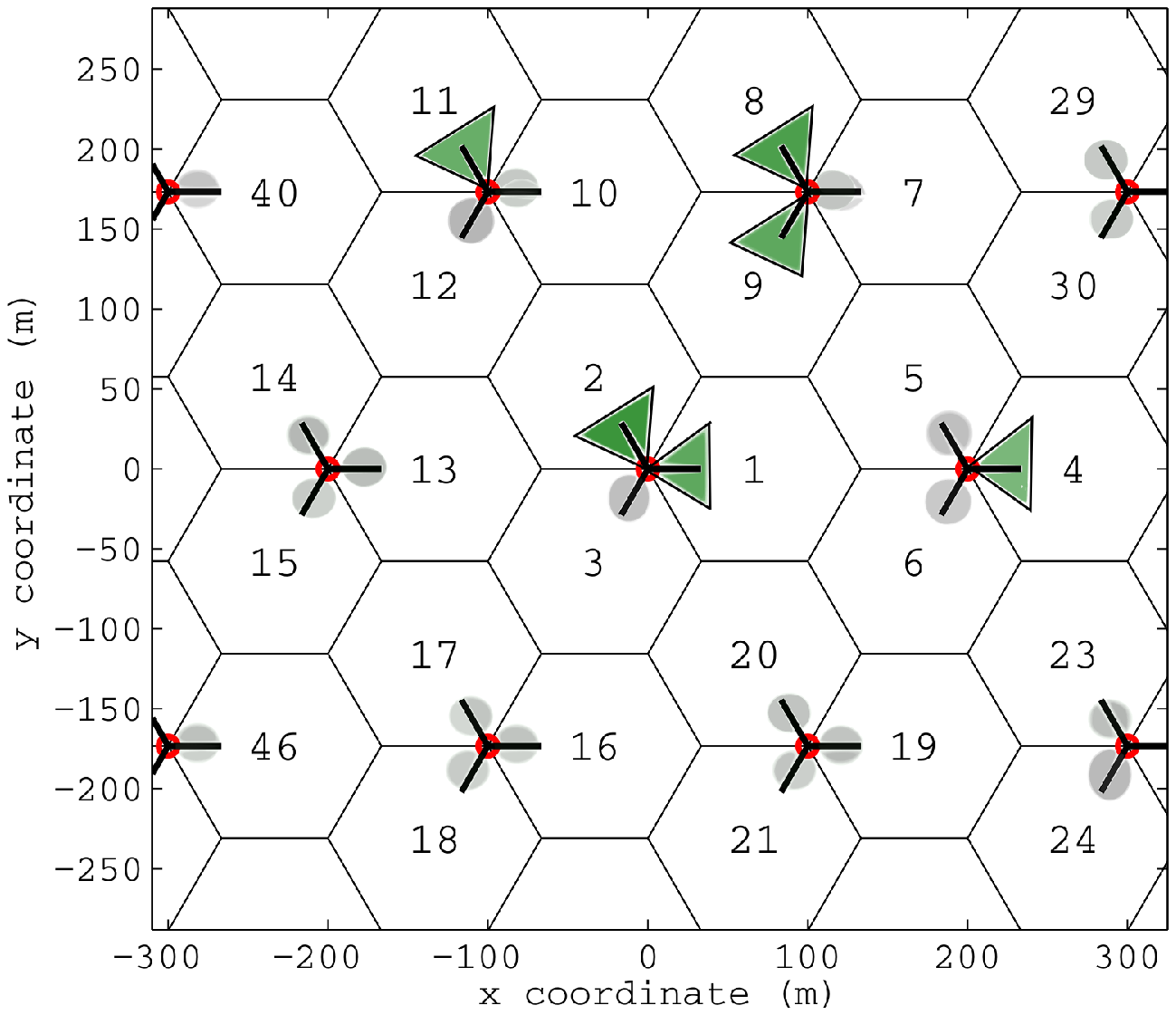}\label{Layout_UMi_UF_TWC}}
\subfigure[]{\includegraphics[scale=.38]{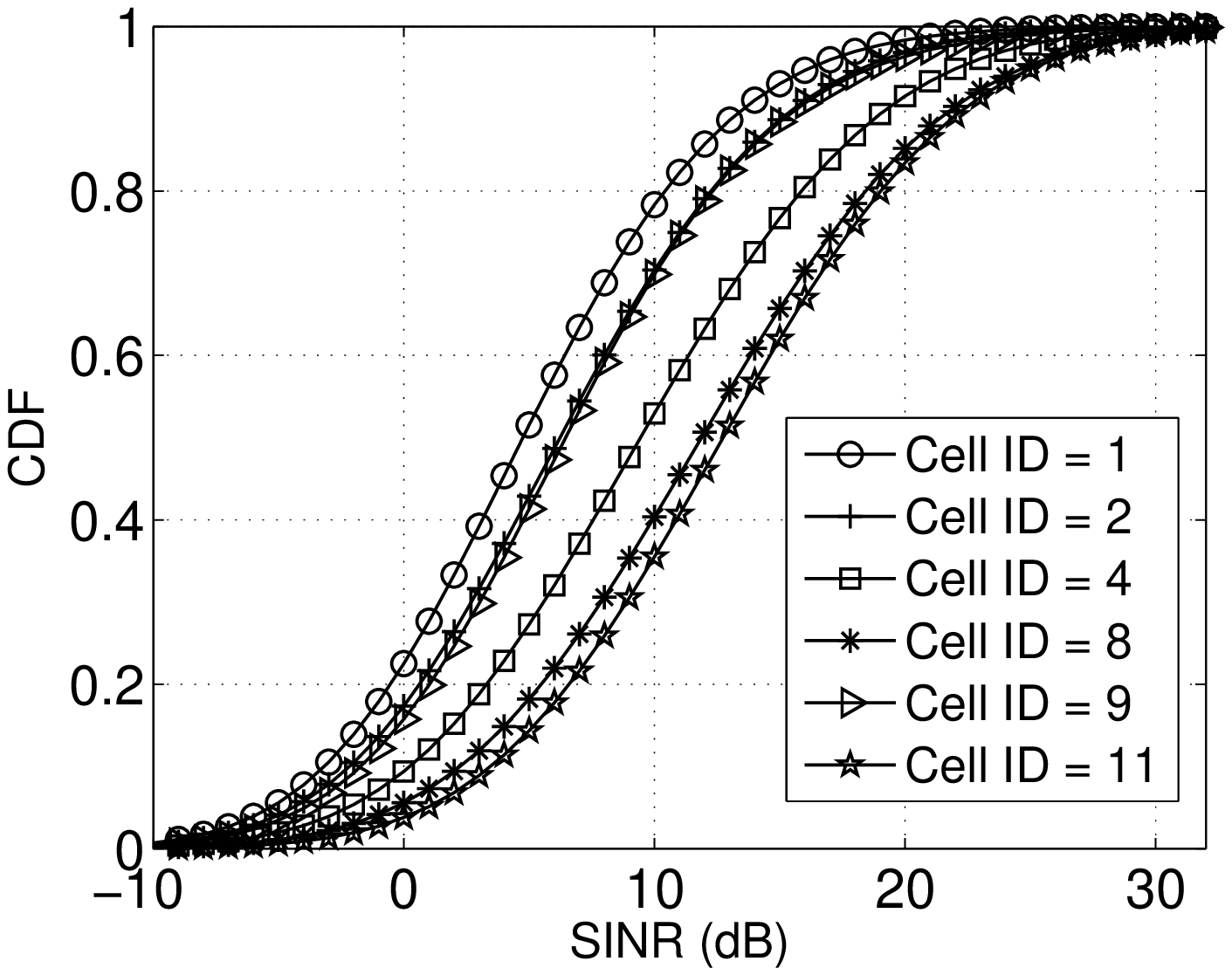} \label{SINR_Distribution_TWC}} \hspace{-.5cm}
\subfigure[]{\includegraphics[scale=.38]{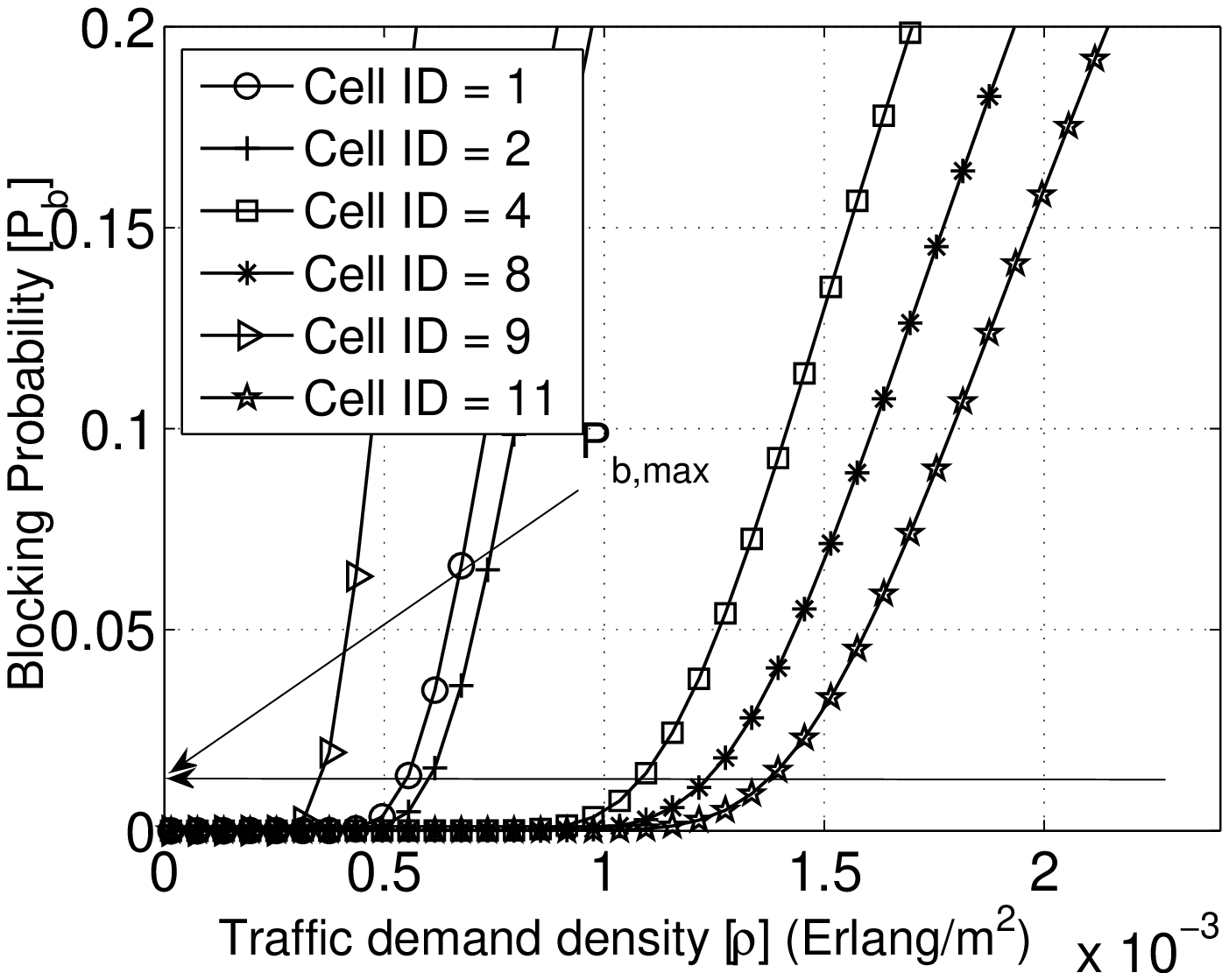} \label{BP_TWC}}\\ \hspace{-.95cm}
\subfigure[]{\includegraphics[scale=.38]{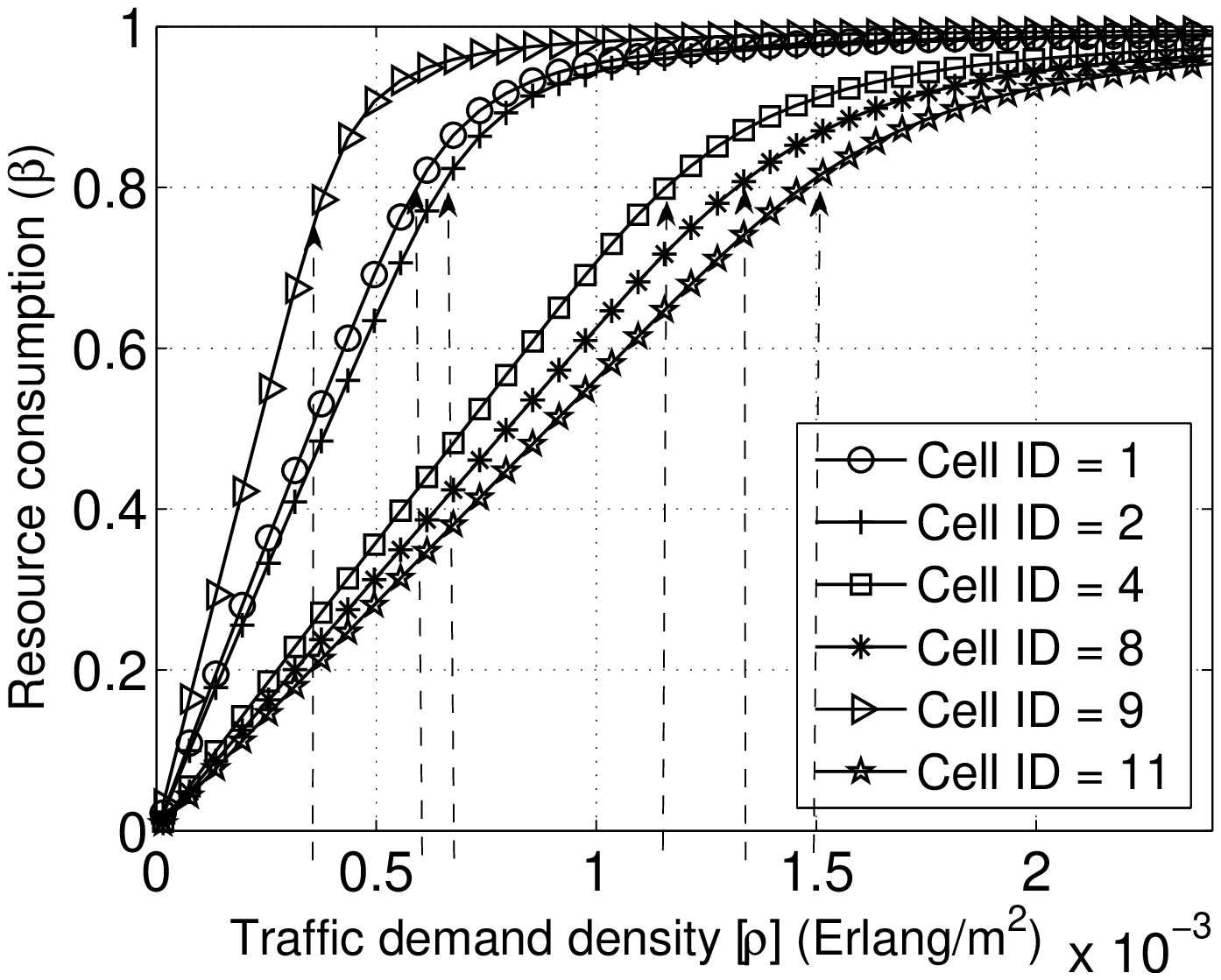} \label{RU_TWC}} \vspace{-.15cm} \hspace{-.5cm}
\subfigure[]{\includegraphics[scale=.38]{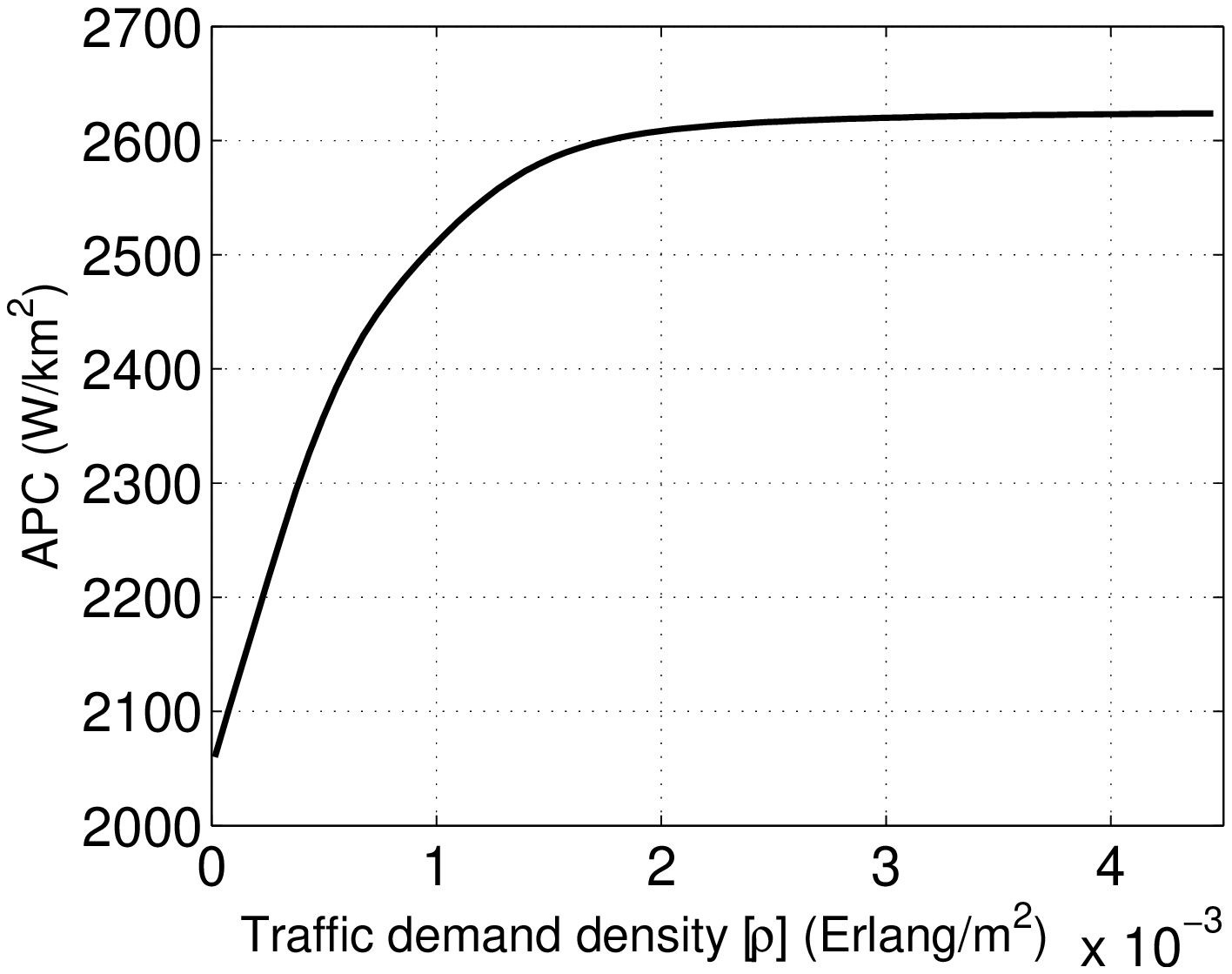}} \label{APC_TWC}\hspace{-.35cm}
\subfigure[]{\includegraphics[scale=.38]{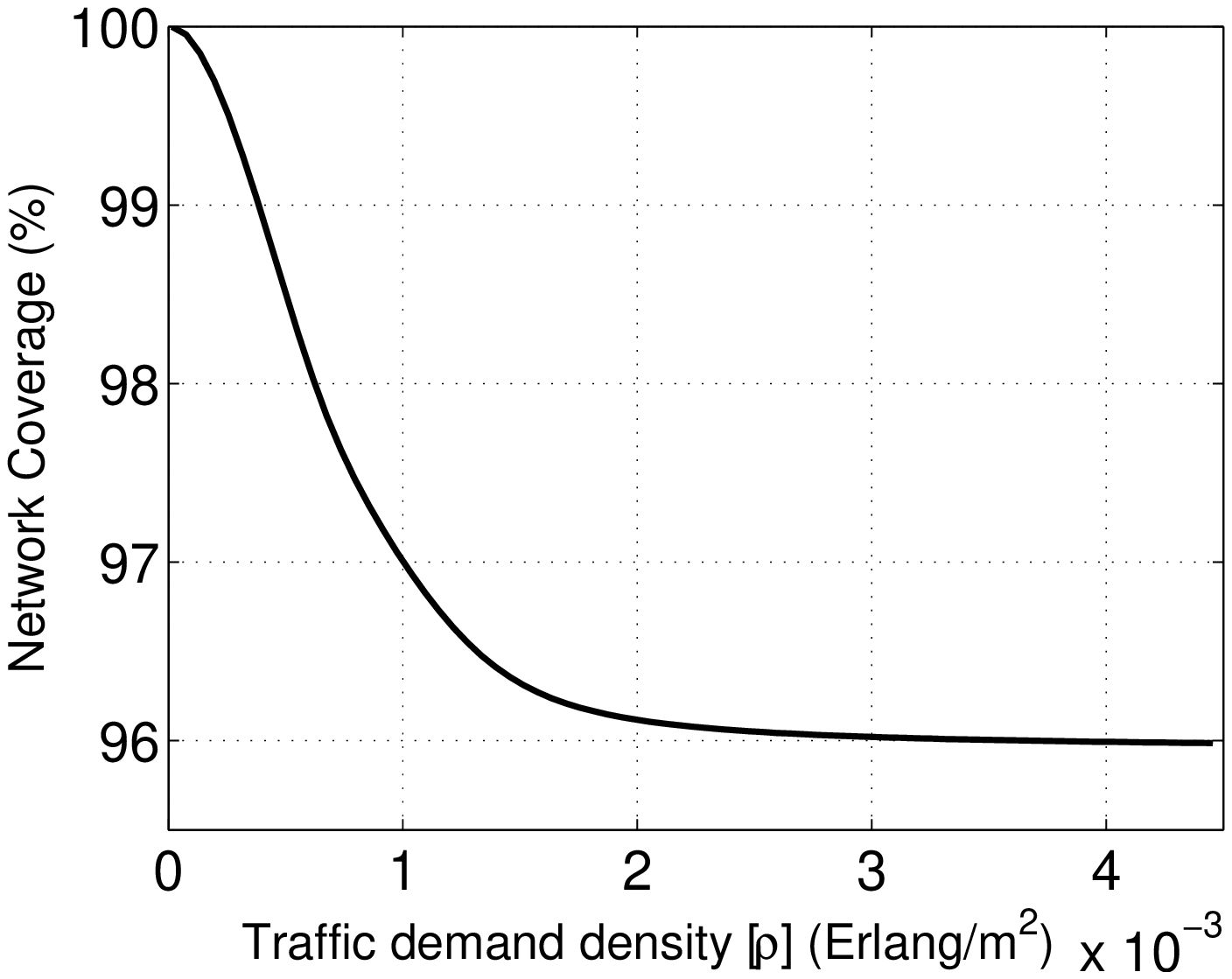} \label{NC_RxPR_TWC}}\hspace{-.55cm}
\caption{\scriptsize{Effect of traffic variation on network performance with a set of active sectors $\mathcal{B}_{\mathrm{on}}: \{1, 2, 4, 8, 9, 11\}$  (a) Urban Microcell network layout used for evaluation  (b) SINR distribution of individual cell (c) Blocking probability in an individual cell (d) Resource utilization in an individual cell (e) APC (f) Network coverage}}
\label{impact_traffic_demand}
\end{figure*}

In this section we show the effect of varying the traffic demand density $\rho$ on the blocking probability performance and the network level cost functions: ASE (${f_{\mathrm{ASE}}}$), APC (${f_{\mathrm{APC}}}$), \% coverage (${f_{\mathrm{COV}}}$), and \% overlap (${f_{\mathrm{OL}}}$). For analysis, a sample solution i.e. the sectors $\mathcal{B}_{\mathrm{on}}=\{1, 2, 4, 8, 9, 11\}$ as shown in Figure \ref{Layout_UMi_UF_TWC} are considered to be active ($\bigtriangledown$ and $\bigcirc$ represent active and inactive sectors, respectively).  The power consumption parameters shown in Table \ref{power_consumption_parameters} are used as given in \cite{auer2011}. The number of TRX chains is assumed to be one. The percentage of area covered by an individual sector is  $P_{Cov,1}= 20.21\%$, $P_{Cov,2}= 19.95\%$, $P_{Cov,4}= 11.66\%$, $P_{Cov,8}= 9.84\%$, $P_{Cov,9}= 30.05\%$ and $P_{Cov,11}=8.29\%$. Figure \ref{SINR_Distribution_TWC} shows the CDF of SINR of an individual cell. It can be seen that the SINR distribution varies from one cell to the other due to an uneven distribution of interference in the network. It can be seen that the cells $1, 2$ and $9$ experience poor SINR distribution due to strong co-channel interferers compared to cells $4, 8$ and $11$.

Figure \ref{BP_TWC} shows the average blocking probability $P_{bj}$ (from \eqref{average_bp_cell}) experienced by an individual cell for varying traffic demand density. It can be observed that the blocking probability in cells $1, 2$ and $9$ is high, even for low traffic density due to poor SINR distribution in those cells. Although the cells $2$ and $9$ experience similar SINR distribution, the blocking probability performance of cell $9$ is significantly worse than cell $2$. This is because as the sector $9$ covers a larger portion of the geographical area, the cumulative traffic demand experienced by the $9$-th cell is significantly higher than that of other cells which results in a high blocking probability. The resource utilization in an individual cell $\beta_j$ (obtained from \eqref{occupancy_probability}) for varying traffic demand density is shown in 
Figure \ref{RU_TWC}. It can be seen that the resource consumption at sectors $1, 2$ and $9$ is significantly high, even at low traffic demand densities due to poor SINR and larger coverage regions.

Figure 3(e) shows the APC for increasing traffic demand density. As the traffic demand increases, the resource consumption at individual sector is seen to increase resulting in increased dynamic part of power consumption and APC according to \eqref{power_consumption_eqn}. With further increase in traffic demand density, the APC is seen to saturate at a certain maximum value due to limitation of available resources at the sectors. Figure 3(f) shows the \% network coverage for varying traffic demand density. As the traffic demand increases the activity of sectors on the sub-channels ($v_{kg}$ in \eqref{sinr_load}) increases which translates into an increased ICI to neighbouring cells thereby decreasing the SINR performance. Therefore, the overall coverage decreases with increasing traffic demand density. It can be observed that the area covered by $3$ and $4$ sectors is significantly less than the area covered by $2$ sectors for the active sector set $\mathcal{B}_{\mathrm{on}}=\{1,2,4,8,9,11\}$.

It can be observed from Figure 2(b) that with $P_{b,\mathrm{max}}= 2\%$ blocking probability requirement, the maximum traffic demand density supported by the active sector set $\mathcal{B}_{\mathrm{on}}=\{1,2,4,8,9,11\}$ is $\rho_{\mathcal{B}_{\mathrm{on}}}^{\mathrm{max}}=.36\times 10^{-3}$ Erlang/m$^2$. Here, 
\begin{equation}
\rho_{\mathcal{B}_{\mathrm{on}}}^{\mathrm{max}} = \underset{j\in \{1,2,4,8,9,11\}}{\operatorname{argmin}}{\big\{P_{bj}^{-1}(P_{b,\mathrm{max}})\big\}}.\nonumber
\end{equation}
The corresponding traffic demand supported \big(i.e. $\rho_{j}^{\mathrm{max}} = \big\{\rho_{j}(\rho) \big|P_{bj}(\rho) = P_{b,\mathrm{max}} \big\}$\big) by the cells $1,2,4,8,9$, and {{$11$}} is $6,11.1,9.9,5.1,12,$ and $3.9$ Erlang, respectively. From Figure 2(c) it can be observed that the corresponding resource utilization in an individual cell is $.85,.8,.75,.82,.75$, and $.68$, respectively. These values are used in the calculation of power consumption at individual sector \eqref{power_consumption_eqn}. The APC at traffic demand density $\rho=.36\times10^{-3}$ Erlang/m$^2$ is $2300$ W/km$^2$.

Overall, from the results it can be concluded that with increasing traffic demand density the APC increases while the coverage and overlap performance decreases. Further, a particular set of active sectors supports up to a certain traffic demand density due to blocking probability constraint. Our objective is to select a active sector set which minimizes APC and overlap and maximizes coverage and ASE while satisfying the blocking probability requirements. For a given traffic demand density $\rho$, if a particular active sector set satisfies the blocking probability requirements it will be included in the potential solution set $Q_{\mathrm{Pot}}$. For example, the active sector set $\mathcal{B}_{\mathrm{on}}=\{1,2,4,8,9,11\}$ will be considered as a potential solution for all traffic demand densities less than $\rho_{\mathcal{B}_{\mathrm{on}}}=.36\times 10^{-3}$ Erlang/m$^2$.

\section{Traffic dynamics and Complexity}
For a given traffic demand density, finding the optimal set of sector and RAN parameters among a large number of combinations is a complex combinatorial problem. Due to its large scalability, dynamic optimization of mobile networks should be carried out according to the varying traffic load conditions, in a self organized manner, without any manual intervention. Traffic variations usually take place on an hourly basis. So, the required reconfiguration of the network needs to be done in small (from several minutes to few hours) as wells as large time scales (from few days to few months). The optimization method should be able to adapt these traffic fluctuations and provide the appropriate solutions with minimal computational complexity whenever it is necessary. The solutions can be obtained by optimizing all the objectives either jointly or individually according to the system requirements, complexity afford-ability, and need of the operator. Typically dynamic optimization of cellular networks involve the following situations:

\textit{a) Joint optimization of active BS set and RAN parameters considering all four objectives:} Joint optimization is required in case of initial cell planning as well as for re-planning due to introduction of addition of sites to serve an increased population in a given geographical region. Further, during instances such as: change of parameters (minimum received power threshold $P_{r,\mathrm{min}}$, minimum SINR threshold $\Gamma_{\mathrm{min}}$), addition of a new site, change in coverage, overlap or blocking probability requirements etc., it may be required to entirely reconfigure the network. In these situations complexity may not be a major issue as it is required to find the solutions in the large time scales. In case of joint optimization of active sector set and RAN parameters the search space is calculated as follows.

Let \begin{equation}
\mathcal{T}_{\mathcal{B}}=\{\mathcal{B}_{\mathrm{on}}^1,\mathcal{B}_{\mathrm{on}}^2,...,\mathcal{B}_{\mathrm{on}}^{(N_{\mathrm{SP}}^{\mathcal{T}_{\mathcal{B}}})}\}\end{equation}
be the search space which contains all possible active sector configurations with size \begin{equation}
N_{\mathrm{SP}}^{\mathcal{T}_{\mathcal{B}}} = |\mathcal{T}_\mathcal{B}| = 2^{N_{\mathcal{B}}}-1.\end{equation}
 For example, with six sectors the search space is 
\begin{equation}
\mathcal{T}_{\mathcal{B}}=\{000001,000010,000011...,100000\}\nonumber
\end{equation}
with length $2^{6}-1=63$. Let \begin{equation}
\phi_{\mathrm{tilt}j} = \{ \phi_{\mathrm{tilt}j_{\mathrm{min}}},...,\phi_{\mathrm{tilt}j_{\mathrm{max}}} \},\end{equation}
\begin{equation}
P_{tj} = \{P_{tj_{\mathrm{min}}},...,P_{tj_{\mathrm{max}}}\},\end{equation}
and \begin{equation}
H_{tj} =\{H_{tj_{\mathrm{min}}},...,H_{tj_{\mathrm{max}}} \}\end{equation}
be the set of values of tilt angles, transmit power, and heights, respectively, available at sector $j$. Let \begin{equation}
K_{P_t}^j = |P_{tj}|,\end{equation}
\begin{equation}
K_{\phi_{\mathrm{tilt}}}^j=|\phi_{\mathrm{tilt}j}|,\end{equation}
 and \begin{equation}
K_{H_t}^j=|H_{tj}|\end{equation}
 be the cardinality of $\phi_{\mathrm{tilt}j}$, $P_{tj}$, and $H_{tj}$, respectively. One particular combination of a set of RAN parameters is denoted as 
\begin{equation}
\mathcal{R} = \{X,Y,Z|X \in \phi_{\mathrm{tilt}j},Y \in  P_{tj}, Z \in H_{tj} \}, \ \mathcal{R} \in \mathcal{T}_\mathcal{R}(\mathcal{B}_{\mathrm{on}}). \nonumber
 \end{equation}
The search space for the RAN parameter of a particular set of active sectors $\mathcal{B}_{\mathrm{on}}$ is 
\begin{equation}\mathcal{T}_\mathcal{R}(\mathcal{B}_{\mathrm{on}}) = \{\mathcal{R}^1,\mathcal{R}^2,...,\mathcal{R}^{(N_{\mathrm{SP}}^{\mathcal{T}_\mathcal{R}}(\mathcal{B}_{\mathrm{on}}))} \}. \nonumber 
\end{equation}
Here the possible number of combinations of RAN parameters in the search space $\mathcal{T}_\mathcal{R}(\mathcal{B}_{\mathrm{on}})$ is equal to the product of number of elements in individual RAN parameter set i.e. 
\begin{equation}
N_{\mathrm{SP}}^{\mathcal{T}_\mathcal{R}}(\mathcal{B}_{\mathrm{on}}) = \prod_{j\in \mathcal{B}_{\mathrm{on}} } K_{P_t}^j K_{\phi_{\mathrm{tilt}}}^j K_{H_t}^j.\nonumber 
\end{equation}
Note that the antenna tilt angle can be adjusted either by mechanical or electrical tilt. In electrical tilt, the antenna pattern is adjusted without changing the physical angle of antenna. Whereas, mechanical tilt changes the physical angle of the antenna. 

Finally, the total search space length becomes
\begin{equation}
N_{\mathrm{SP}}^{\mathcal{T}_{\mathcal{B}},\mathcal{T}_\mathcal{R}} = \sum_{c=1}^{2^{N_{\mathcal{B}}}-1} \prod_{j\in \mathcal{B}_{\mathrm{on}}^{c} } K_{P_t}^j K_{\phi_{\mathrm{tilt}}}^j K_{H_t}^j.
\end{equation}

\textit{b) Individual optimization of active BS set and RAN parameters considering only a subset of objectives:} If the complexity is a major concern in smaller time scales then individual optimization is the best approach for obtaining the solutions. In this case search space is very much much less than the previous case. For example, it may be possible that the same number of BSs with different set of RAN parameters may be able to support the new traffic demand which is slightly higher than the previous demand. It is also possible that the BS set which provides maximum energy saving may not be able to provide sufficient coverage. In that case the coverage performance can be improved through RAN parameter optimization.

\subsection{Complexity of the Problem}
In this section we provide the details of complexity of the problem in \eqref{objective_function}.
\begin{mythe}
 \textit{The multi-objective optimization problem in \eqref{objective_function} is NP-hard.}
\end{mythe}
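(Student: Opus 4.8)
The plan is to prove NP-hardness via a polynomial-time reduction from the \emph{minimum set cover} problem (one of Karp's NP-complete problems) to a family of instances of \eqref{objective_function}. The structural hook is that the APC objective $g_1$ together with the coverage constraint $C2$ already encodes a covering requirement. Consider instances in which the RAN-parameter ranges are degenerate, i.e.\ $\boldsymbol{\mathcal{P}}_{\mathrm{min}}=\boldsymbol{\mathcal{P}}_{\mathrm{max}}$, $\boldsymbol{\phi}_{\mathrm{min}}=\boldsymbol{\phi}_{\mathrm{max}}$ and $\boldsymbol{\mathcal{H}}_{\mathrm{min}}=\boldsymbol{\mathcal{H}}_{\mathrm{max}}$, so that $C3$--$C5$ fix a single admissible parameter vector and every active sector draws the same power $P_{Cj}=P_C$; then $f_{\mathrm{APC}}=\tfrac{P_C}{\mathcal{A}_{\mathbb{D}}}\,|\mathcal{B}_{\mathrm{on}}|$ and minimizing $f_{\mathrm{APC}}$ is exactly minimizing $|\mathcal{B}_{\mathrm{on}}|$. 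The natural decision version of \eqref{objective_function} (given a target vector, does a feasible $\mathbf{x}$ exist with $\mathbf{g}(\mathbf{x})$ componentwise above it?) then specializes, on making the ASE, coverage and overlap targets trivial, to ``is there a feasible configuration with $|\mathcal{B}_{\mathrm{on}}|\le K$ satisfying $C1$ and $f_{\mathrm{COV}}(\mathcal{B}_{\mathrm{on}})\ge{f_{\mathrm{COV}}}_{\mathrm{min}}$?'', so it suffices to show \emph{this} is NP-hard.

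For the reduction, given a set-cover instance with universe $\mathcal{U}=\{u_1,\dots,u_n\}$ and family $\mathcal{S}=\{S_1,\dots,S_M\}$, I would build a region $\mathbb{D}$ with one grid point $i_t$ for each element $u_t$, a deployed sector set $\mathcal{B}$ with one sector $j_r$ for each set $S_r$, and pick the traffic demand density $\rho$ small enough that the blocking constraint $C1$ holds for every sector subset (so, via Theorem~1, $C1$ is vacuous). The remaining step is to choose the propagation parameters and the geometry --- path loss, antenna pattern, the thresholds $P_{r,\mathrm{min}}$ and $\Gamma_{\mathrm{min}}$, the shadowing standard deviation, and the relative placement of grid points and sectors --- so that, through \eqref{coverage_final_exp}, the per-pair coverage probability $\hat{Pr}(\gamma_{i_t j_r}\ge\Gamma_{\mathrm{min}},P_{r,i_t j_r}\ge P_{r,\mathrm{min}})$ lies within $\delta$ of $1$ whenever $u_t\in S_r$ and within $\delta$ of $0$ otherwise, for a resolution $\delta$ that is polynomially small in $n$ and $M$. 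Then $f_{\mathrm{COV}}(\mathcal{B}_{\mathrm{on}})$ differs by at most $O((n+M)\delta)$ from the fraction of elements covered by the chosen sets, and taking ${f_{\mathrm{COV}}}_{\mathrm{min}}=1-\epsilon$ with $\epsilon=\Theta(1/n)$ forces the active sectors to cover all of $\mathcal{U}$. Hence a feasible configuration with $|\mathcal{B}_{\mathrm{on}}|\le K$ exists if and only if $\mathcal{U}$ admits a cover of size $\le K$; the construction is polynomial in $n+M$, which proves \eqref{objective_function} NP-hard.

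The main obstacle I expect is exactly that calibration step: unlike the clean $0/1$ incidence of classical set cover, the coverage probability \eqref{coverage_final_exp} is a smooth function of continuous parameters, so one must verify it can be driven arbitrarily close to both extremes while all derived quantities ($\mu_{P_{r,ij}}$, $\sigma_{P_{r,ij}}$, $\mu_{P_{I_i}}$, $\sigma_{P_{I_i}}$) remain in their admissible ranges, and that the accumulated $\delta$-errors stay within the $\epsilon$-slack, which holds as soon as $\delta$ is polynomially small relative to $\epsilon=\Theta(1/n)$. To sidestep this analytic bookkeeping one could instead reduce directly from the known NP-hard \emph{transmitter placement} problem, or from \emph{minimum dominating set} on the interference graph induced by $\mathcal{B}$, where the covering structure is already discrete; the rest of the argument (degeneracy of the RAN ranges, vacuity of $C1$ at low load, identification of the minimum APC with a coordinate of the Pareto set) is unchanged. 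I would present the set-cover reduction as the primary argument, since it mirrors the $g_1$/$C2$ structure of the formulated problem most transparently, and relegate the dominating-set variant to a remark.
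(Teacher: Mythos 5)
Your overall skeleton --- restrict to the single-objective special case ``minimum number of active sectors subject to the coverage requirement,'' make $C1$ and $C3$--$C5$ vacuous, and invoke a known NP-hard covering problem --- is the same as the paper's. The paper, however, keeps the argument entirely geometric: it observes that this special case is the minimum disk cover problem (choosing a minimum-cardinality subfamily of the already-deployed coverage regions $\mathbb{D}_j$ that covers the points of $\mathbb{D}$), cites its known NP-hardness, and concludes by restriction. Your primary reduction instead starts from \emph{general} minimum set cover, and that is where there is a genuine gap: you must realize an arbitrary element--set incidence matrix as the incidence between grid points and the coverage regions induced by the propagation model. Each sector's coverage region under the paper's model (transmit power times antenna pattern times path loss tested against $P_{r,\mathrm{min}}$ and $\Gamma_{\mathrm{min}}$) is a set described by a constant number of real parameters, so the family of such regions has bounded VC dimension and can induce only polynomially many distinct subsets of an $n$-point set; a general set-cover instance may demand incidence patterns that no placement of sectors and grid points can produce. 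The real obstacle is therefore combinatorial realizability, not the analytic calibration of \eqref{coverage_final_exp} that you single out as the main difficulty --- driving $\delta$ polynomially small does not help if the required pattern of ``near $1$'' and ``near $0$'' entries is geometrically unattainable. (One could escape this by treating the per-link propagation statistics as free instance data, but that is not how the model is set up and would need explicit justification.)

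The fix is exactly the alternative you relegate to a remark: reduce from a covering problem whose instances are already geometric --- minimum disk cover, or the cell-activation-for-coverage-preservation problem the paper cites --- so that the incidence structure comes for free, while your degenerate-RAN-range, low-traffic, $f_{\mathrm{APC}}\propto|\mathcal{B}_{\mathrm{on}}|$ machinery (which is a more explicit and careful elaboration than the paper's one-line restriction argument) carries the rest. Promote that variant to the primary proof and your argument becomes sound and essentially coincides with the paper's.
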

\begin{proof}
Consider a simplified problem of finding minimum number of BSs required to maintain network coverage $f_{{\mathrm{COV}}_{\mathrm{min}}}$. The problem is similar to the minimum disk cover problem where the set of points in the region $\mathbb{D}$ to be covered with a subset of disks $\mathbb{D}_j$ with minimal cardinality. It is known that the  problem is NP-hard \cite{chenyi2014}. The minimum disk cover problem is a subset of our problem in \eqref{objective_function}. Therefore, the considered multi-objective optimization problem is NP-hard. 
\end{proof}

In Section \ref{solution_approach}, we discuss the details of solving the multi-objective optimization problem defined in \eqref{objective_function} with reduced complexity by utilizing the traffic fluctuations.

\subsection{Solution Approach} \label{solution_approach}

\subsubsection{Pareto Optimal Region}
As it is seen in the previous section, the four objectives in \eqref{objective_function} are conflicting as they are coupled with each other due to the complex relationships among the variables. So, there can be no single solution that maximizes all the objectives simultaneously. In general, there is no global optimum to the multi-objective optimization problem in \eqref{objective_function}. Let 
\begin{equation}
\mathcal{G}=\big\{g(\textbf{x}),\forall \textbf{x}\big\}
\end{equation}
 be the objective set which contains all possible combinations of objective values. Therefore, our aim is to obtain the solution that provides the optimum trade-off among the conflicting objectives. Since the objective functions are conflicting, there will be multiple solutions forming a \textit{Pareto optimal set}. 

\begin{mydef}
A solution $\textbf{x}^*$ is said to be non-dominated (Pareto optimal) if, (i) There is no other solution dominating the objectives other than $\textbf{x}^*$. In other words, $g_q(\textbf{x}^*)<g_q(\textbf{w}), \ \forall q$ does not exist. (ii) The solution $\textbf{x}^*$ is strictly better than $\textbf{w}$ i.e. $g_q(\textbf{x}^*)>g_q(\textbf{w})$ for at least one objective $q\in \{1,2,3,4\}$. 
\end{mydef}

It can be said that a solution is Pareto optimal, if none of the objective functions can be improved in value without degrading some of the other objective values \cite{abraham2005}. Each solution in the Pareto optimal set has certain trade-offs between the objectives. As there will be multiple solutions, from the Pareto solution set, network operator has an opportunity to select appropriate BS and RAN parameter configuration according to their needs. 

There are two ways to solve the framed multi-objective optimization problem. In one approach, called sum of weighted objectives (SWO), all the objectives are added together with appropriate weight values to form a single objective as is done in our previous work \cite{prabhu2014icc}. This method is referred to as \textit{a priori method} as it is required to assign preferences to the weight vector beforehand. However, in order to obtain the best solution one has to find the appropriate weight vector through trial and error method which may take several trails. It is also possible that one may not be able to obtain the appropriate weight vector within the required time frame. therefore, in this work we use a different approach, called \textit{a posteriori method}. In this approach, first the set of Pareto optimal solutions are obtained and then the final solution is selected based on the preferences.

\begin{figure}[htb]
\centering
\includegraphics[trim={.1cm 4.5cm .1cm 1cm},clip,scale=.45]{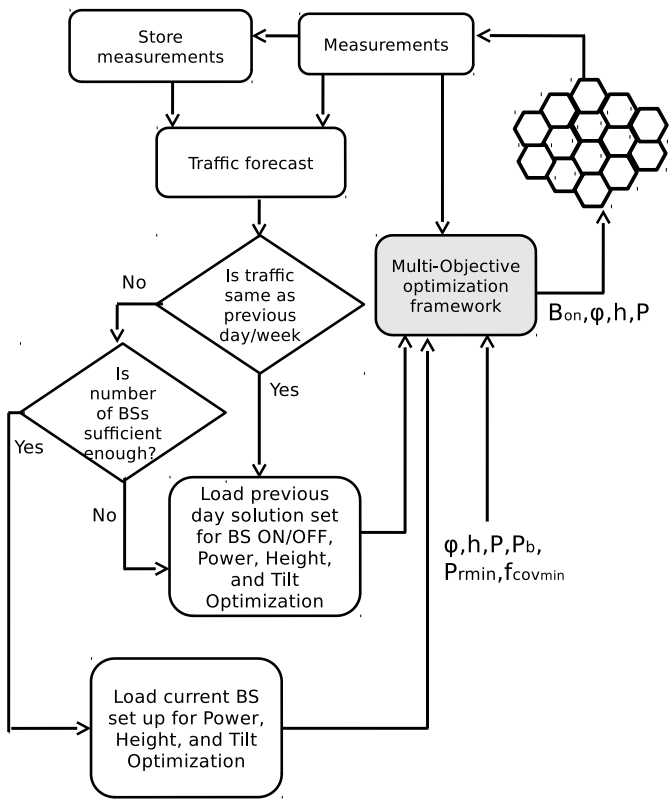}
\caption{System block diagram for dynamic optimization of the network}
\label{block_diagram}
\end{figure}

\subsection{Genetic Algorithm based meta-heuristic approach for finding Pareto optimal solutions}
Classical optimization techniques are difficult to apply for the considered combinatorial optimization problem due to non-linear relationships between the variables. Further, obtaining the global Pareto solution set is a challenging task. There are a number of search optimization approaches such as simulated annealing, Tabu search, ant colony optimization et. for finding the approximate Pareto optimal set. The main disadvantage of these approaches is that they often get stuck at local solutions and do not provide a global Pareto set. Whereas, Genetic Algorithm (GA) is well known for obtaining the global Pareto optimal solution of the multi-objective optimization problems with much lower complexity \cite{abraham2005}. This is because GA processes a group of solutions in the search space unlike other methods which process a single solution at a time. Complexity analysis of GA for the current problem is discussed in Section \ref{GA_complexity}. In addition, GA does not require complex mathematical functions, rather it requires numerical values also called fitness functions. The speed of GA can also be improved by parallel implementation of GA \cite{deb2001multi}.

In context to the present problem being discussed, the most important advantage of GA is its suitability in applications to dynamic environments.
There are two different approaches that can be used for dynamically adapting GA: \textit{search based} and \textit{memory based} approaches \cite{lam2005,mori2000}. The details of these approaches are detailed in the next section. The dynamics of GA can be implemented both in \textit{online} mode and \textit{offline} mode.

\begin{algorithm*}[htb]
 \caption{Algorithm for finding Pareto optimal solutions}
 \label{algorithm_ES}
 \begin{algorithmic}[1]
 \footnotesize{
 \STATE $\mathcal{Q} \leftarrow \mathcal{T}$ \COMMENT{Binary string assignment}
 \STATE $N_{\mathrm{Init}}$: Initial population size; $p_{co}$: Crossover probability; $p_{mu}$: Mutation probability; $M$: Number of objectives; $s$: Front number
 \STATE $\mathcal{Q} = \{\mathcal{Q}_1,\mathcal{Q}_2,...,\mathcal{Q}_{N_{\mathrm{SP}}} \}$ 
 \STATE $\mathcal{Q}_{\mathrm{Pareto}} \leftarrow \emptyset$
 \STATE $\mathcal{Q}_{\mathrm{Init}} \leftarrow$ Randomly select $N_{\mathrm{Init}}$ chromosomes from $\mathcal{Q}$
 \FOR{$x=1$ \TO $x\le N_{\mathrm{Iter}}$}
 \STATE $\mathcal{Q}_{\mathrm{Pot}} \leftarrow \emptyset$
 \FOR[For all the solutions in the initial population set]{$n=1$ \TO $n \le N_{\mathrm{Init}}$}
 \STATE \underline{Estimate Load Vector: Algorithm 2}
 \STATE $g_1(n)\leftarrow - {f_{\mathrm{APC}}}$; $g_2(n)\leftarrow {f_{\mathrm{ASE}}}$; $g_3(n) \leftarrow {f_{\mathrm{COV}}}$;$g_4(n)\leftarrow - {f_{\mathrm{OL}}};$
 \COMMENT{Compute fitness values: $g_q(n),\forall q$}
 \IF[Constraint check]{$P_{bj} \leq P_{b,\mathrm{max}}, \ \forall j$} 
 \STATE $\mathcal{Q}_{\mathrm{Pot}} \leftarrow \mathcal{Q}_{\mathrm{Pot}} \cup \mathcal{Q}_{\mathrm{Init}}^n$ \COMMENT{Update potential solution set}
 \ENDIF
 \ENDFOR
  \STATE $s \leftarrow 1; n_s \leftarrow 0$ \COMMENT{Initialize front number}
  \WHILE{$\mathcal{Q}_{\mathrm{Pot}} \neq \emptyset$}
  \STATE $\mathcal{Q}_{\mathrm{Dom}} \leftarrow \emptyset;\mathcal{Q}_{\mathrm{NDom}} \leftarrow \emptyset;$  \COMMENT{Update population size}
  \FOR[For all the solutions in the potential population]{$n=1$ \TO $n \leq |\mathcal{Q}_{\mathrm{Pot}}|$}
  \FOR{$m=1$ \TO $m \leq |\mathcal{Q}_{\mathrm{Pot}}|$} 
  \IF{$n\neq m$}
  \IF{$\big(f_q(n)>f_q(m), q \in \{1,2,3,4\}\big) \ \& \ \big(f_q(n)\nleq f_q(m), \ \forall q \big)$} 
  \STATE $\mathcal{Q}_{\mathrm{NDom}}^{x,s} \leftarrow \mathcal{Q}_{\mathrm{NDom}}^{x,s} \cup  \mathcal{Q}_{\mathrm{Pot}}^n$ \COMMENT{Update  $s$-th non-dominated solution set}
  \STATE $n_s \leftarrow n_{s}+1$ \COMMENT{Increment number of elements in $s$-th front}
  \STATE $\mathcal{Q}_{\mathrm{Pot}} \leftarrow \mathcal{Q}_{\mathrm{Pot}}\setminus \mathcal{Q}_{\mathrm{Pot}}^n$
  \ELSE
  \STATE $\mathcal{Q}_{\mathrm{Dom}}^x \leftarrow \mathcal{Q}_{\mathrm{Dom}}^x \cup \mathcal{Q}_{\mathrm{Pot}}^n$ \COMMENT{Update dominated solution set}
  \STATE $\mathcal{Q}_{\mathrm{Pot}} \leftarrow \mathcal{Q}_{\mathrm{Pot}}\setminus \mathcal{Q}_{\mathrm{Pot}}^n$
  \ENDIF
  \ENDIF
  \ENDFOR
  \ENDFOR
  \STATE $\mathcal{Q}_{\mathrm{Pot}} \leftarrow \mathcal{Q}_{\mathrm{Dom}}$; \ $df_u^s \leftarrow f_\mathrm{dum}, \ u=\{1,2,...,n_s\}$ \COMMENT{Assign dummy fitness value} 
  \STATE \textbf{\underline{SHARING}:} 
  \FOR[For all the solutions in the $s$-th front]{$u=1:n_s$} 
  \FOR{$w=1:n_s$}
  \STATE $\sigma_{\mathrm{sh}} \leftarrow \frac{.5}{\sqrt[4]{n_s}}$; \ $f_q^{\mathrm{max}} \leftarrow \mathrm{max}(f_q(n))$; \ $f_q^{\mathrm{min}}=\mathrm{min}(f_q(n))$, \ $n=1,2,...,|\mathcal{Q}_{\mathrm{Pot}}|$
  \STATE $\mathrm{d_{uw}} \leftarrow \sqrt{\sum_{q=1}^{4}\bigg(\frac{f_q(u)-f_q(w)}{f_q^{\mathrm{max}}-f_q^{\mathrm{min}}}\bigg)^2}$ ; \COMMENT{Calculation of distance between $u$ and $w$}
  \IF{$\mathrm{d_{uw}}\leq \sigma_{\mathrm{sh}}$}
  \STATE $\kappa(\mathrm{d_{uw}}) \leftarrow 1-\big(\frac{\mathrm{d_{uw}}}{\sigma_{\mathrm{sh}}}\big)^2$;  \COMMENT{Calculation of sharing function $\kappa(\mathrm{d_{uw}})$}
  \ELSE
  \STATE $\kappa(\mathrm{d_{uw}}) \leftarrow 0$ 
  \ENDIF
  \ENDFOR
  \STATE $m_u^s \leftarrow \sum_{u=1}^{n_s}\kappa(\mathrm{d_{uw}})$ \COMMENT{Calculation of niche count $m_u^s$}
  \STATE $df_u^{s} \leftarrow \frac{df_u^s}{m_u}$; \ \ $df_w^{s} \leftarrow \underset{u}{\operatorname{min}}\{df_u^s\}$; \ \ $df_u^{s+1} \leftarrow df_w^{s}-\epsilon_{s}$
   \COMMENT{Calculation of dummy fitness values for $s$-th front}
  \ENDFOR
  \STATE $s \leftarrow s+1$; \COMMENT{Update front number}
  \ENDWHILE
  \STATE $\mathcal{Q}_{\mathrm{Pareto}} \leftarrow \mathcal{Q}_{\mathrm{Pareto}} \cup \mathcal{Q}_{\mathrm{NDom}}^x$ \COMMENT{Update Pareto solution set}
 \STATE $\mathcal{Q}_{\mathrm{Pot}} \leftarrow$ \textit{reproduce}($\mathcal{Q}_{\mathrm{Pot}}$) ;\ $\mathcal{Q}_{\mathrm{Pot}} \leftarrow$ \textit{crossover}($\mathcal{Q}_{\mathrm{Pot}},p_{co}$); \ $\mathcal{Q}_{\mathrm{Pot}} \leftarrow$ \textit{mutate}($\mathcal{Q}_{\mathrm{Pot}},p_{mu}$)  \COMMENT{Perform selection, crossover and mutation operations}
 \STATE $\mathcal{Q}_{\mathrm{Init}} \leftarrow \mathcal{Q}_{\mathrm{Pot}}$ \COMMENT{Update the initial population for next iteration}
 \ENDFOR
 \STATE $\mathcal{B}_{\mathrm{on}} \leftarrow \mathcal{Q}_{\mathrm{Pareto}}$} 
 \end{algorithmic} 
 \end{algorithm*}
 
\subsection{Dynamic Network Optimization Architecture}
Fig. \ref{block_diagram} illustrates the block diagram of the proposed system architecture where the network is controlled in a centralized manner. All the eNBs are connected to the central controller via. At first, the central controller creates a data base by running the optimization algorithm in \textit{offline} mode. The algorithm for finding Pareto optimal solution set is detailed in Section \ref{algorithm_pareto_details}. The data base contains the set of Pareto optimal solutions for different traffic demand density $\rho$. Let $\rho_{\mathcal{B}}^{\mathrm{max}}$ be the peak traffic load that the network can support and 
\begin{equation}
\hat{\rho}=\frac{\rho}{\rho_{\mathcal{B}}^{\mathrm{max}}} \nonumber
\end{equation}
 be the normalized load. The number of sectors required for any traffic load $\hat{\rho}$ should be such that 
\begin{equation}
N_{\mathcal{B}_{\mathrm{min}}}\leq N_{\mathcal{B}_{\mathrm{on}}}(\hat{\rho}) \leq N_{\mathcal{B}}, \nonumber
\end{equation}
 where $N_{\mathcal{B}_{\mathrm{min}}}$ is the minimum number of sectors required to 
satisfy the minimum coverage requirements irrespective of the traffic conditions in the network. Consider that the normalized traffic load, $\hat{\rho}$ is quantized into $N_q$ discrete levels, 
\begin{equation}
\hat{\rho} \in\{ \hat{\rho}_1,\hat{\rho}_2,...,\hat{\rho}_{N_q}\}.\nonumber
\end{equation}
Let $\bar{N}_{\mathcal{B}_{\mathrm{on}}}(\hat{\rho}_q)$ be the number of sectors required to support the traffic load $\hat{\rho}_q$.

The number of solutions in the sub-population the set of active sectors with $\bar{N}_{\mathcal{B}_{\mathrm{on}}}(\hat{\rho}_q)$ sectors is 
\begin{equation}
N_{\mathrm{SP}}^{\bar{N}_{\mathcal{B}_{\mathrm{on}}}(\hat{\rho}_q)} = \binom{N_{\mathcal{B}}}{\bar{N}_{\mathcal{B}_{\mathrm{on}}}(\hat{\rho}_q) }= \frac{N_{\mathcal{B}}!}{\bar{N}_{\mathcal{B}_{\mathrm{on}}}(\hat{\rho}_q) ! (N_{\mathcal{B}} - \bar{N}_{\mathcal{B}_{\mathrm{on}}}(\hat{\rho}_q) )!}, \nonumber
\end{equation}
$q = 1,2,3,...,N_q$. The sub-population that contains the solution set that is required for supporting the traffic load $\hat{\rho}_q$ is 
\begin{align}
\mathcal{T}_{\mathcal{B}_{\hat{\rho}_q}}&= \nonumber \\
 \Bigg\{&\mathcal{B}_{\mathrm{on}}^n \in \mathcal{T}_\mathcal{B}\Bigg|\nonumber \\&  \bar{N}_{\mathcal{B}_{\mathrm{on}}}(\hat{\rho}_q)-\sigma_{N_{\mathcal{B}_{\mathrm{on}}}(\hat{\rho}_q)} \le |\mathcal{B}_{\mathrm{on}}^n| \le \bar{N}_{\mathcal{B}_{\mathrm{on}}}(\hat{\rho}_q)+\sigma_{N_{\mathcal{B}_{\mathrm{on}}}(\hat{\rho}_q)}, \nonumber \\
& n =1,2,...,2^{N_{\mathcal{B}}-1}\Bigg\}.\nonumber
\end{align}

By assuming $\bar{N}_{\mathcal{B}_{\mathrm{on}}}(\hat{\rho}_q) = \hat{\rho}_q . N_{\mathcal{B}}$, for $N_{\mathcal{B}}= 50$ and $25$, the number of combinations $N_{\mathrm{SP}}^{\bar{N}_{\mathcal{B}_{\mathrm{on}}}(\hat{\rho}_q)}$ is shown in Figure \ref{numberofCombinations_TWC}. It can be seen that the number of combinations is large during medium load conditions and small. However, when the traffic load is $20\%$ (or $60\%$), the size of sub-population $|\mathcal{T}_{\mathcal{B}_{\hat{\rho}_q}}|$ is approximately $300$ times less than the total search space length $N_{\mathrm{SP}}^{\mathcal{T}_{\mathcal{B}}}$. This significant reduction in search space greatly helps to reduce the computational complexity and leads to fast convergence.

For continuous adaptation, the central controller frequently collects measurement reports (instantaneous traffic, call drops, etc.) from all the active BSs and estimate the traffic using the traffic forecast algorithm. If the traffic load at a particular time of the day is the same as that of previous day, then the solutions of previous day is used as a candidate solution set i.e. memory based approach \cite{lam2005,mori2000}. However, if the traffic condition is unpredictable and different from previous day traffic, then a search based \textit{random immigrants} method \cite{Yaochu2005} is used. In this method, for each generation the worst individuals are replaced by the randomly generated individuals in order to increase the diversity in the population. Incorporating dynamic features of GA can be used to cope up with the short term and long term variations. In such circumstances, random immigrants method can be used.

\begin{figure}[htb]
\centering
\includegraphics[scale=.55]{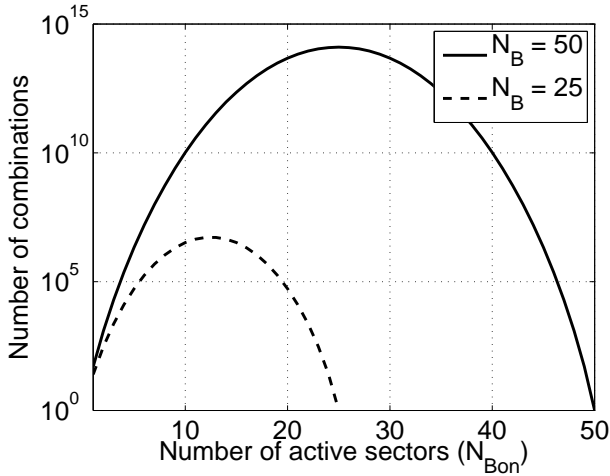}
\caption{Number of combinations in the search space vs Number of active sectors}
\label{numberofCombinations_TWC}
\end{figure}

\subsection{Algorithm to find Pareto optimal solutions}\label{algorithm_pareto_details}
The details of algorithm to find the Pareto optimal solutions given in Algorithm \ref{algorithm_ES} are as follows. The working procedure of GA is motivated by the biological evolutionary principles such as genetics and natural selection. In GA, each individual (solution) in the search space is represented as a binary string called \textit{chromosome}. Let $N_{\mathrm{SP}}$ be the search space length. First, all solutions in the search space are indexed to a binary bit string with length $\log_2 (N_{\mathrm{SP}})$ bits. For example, when the number of sectors $N_{\mathcal{B}}$ is equal to 3, the total search space, $\mathcal{T}_\mathcal{B} =\big\{ \{1 \},\{2 \},\{3 \},\{1,2 \},\{1,3\},...,\{1,2,3\} \big\}$ is coded as $\mathcal{Q} =\big\{[001],[010],[100],[011],[101],...,$ $[111]\big\}$. After that a sub population $\mathcal{Q}_{\mathrm{Init}}\in \mathcal{Q}$ whose length is much less than the search space ($N_{\mathrm{Init}}<<N_{\mathrm{SP}}$) is randomly chosen from the search space. Then fitness values of randomly selected population of chromosomes are calculated.  Then fitness values for all the solutions in the initial population are evaluated at the solutions of \eqref{load_vector_solution}. For a given set of active sectors and the corresponding RAN parameters, the load vector is estimated using Theorem 2. The details of the procedure is given in Algorithm \ref{algorithm_load_estimation}.

\begin{algorithm}[!htpb]
 \caption{Algorithm for finding load vector}
 \label{algorithm_load_estimation}
 \begin{algorithmic}[1]
 \STATE $k \leftarrow 1; \boldsymbol{\beta}(k) \leftarrow \boldsymbol{\beta}^{0}$; \COMMENT{Initialize}
  \WHILE{$\boldsymbol{\beta}(k)-\boldsymbol{\beta}(k-1) > \boldsymbol{\epsilon}$}
  \STATE $\boldsymbol{\beta}(k+1) = r^{(k)}\boldsymbol{\beta}(k) +(1-r^{(k)}) \boldsymbol{\beta}(k)$
  \STATE $k \leftarrow k+1$; \COMMENT{Update iteration number}
  \ENDWHILE
 \end{algorithmic} 
 \end{algorithm}

Next, constraint check is performed for all the chromosomes in the initial sub population ($\mathcal{Q}_{\mathrm{Init}}$) and only the solutions which satisfy the blocking probability requirements for a given traffic demand are selected as potential population ($\mathcal{Q}_{\mathrm{Pot}}$) for the next generation or iteration. The solutions in the potential population are ranked using non dominated sorting (NDS). 

\textbf{Step 14 to 45:} In NDS, ranking of the solutions is done based on the non-domination level. Based on the non-domination level Pareto fronts are formed \cite{srinivas1994}. Each solution in the set $\mathcal{Q}_{\mathrm{Pot}}$ is compared with all other solutions using the conditions for non-domination for all four objectives. A solution $u^*$ is marked as non-dominated if,
\begin{enumerate}
\item There is no other solution dominating the objectives other than {{$u^*$}}. In other words, $f_q(u^*)<f_q(w), \ \forall q$ does not exist.
\item The solution $u^*$ is strictly better than $v$ i.e. $f_q(u^*)>f_q(w)$ for at least one objective $q\in \{1,2,3,4\}$. 
\end{enumerate}
The solutions which are satisfying the conditions for non-domination are marked as non-dominated solutions of the first ($s=1$) non-dominated front i.e. $\mathcal{Q}_{\mathrm{NDom}}^{1}$. Let $n_s$ be the number of solutions in $s$-th non-dominated front. Then a dummy fitness value \begin{equation}
df_u^s=f_\mathrm{dum}\end{equation}
is assigned to all $n_s$ solutions in the first non-dominated front. After assigning a fitness value, sharing is performed to maintain diversity in the population. After sharing, the solutions in the first non-dominated front are temporarily ignored. Then the above procedure is repeated to find the second non-dominated levels. Second front is assigned a dummy fitness value lesser than front one and sharing is applied again. This procedure is repeated till all the members of the population assigned a shared fitness value. 

\textbf{Step 31 to 43:} Sharing is applied to maintain diversity in the search space. Sharing value is calculated between individuals ($\mathrm{u,v}$) in each front using the formula in line 33. Where, $\mathrm{d}_{\mathrm{uv}}$ is the distance between two individuals and $\sigma_{\mathrm{sh}}$ is the size of the niche. The parameters $f_q^{\mathrm{min}}$ and $f_q^{\mathrm{max}}$ are the minimum and maximum fitness values, respectively of the objective $q$. Niche count $m_u^s$ is used to spread individual along the Pareto front based on the sharing value \cite{srinivas1994}. The fitness value of the solution $u$ in $s$-th front is modified as 
\begin{equation}
df_u^{s} = \frac{df_u^s}{m_u^s}. \nonumber 
\end{equation}
After sharing, the worst fitness value ($df_w^{s}$) in the $s$-th non-dominated front is used as a dummy fitness value of $(s+1)$-th non-dominated front i.e. 
\begin{equation}
df_u^{s+1} = df_w^s-\epsilon_s,\nonumber 
\end{equation}
 where $\epsilon_s$ is a small positive number. Then the non-dominated solutions in the $s$-th front ($\mathcal{Q}_{\mathrm{NDom}}^s$) are temporarily ignored and the remaining solutions are further processed for finding next ($s+1$-th) non-dominated front. This procedure is continued till all members of the population are assigned a fitness value.

\textbf{Step 47:} After assigning dummy fitness value to each solution in the population, the following operations are performed: reproduction (or selection), crossover, and mutation \cite{deb2001multi}. Stochastic proportionate selection method \cite{deb2001multi} is used for reproducing the best individuals. Let $f_{\mathrm{avg}}$ be the average fitness of all the individuals, then the individual with fitness value $f_u$ gets an expected number of copies, $\frac{f_u}{f_{\mathrm{avg}}}$. Since the solutions in the first non-dominated Pareto front have better fitness values, the reproduction probability is more for the solutions in the first non-dominated front than the solutions in the remaining fronts. In crossover, solutions are chosen in pairs based on crossover probability $p_{co}$. After crossover, mutation is performed with probability, $p_{mu}$ to keep diversity in the solutions. 

Then stochastic proportionate selection method \cite{deb2001multi} is used for reproducing the best individuals. Only the best solutions are stored for further processing. The dominated or worst solutions are discarded. New set of chromosomes called \textit{offspring} are generated by performing: \textit{selection} (or \textit{reproduction}), \textit{crossover} and \textit{mutation}. The genetic operators are bit wise operations used to form new better chromosomes (The detailed discussion on GA operators can be found in \cite{mitchell1996}). Then the current population of chromosomes are replaced by new set of chromosomes. Each iteration of the above process is called \textit{generation}. This process is repeated for multiple generations. As the number of generations increases the above mentioned GA operators guides the search towards optimal solution set. The entire set of generations is called \textit{run}. At the end of each generation, the Pareto optimal solutions are compared with the solutions of the previous generation. The solutions which are non-dominated by any other solutions in the next generation are stored to form the final Pareto optimal set $\mathcal{Q}_{\mathrm{Pareto}}$. The final Pareto optimal set is selected as a potential solution based on the operator's requirements. The details of the selection of the final solution is detailed in Section \ref{results_Pareto_optimal}.

\subsection{Complexity of the Algorithm}\label{GA_complexity}
The total complexity of the Algorithm \ref{algorithm_ES} for finding active sector set at normalized traffic load $\hat{\rho}_s$ can be expressed as
\begin{align}\label{computational_complexity} 
O\Big( G(\hat{\rho}_s).\Big[M.N_{\mathrm{Init}}. \! \big[&O(\mathrm{Fitness})+p_{co}. O(Cr)+p_{mu}.O(Mu)\big] \nonumber \\ & + O(NSGA)  + O(SH) + O(SEL)\Big] \Big). 
\end{align}
Here $N_{\mathrm{Init}}$ is the initial population length, $G(\hat{\rho}_s)$ is the number of iterations required for convergence when traffic load is at $\hat{\rho}_s$, $M$ is the number of objectives, $O(\mathrm{Fitness})$ is complexity of fitness evaluation, $O(NSGA)$ is complexity of NSGA. In NSGA, each solution is compared with every other solution for $M$ different cost functions and is repeated for $N_{\mathrm{Front}}$ number of fronts. Hence the total complexity of NSGA is $O(MN_{\mathrm{Init}}^2 N_{\mathrm{Front}})$. In sharing, since each solution is compared with other solutions the complexity of sharing is $O(SH) = O(N_{\mathrm{Init}}^2)$ \cite{deb2002}. $O(SEL)$ is complexity of selection operation i.e. $O(N_{\mathrm{Init}})$ and $O(Cr)$ and $O(Mu)$ are complexity of crossover and mutation operator, respectively. It can be observed from \eqref{computational_complexity} that the complexity mainly depends on the number of individuals in the initial population $N_{\mathrm{Init}}$ and number of iterations required for convergence $G(\hat{\rho}_s)$. The complexity will be high during medium load conditions as the number of iterations required for convergence is higher than that of low and high load conditions. However, it is much less than the complexity of exhaustive search method. Note that the complexity of exhaustive search to find the Pareto optimal set involves the computation of $N_{\mathrm{SP}}$ fitness functions and sorting which is practically infeasible. The convergence of Algorithm \ref{algorithm_ES} can be made further faster by adjusting crossover and mutation probabilities in each iteration \cite{mitchell1996}. 

\subsection{Practical Implementation of the proposed framework}

The centralized self organizing network (SON) functionalities such as load balancing, hand-over parameter optimization, interference control, capacity and coverage optimization, etc. are being considered in LTE networks \cite{3gpp.32.500}. Further, centralized ES functionalities have been added in 3GPP standard \cite{3gpp.32.551}. Therefore, the sectors and the corresponding RAN parameters can be identified for SLM by exploiting the features of centralized ES and SON functionalities.

\section{Results and Discussion}\label{results}

\begin{figure*}[htb]
\centering
\psfragscanon
\psfrag{y}{\hspace{-.6cm}\scriptsize{$({f_{\mathrm{APC}}},{f_{\mathrm{ASE}}},{f_{\mathrm{COV}}},{f_{\mathrm{OL}}})=$}}
\psfragscanoff
\subfigure[]{\includegraphics[scale=.42]{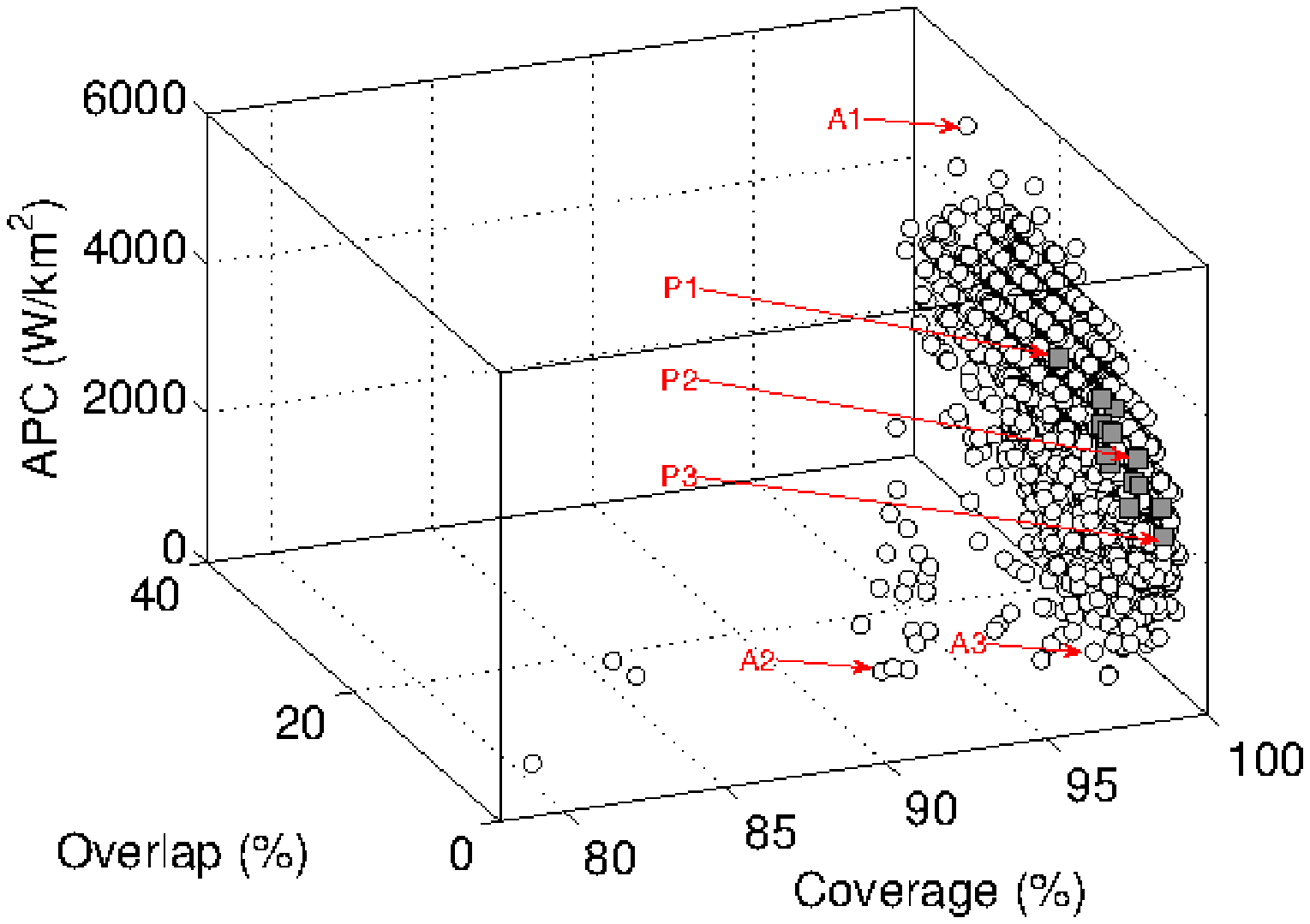}\label{pareto_front}}
\hfil
\hspace{-.75cm}
\subfigure[]{\includegraphics[scale=.42]{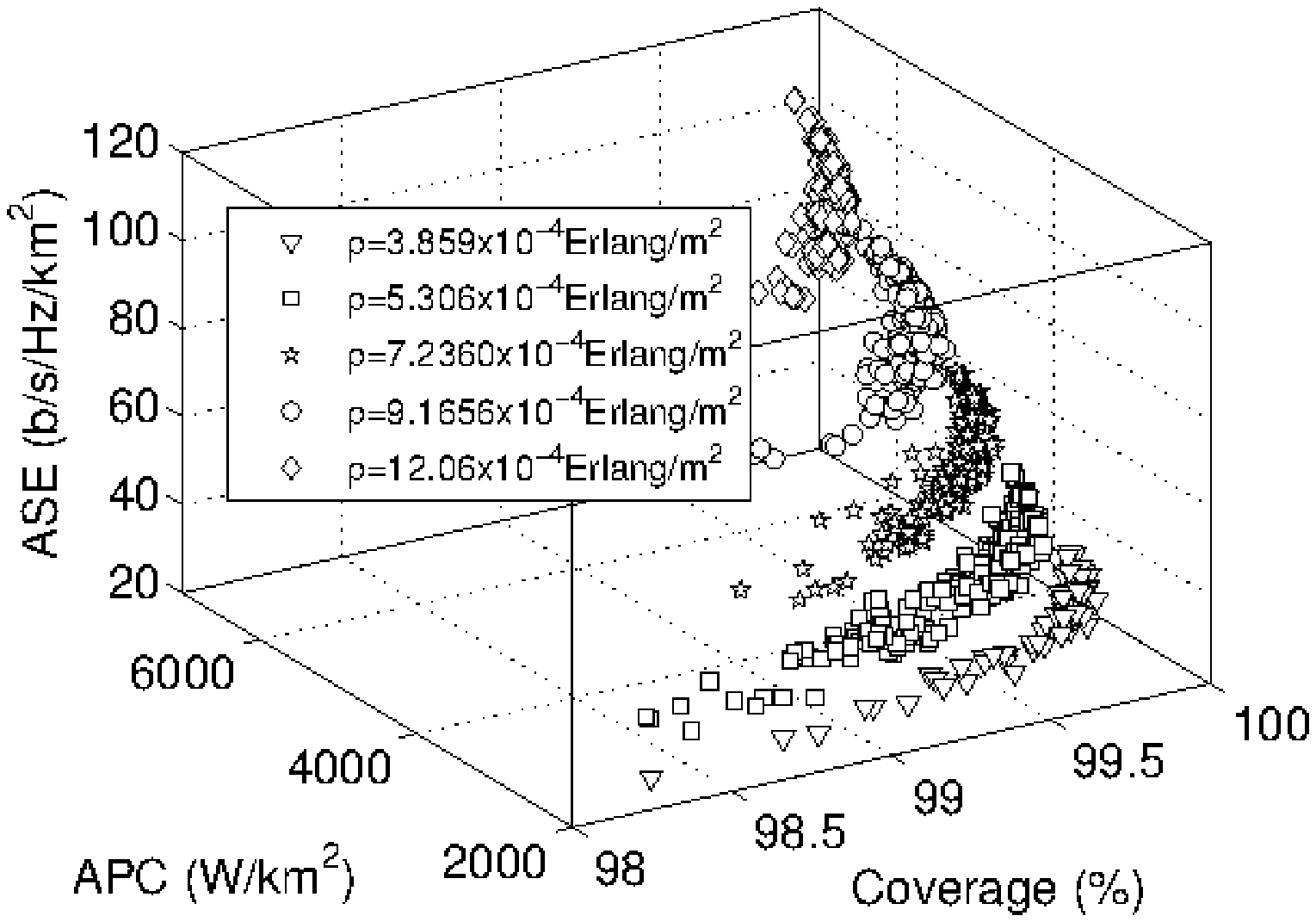}\label{APC_ASE_COV}}
\caption{Pareto optimal solutions (a) Values of cost functions for traffic demand density $\rho=3.376\times10^{-4}$ Erlang/m$^2$\newline
\textbf{A1}: $\mathcal{B}_{\mathrm{on}}=\{1,2,3,5,6,8,12,13,17,18,20,21\}$, ${f_{\mathrm{APC}}}=5038$ (W/km$^2$), ${f_{\mathrm{COV}}}=99.99\%$, ${f_{\mathrm{OL}}}=32.70\%$, ${f_{\mathrm{ASE}}}=35.60$ b/s/Hz/km$^2$ \newline
\textbf{A2}: $\mathcal{B}_{\mathrm{on}}=\{6,9,10,16,17,18\}$, \hspace{1.875cm} ${f_{\mathrm{APC}}}=1067$ (W/km$^2$), ${f_{\mathrm{COV}}}=90.26\%$, ${f_{\mathrm{OL}}}=01.67\%$, ${f_{\mathrm{ASE}}}=27.73$ b/s/Hz/km$^2$ \newline
\textbf{A3}: $\mathcal{B}_{\mathrm{on}}=\{1,5,6,9,10,11,14,19,20\}$, \hspace{.875cm} ${f_{\mathrm{APC}}}=1405$ (W/km$^2$), ${f_{\mathrm{COV}}}=97.22\%$, ${f_{\mathrm{OL}}}=33.28\%$, ${f_{\mathrm{ASE}}}=24.02$ b/s/Hz/km$^2$ \newline
\textbf{P1}: $\mathcal{B}_{\mathrm{on}}^*=\{1,2,5,6,7,8,9,10,11,17,20\}$, \hspace{.485cm} ${f_{\mathrm{APC}}}=3060$ (W/km$^2$), ${f_{\mathrm{COV}}}=99.86\%$, ${f_{\mathrm{OL}}}=19.68\%$, ${f_{\mathrm{ASE}}}=36.16$ b/s/Hz/km$^2$ \newline
\textbf{P2}: $\mathcal{B}_{\mathrm{on}}^*=\{4,5,6,8,10,17,18\}$,   \hspace{1.72cm} ${f_{\mathrm{APC}}}=2730$ (W/km$^2$),\ ${f_{\mathrm{COV}}}=99.76\%$, ${f_{\mathrm{OL}}}=08.25\%$, ${f_{\mathrm{ASE}}}=37.45$ b/s/Hz/km$^2$ \newline
\textbf{P3}: $\mathcal{B}_{\mathrm{on}}^*=\{5,6,8,9,10,11,17,18\}$, \hspace{1.3cm} ${f_{\mathrm{APC}}}=2070$ (W/km$^2$), ${f_{\mathrm{COV}}}=99.55\%$, ${f_{\mathrm{OL}}}=03.83\%$, ${f_{\mathrm{ASE}}}=37.63$ b/s/Hz/km$^2$ \newline
(b) Cost functions of the Pareto optimal solutions for different traffic demand density}
\end{figure*}

\begin{figure*}[htb]
\begin{center}
\hspace{-.94cm}
\subfigure[]{\includegraphics[scale=.37]{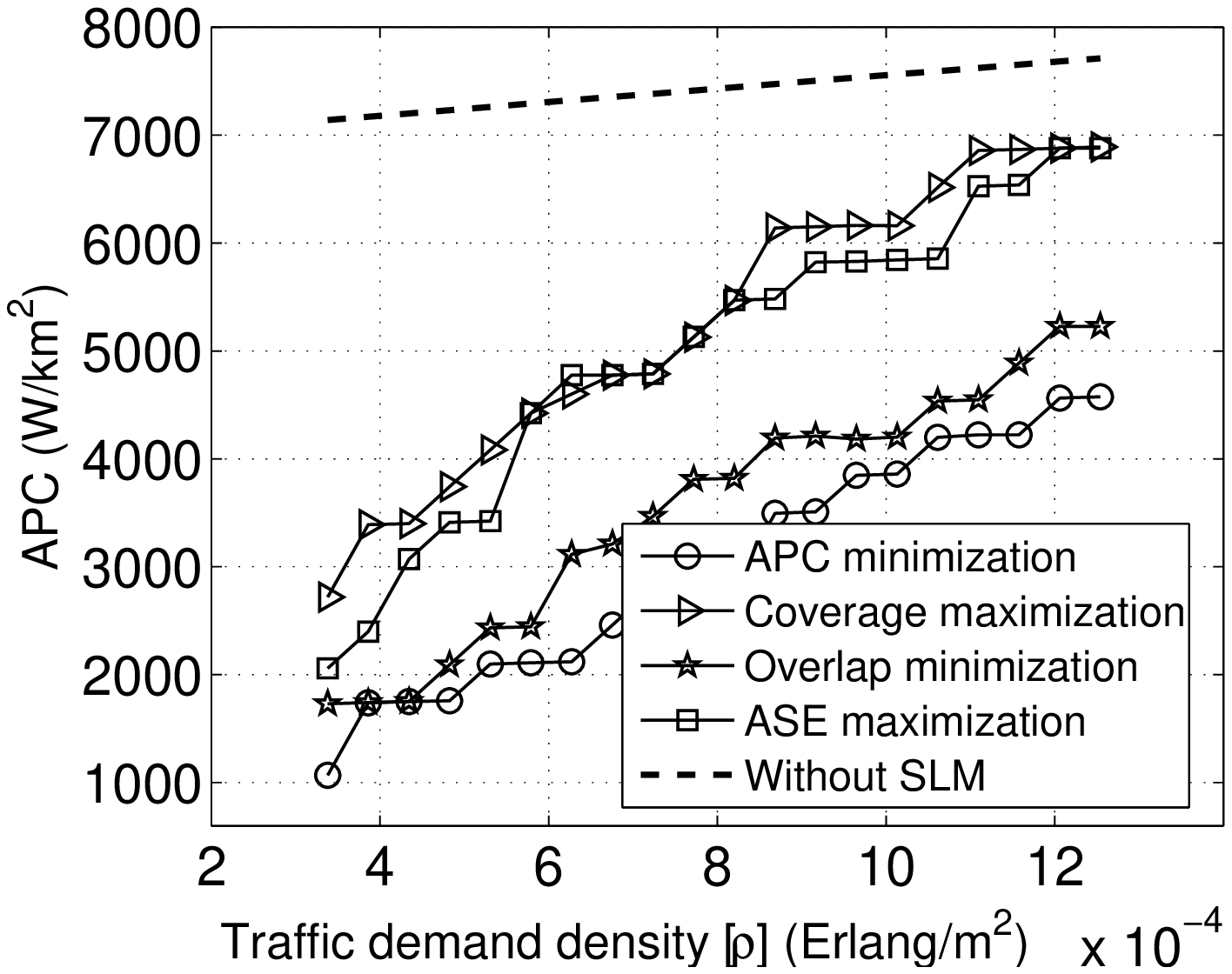}\label{APC_Final_TWC}}
\hspace{-.25cm}
\subfigure[]{\includegraphics[scale=.37]{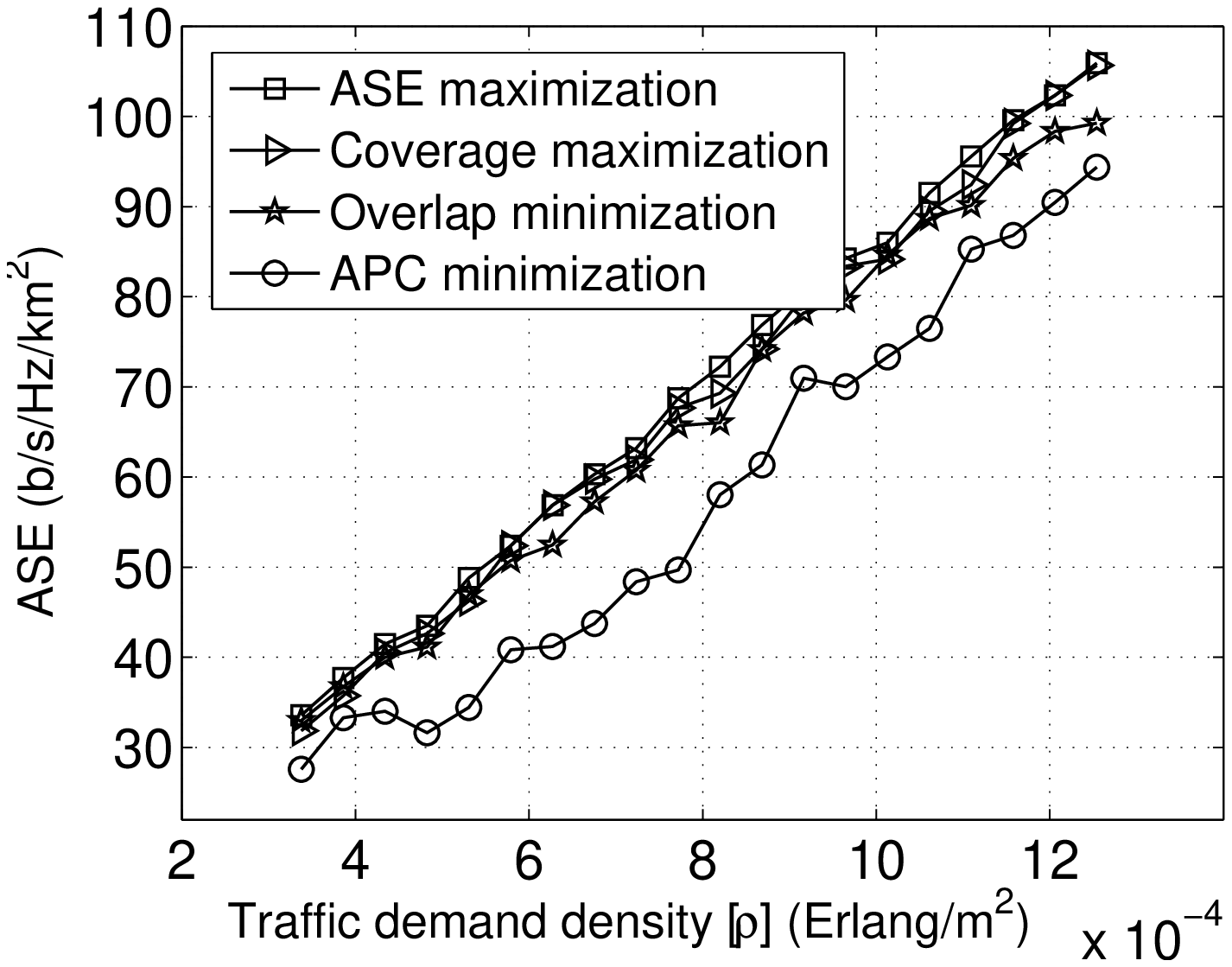}\label{ASE_vs_Erlang_TWC}}
\vspace{-.75cm}
\subfigure[]{\includegraphics[scale=.37]{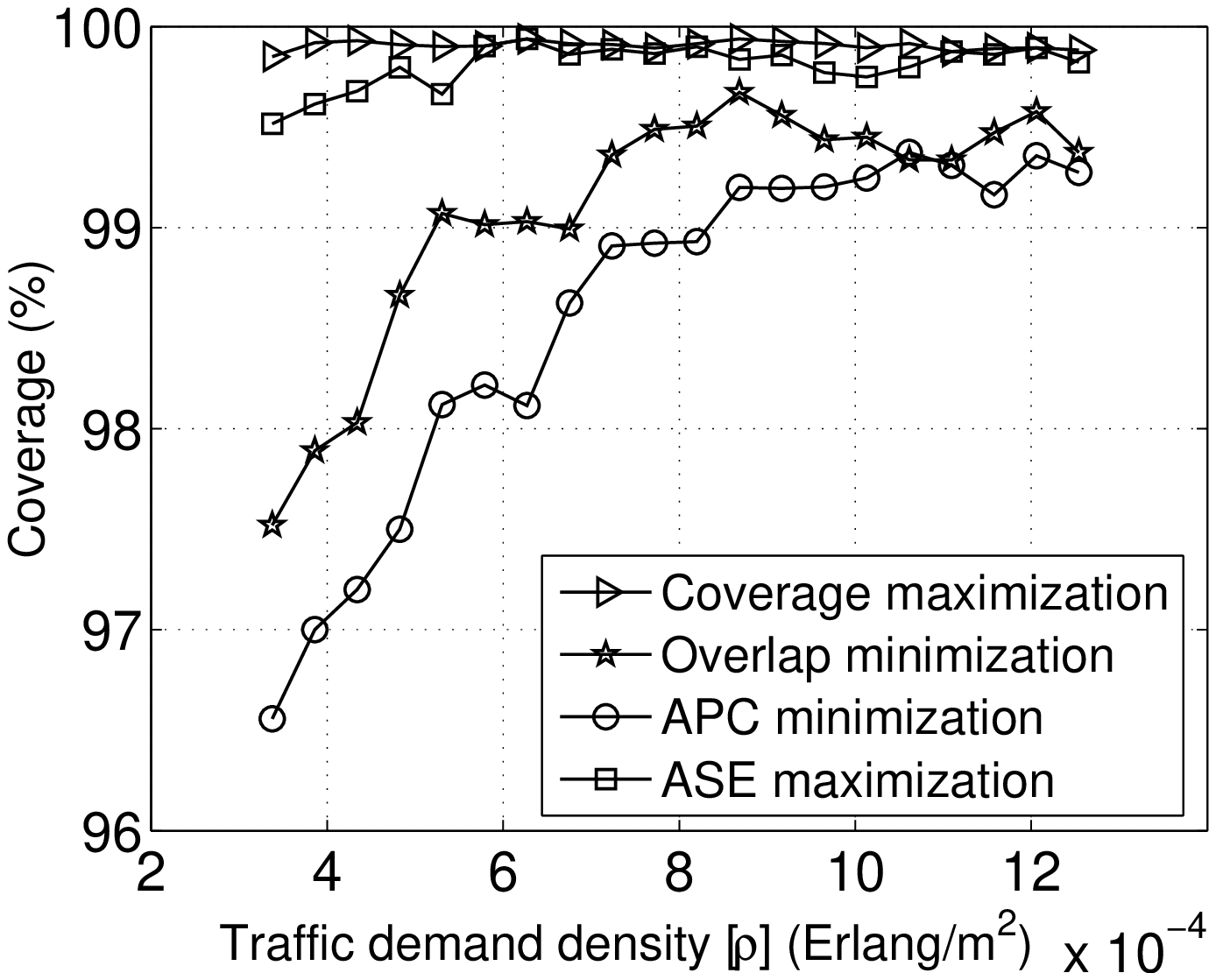}\label{Cov_Final_TWC}}\\ \vspace{.5cm}
\subfigure[]{\includegraphics[scale=.37]{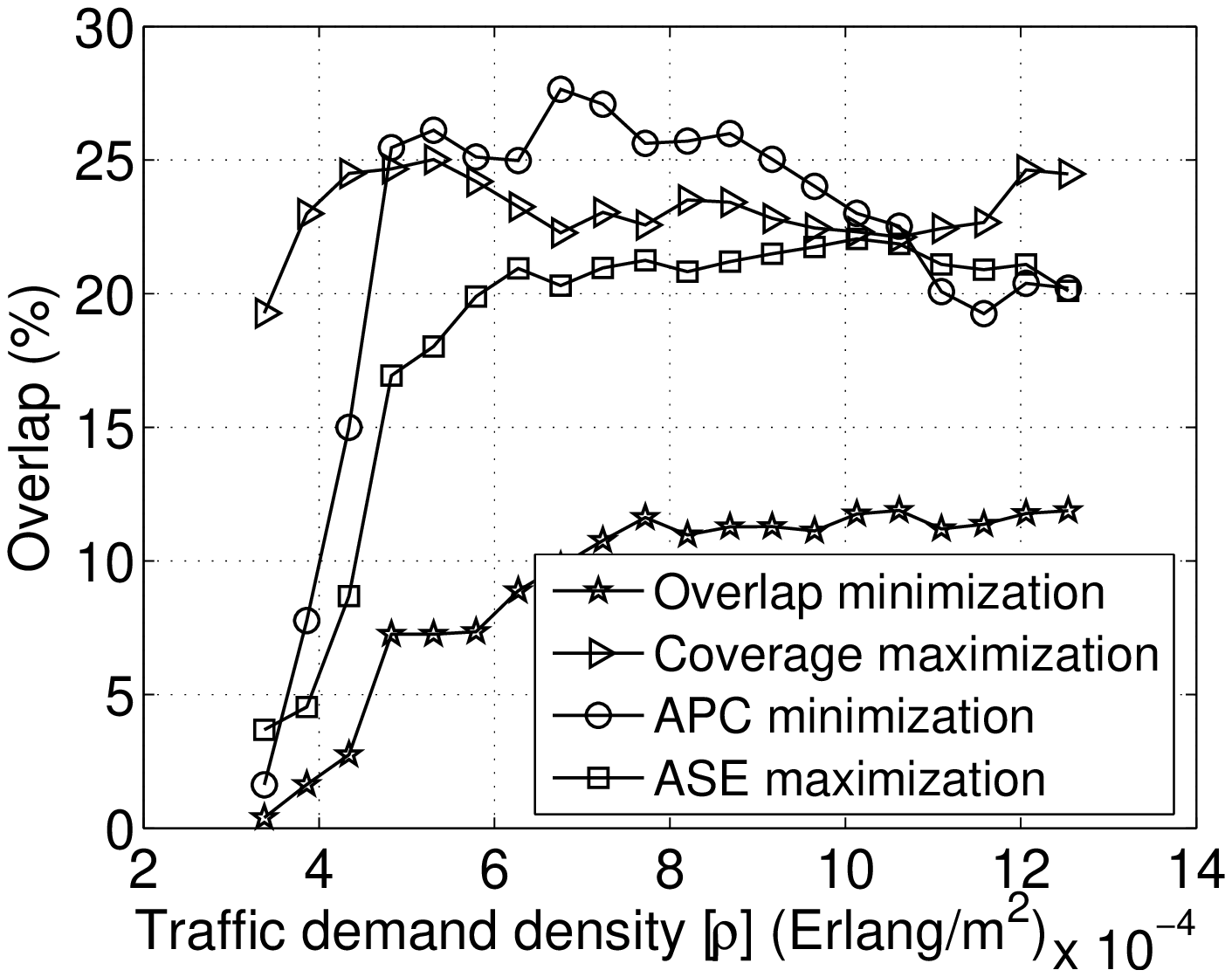}\label{Overlap_Final_TWC}} 
\subfigure[]{\includegraphics[scale=.37]{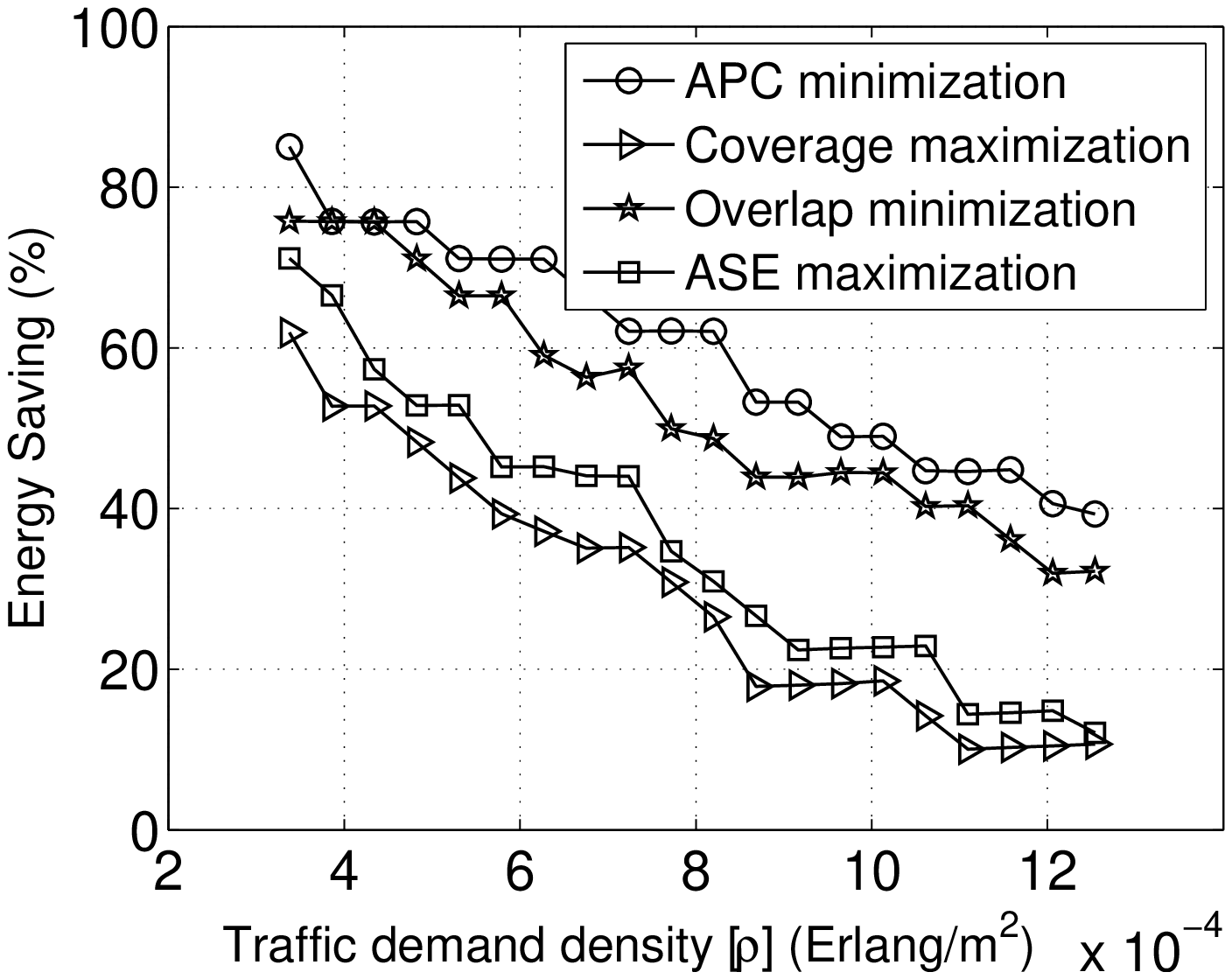}\label{ES_Final_TWC}}
\caption{Impact of selection of the final solution on the cost functions and ES for varying traffic demand density (a) APC (b) ASE (c) Coverage (d) Overlap vs  (e) ES}
\label{Cost_functions_different_objectives}
\end{center}
\end{figure*}

\subsection{Pareto optimal solutions obtained using NSGA}\label{results_Pareto_optimal}
In this section we show the Pareto optimal set of active sectors with fixed RAN parameter configurations (as given in Table I(a)) for different traffic demand density. For GA, the initial population size $N_{\mathrm{Init}}$ is taken as $100$. The crossover probability $p_{co}$ and mutation probability $p_{mu}$ are taken as $.7$ and $.01$, respectively. Figure \ref{pareto_front} shows the values of cost functions and Pareto optimal solutions $\mathcal{B}_{\mathrm{on}}^*$ obtained using the procedure as described in Section \ref{algorithm_pareto_details} for traffic demand density $\rho=3.376\times 10^{-4}$ Erlang/m$^2$. The circles represent the solutions of the entire search space and the solid squares represent the front one of the Pareto solution set obtained at the end of iteration number $4000$. It can be seen from the figure that the Pareto optimal solutions are concentrated towards the maximum fitness value of coverage, minimum fitness value of overlap and minimum fitness value of APC. 

The solutions A1, A2, and A3 and P1, P2, and P3 are the three sample solutions taken from the entire search space and from the Pareto optimal set, respectively. Though the solutions A2 and A3 provide minimum APC, their coverage and ASE performance is poorer than that of remaining solutions due to which they are not captured in the Pareto optimal solution set. The solution P3 provides minimum APC and overlap while the solution P1 provides slightly higher APC and overlap. Since each of the solutions in the Pareto set has certain trade-offs between the cost functions it gives the opportunity to choose a desired solution according to operator's need. It can also be seen that the Pareto solution set is spread across the solution space due to the sharing operation used.

\begin{figure*}[thb]
\subfigure[]{\includegraphics[scale=.24]{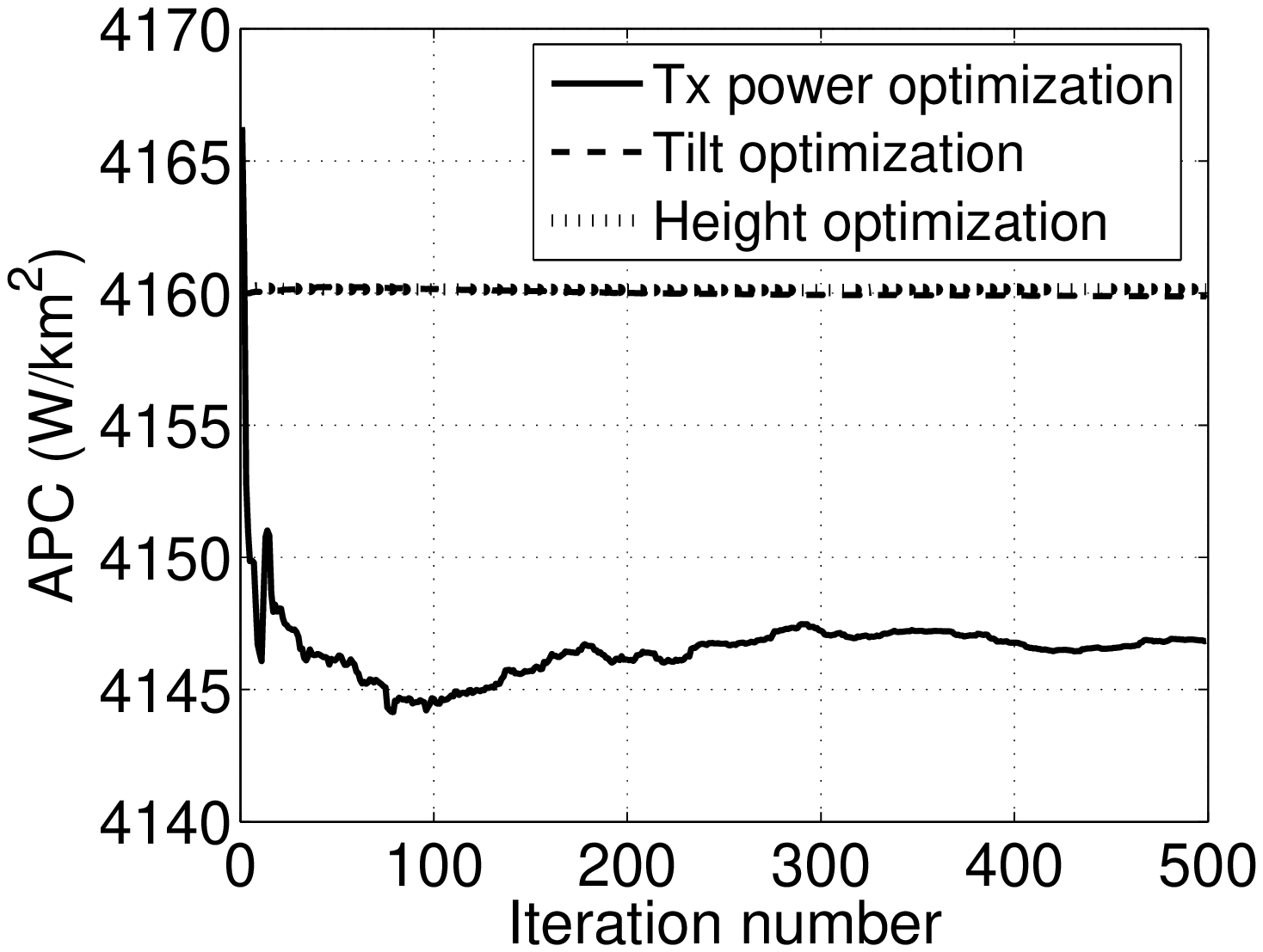}\label{Iteration_no_vs_APC_RAN}}\hspace{-.385cm}
\subfigure[]{\includegraphics[scale=.24]{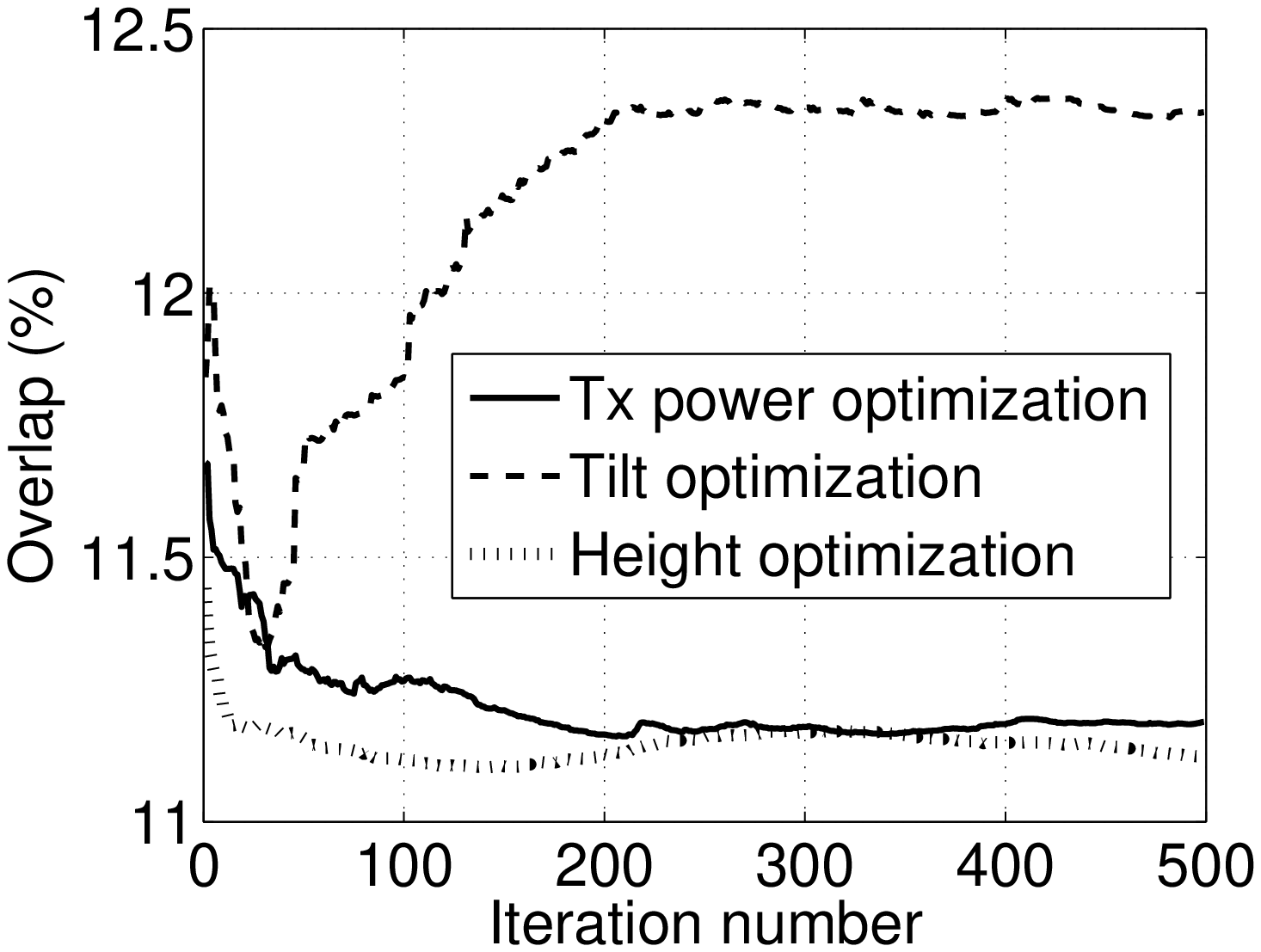}\label{RAN_OL_TWC}} \hspace{-.385cm}
\subfigure[]{\includegraphics[scale=.24]{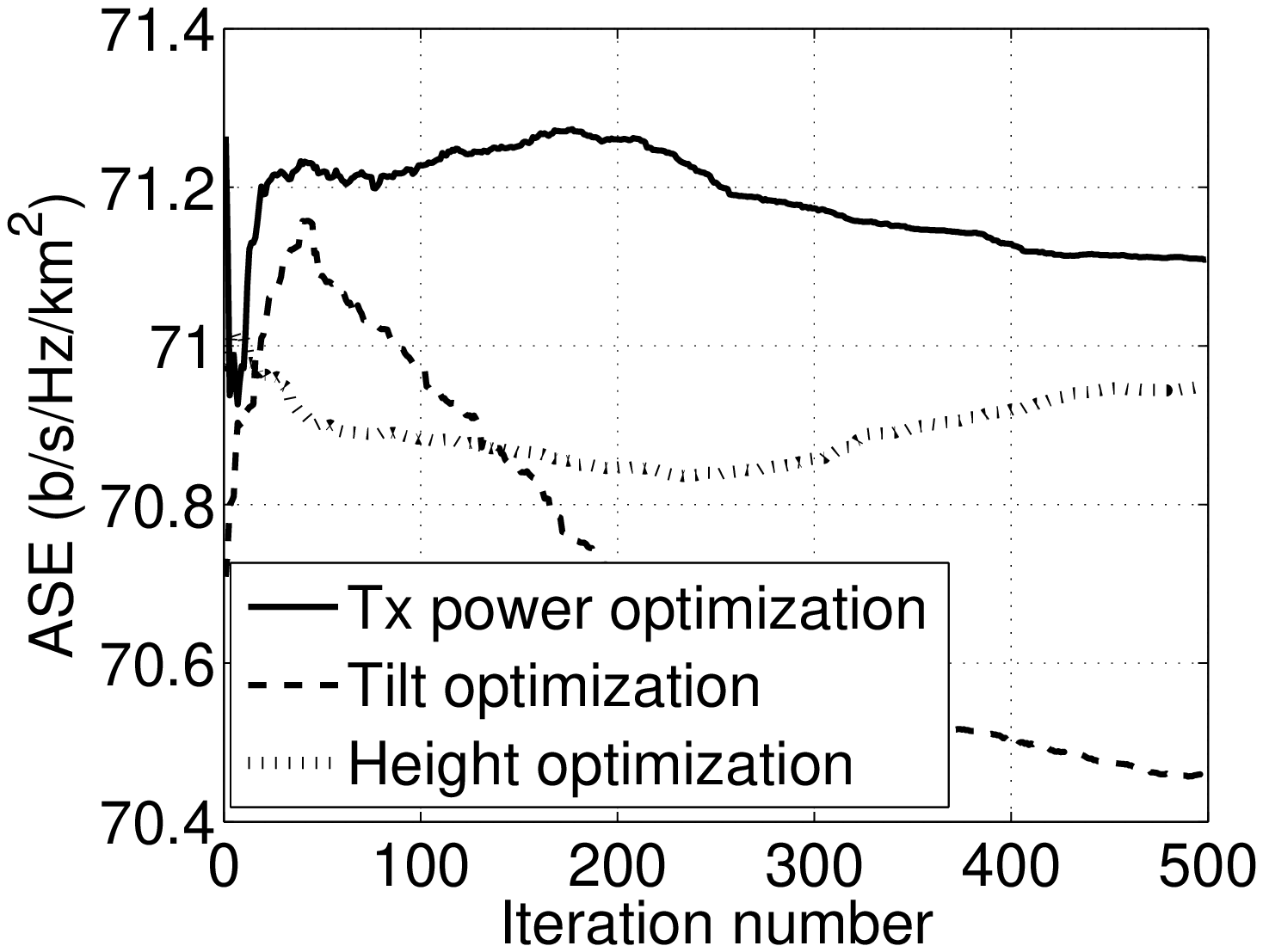}\label{RAN_ASE_TWC}}\hspace{-.385cm} 
\subfigure[]{\includegraphics[scale=.24]{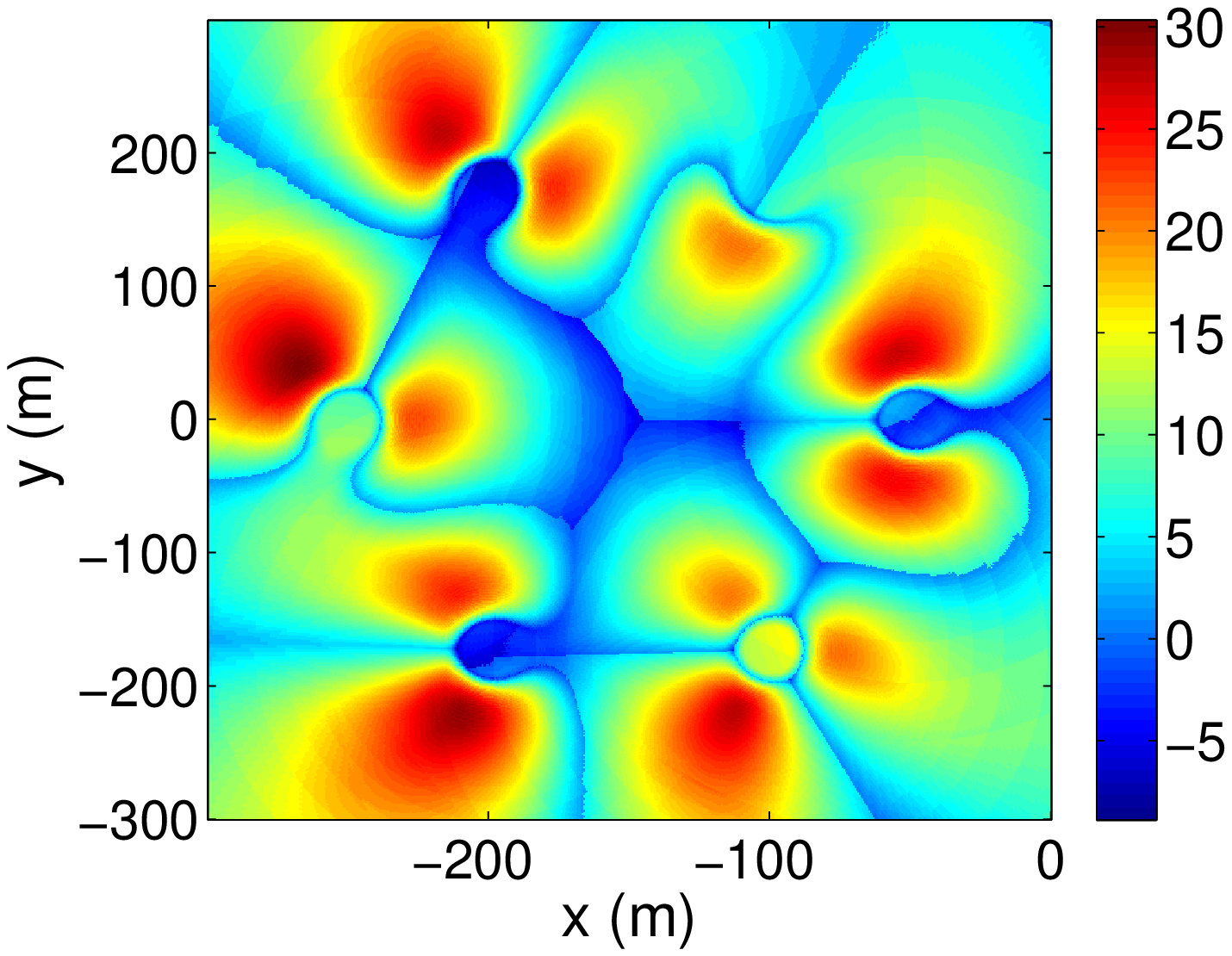}\label{3d_Pattern_mean_12Tx}} \hspace{-.385cm}
\subfigure[]{\includegraphics[scale=.24]{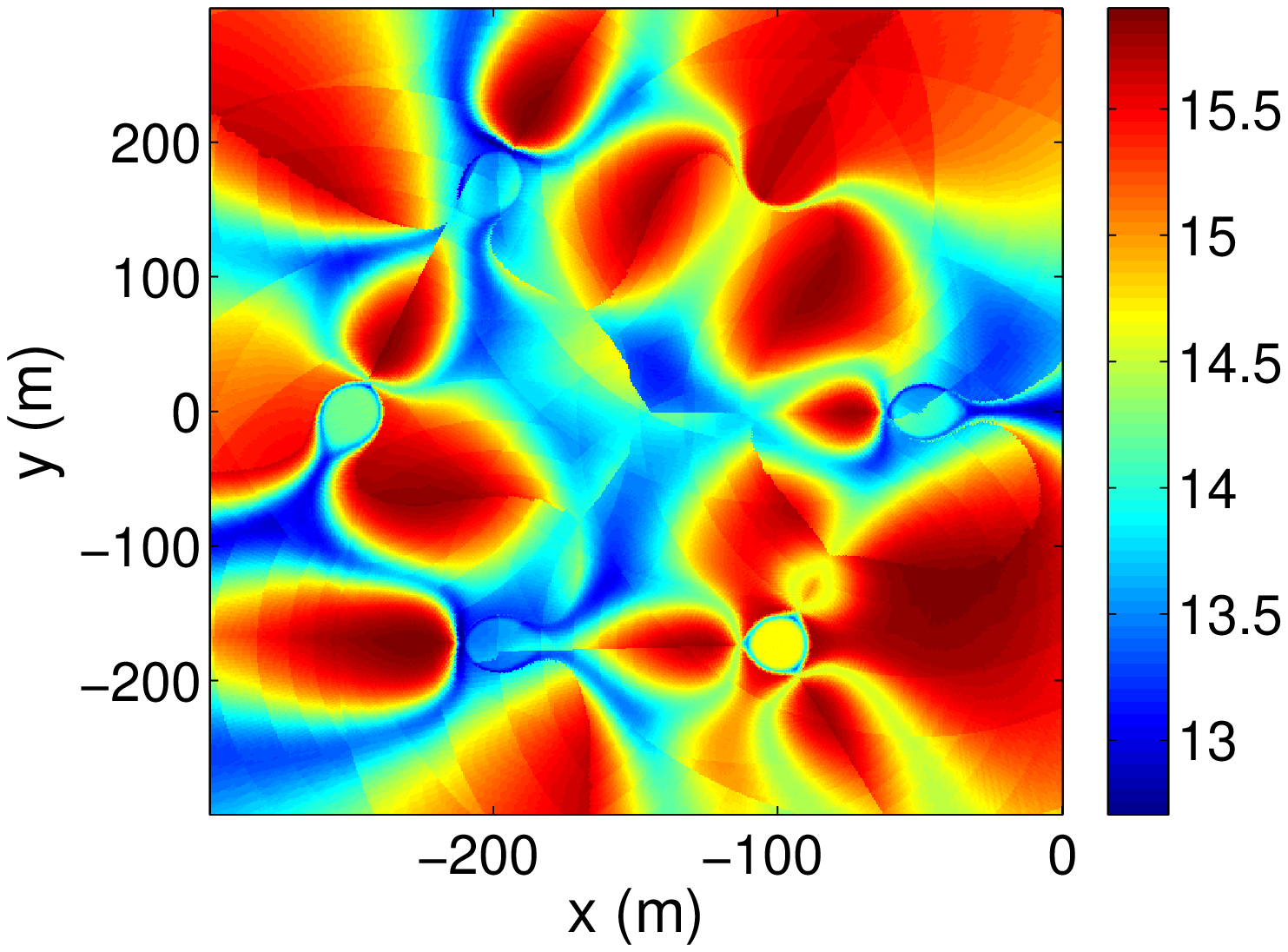}\label{3d_Pattern_variance_12Tx}}
\caption{Optimization of RAN parameters for active sector set: $\mathrm{B}_{\mathrm{on}}=\{5,6,9,10,11,13,14,17,18,19,20,21\}$ for $\rho = 8.683\times10^{-4}$ Erlang/m$^2$ (a) Average APC vs Iteration number (b) Average overlap vs Iteration number (c) Average ASE vs Iteration number (d) Mean of SINR (in dB) with $\boldsymbol{\mathcal{P}} ^*=[43,43,43,41,41,40,42,40,42,40,40,42]$ (dBm), $\mathrm{H}_{tj} = 20$ m, and $\mathrm{\Phi}_{\mathrm{tilt}j}=12$ Deg., $\forall j \in \mathrm{B}_{\mathrm{on}}$ (e) Variance of SINR (in dB$^2$)}
\label{Iteration_RAN}
\end{figure*}

It is seen that the maximum traffic demand density supported by the network when all cells are active is $\rho_{\mathcal{B}}^{\mathrm{max}}=12.5$ Erlang/m$^2$. Figure \ref{APC_ASE_COV} shows the cost functions of the Pareto solution set for different traffic demand density, $\rho$(Erlang/m$^2$) $=3.859\times10^{-4},5.306\times10^{-4},7.2360\times10^{-4}$, $9.1656\times10^{-4},11.09\times10^{-4}$, and $12.06\times10^{-4}$ (i.e. normalized traffic load ($\hat{\rho}=\rho/\rho_{\mathcal{B}}^{\mathrm{max}}$) is $26\%,41\%,56\%,70\%,85\%$ and $92\%$, respectively). It can be seen that as the traffic demand density increases the coverage, APC and ASE of the Pareto solution set also increases as expected. Also the minimum value of the coverage in the Pareto solution set increases with increasing traffic demand due to increased number of active sectors in the corresponding Pareto solution set. Further, it can be seen that the solutions for a particular traffic demand focuses on to a particular region of the search space.


Figure \ref{Cost_functions_different_objectives} shows the impact of selection of solutions with the goal of achieving an individual objective on cost functions and ES. Figure \ref{APC_Final_TWC} shows the APC when the solution is selected with an individual objective for varying traffic demand density. The corresponding ASE, \% overlap, \% coverage, and ES are shown in Figure \ref{ASE_vs_Erlang_TWC}, \ref{Cov_Final_TWC}, \ref{Overlap_Final_TWC}, and \ref{ES_Final_TWC}, respectively. It can be observed from Figure \ref{ES_Final_TWC} that the selection of a solution with the objective of minimizing APC provides maximum ES (i.e. $71\%$) compared to other cases. However, it is achieved at the cost of significant reduction in ASE, and coverage performance. For instance, if the solution is selected with the objective of minimizing APC, then for a traffic demand density of $\rho=6.2\times10^{-4}$ Erlang/m$^2$, the ES is $25\%$ more compared to the case where the solution is selected with the objective of maximizing ASE. However, it comes at the cost of a loss of $27$ b/s/Hz/km$^2$ of ASE and a loss of $1.8\%$ network coverage. The \% overlap increases from $21\%$ to $25\%$ when the solution is selected with the objective of minimizing APC instead of the case where the solution is selected with the objective of maximizing ASE. Further, it can be seen that if the solution is selected with the objective of minimizing overlap, the possible ES is close to the case where the solution is selected with the objective of minimizing APC. The ASE also nearly equal to the case where the solution is selected with the objective of maximizing ASE. Therefore, it can be concluded that the best approach is to select the solution with the objective of minimizing \% overlap. However, there may several possible cases where different trade-offs could be observed if the solution is selected with fixed coverage and overlap 
requirements.

\subsection{Results of RAN Parameter Optimization}
In this section, we show the results of RAN parameter optimization for the active sector set $\mathrm{B}_{\mathrm{on}} = \{5,6,9,10,11,13,14,$$17,18,19,20,21\}$ for the traffic demand density of $\rho = 8.683\times10^{-4}$ Erlang/m$^2$. The available range of transmit power, tilt angle, and height of an individual sector are taken as three units above and three units below the fixed RAN parameter values as given in Table I(a) with step size of 1 Deg., 2 m, and 1 W, respectively (i.e. $P_{tj}\!\! \in\{40,41,42,43,45,46,47\}$, $\phi_{\mathrm{tilt}j} \in \{9,10,11,12,13,14,15\}$, $H_{tj} \in \{17,18,19,20,21,22,23\}$, $\forall j \in \mathrm{B}_{\mathrm{on}}$). In order to see the impact of an individual RAN parameter, we individually optimized the RAN parameters keeping others parameters at a fixed value.

Figures \ref{Iteration_no_vs_APC_RAN}-\ref{RAN_ASE_TWC} show the mean values of the APC, overlap and ASE, respectively of the solutions in the Pareto solution set. It can be seen that the impact of sector transmit power optimization on APC is significant than tilt and height optimization. Further, transmit power optimization decreases the overlap and increases the ASE. Tilt optimization slightly increases the overlap which results in decrease in the ASE performance. However, height optimization decreases the average overlap similar to transmit power optimization. Though the number of solutions in the Pareto solution set increases with iteration number (Figure is not shown here), the average APC converges quickly at around $100$ iterations. Reduction in the APC is not significant as the active sector set itself an optimized solution. Figures \ref{3d_Pattern_mean_12Tx} and \ref{3d_Pattern_variance_12Tx} shows the mean and variance of the SINR at different locations in the network for a sample 
solution when only the transmit power is optimized. It can be observed that the variance is low near the BSs and is high at the edge regions between the sectors. This is because the variance of SINR depends on the strength of the received signal power from an individual sector. As the regions near the BSs receive the strongest signal strength compared to neighbouring sectors which does not increase the variance of SINR. The edge regions receive almost equal power from the neighbouring sectors which results in increased variance. This increased variance leads to increased handover in those regions.

\begin{figure*}[htb]
\begin{center}
\hspace{-.25cm}
\subfigure[]{\includegraphics[scale=.55]{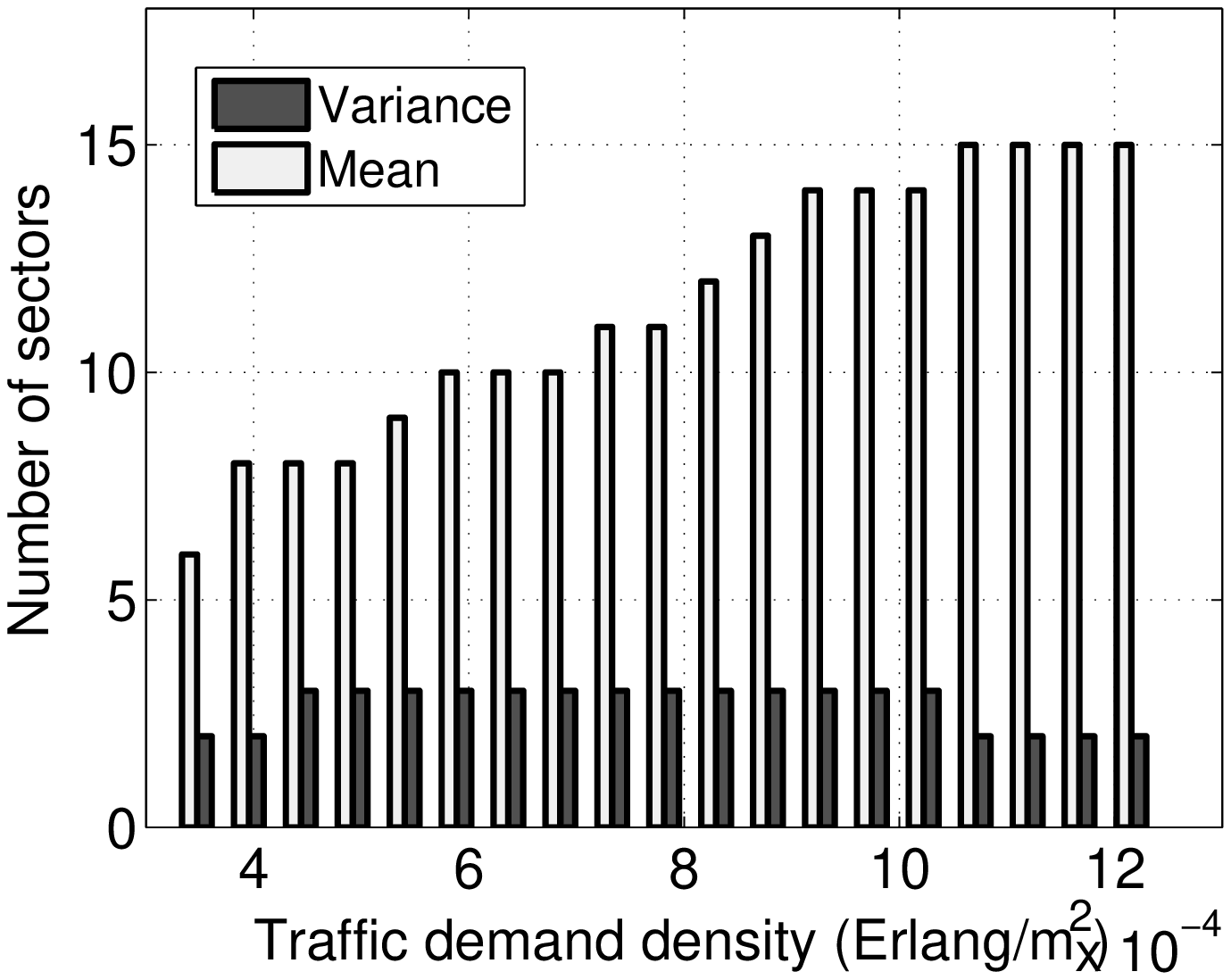}\label{CDF_NBS_Erlang_TWC}} \vspace{-.25cm} 
\subfigure[]{\includegraphics[scale=.55]{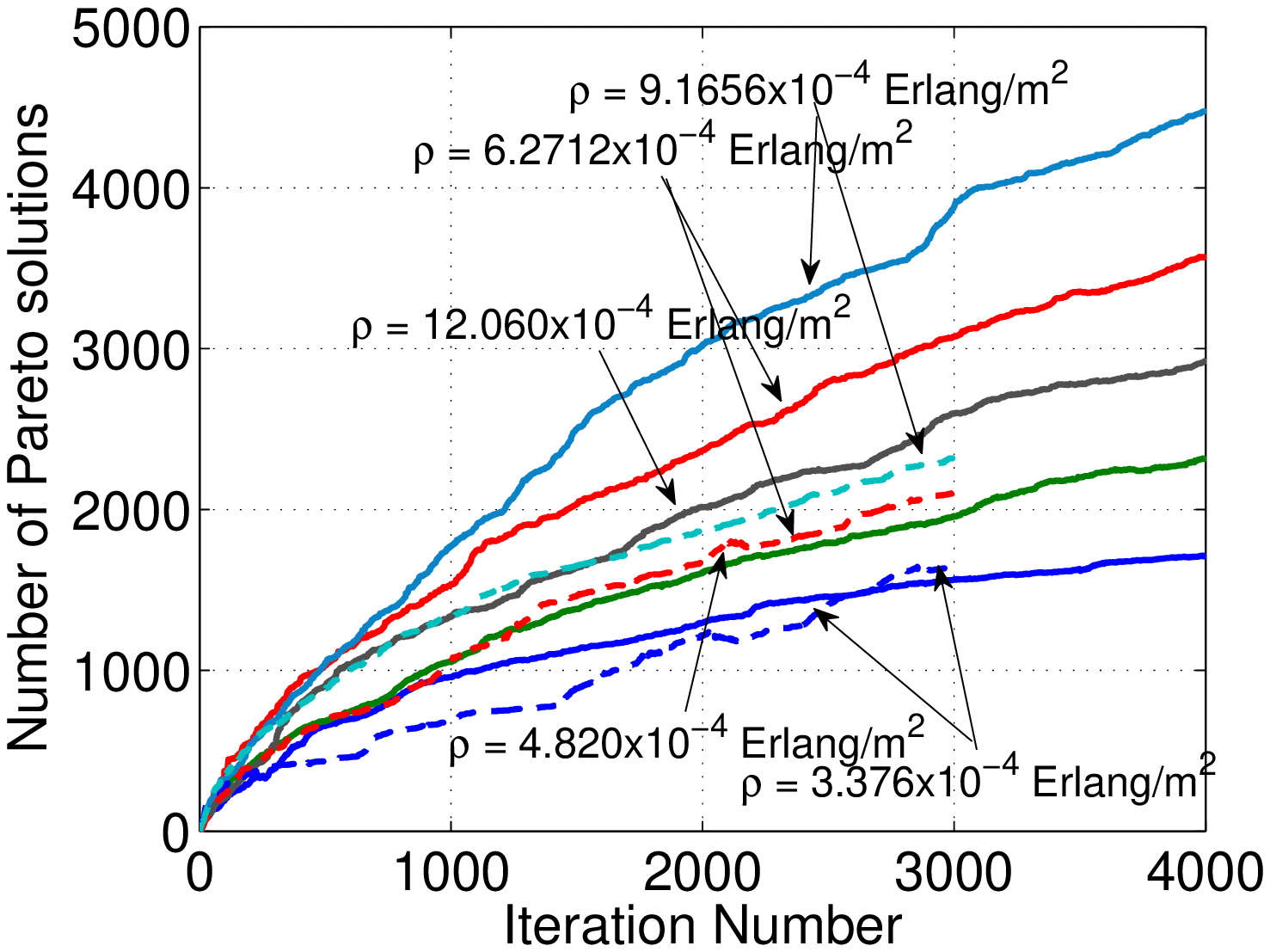}\label{Number_of_Pareto_Solutions_vs_Itretation1}}
\caption{Convergence results (a) Mean and variance of number of active sectors for varying traffic demand density (b) Number of combinations in the search space vs Number of active sectors (c) Number of solutions in the Pareto solution set vs Iteration number}
\label{Iteration_Req_vs_Erlang1}
\end{center}
\end{figure*}

\subsection{Convergence and Complexity Analysis}
The two important performance metrics used for evaluating the performance of MOEAs are \textit{Convergence} and \textit{diversity} \cite{deb2001multi}. Due to space limitations, in this section we present only the convergence results. We evaluate the convergence in terms of the number of solutions in the Pareto solution set i.e. $|\mathcal{Q}_{\mathrm{Pareto}}|$. The convergence of the Pareto solution set depends on the crossover probability $p_{co}$ and the mutation probability $p_{mu}$. Since the considered problem is NP-hard, it is infeasible to say whether the final Pareto set converges to the true Pareto front or not. However, it is verified with very low traffic demand density (i.e. $\rho$) that the algorithm converges to the best solution set which is same as the one obtained using exhaustive search when $p_{co}=.7$ and $p_{mu}=.01$ for $N_{\mathrm{Init}}= 100$. Therefore, the results presented in this section 
are based on the above GA parameters.



Figure \ref{CDF_NBS_Erlang_TWC} shows the mean and variance of number of sectors in the set of active sectors in the Pareto solution set obtained after $4000$-th generation for varying traffic demand density. 
The mean, 
\begin{equation} 
\bar{N}_{\mathcal{B}_{\mathrm{on}}} = \frac{1}{|\mathcal{Q}_{\mathrm{Pareto}}|}\sum_{n=1}^{|\mathcal{Q}_{\mathrm{Pareto}}|} |\mathcal{B}_{\mathrm{on}}^n|\nonumber 
\end{equation} 
and the variance, 
\begin{equation} 
\sigma_{N_{\mathcal{B}_{\mathrm{on}}}}^2 =[ \frac{1}{|\mathcal{Q}_{\mathrm{Pareto}}|}\sum_{n=1}^{|\mathcal{Q}_{\mathrm{Pareto}}|} |\mathcal{B}_{\mathrm{on}}^n|^2 ]-\bar{N}_{\mathcal{B}_{\mathrm{on}}}^2. \nonumber 
\end{equation}

It can be seen that as the traffic demand density increases the number of sectors in the Pareto solution set also increases as expected. It is clear from Figures \ref{APC_ASE_COV} and \ref{CDF_NBS_Erlang_TWC} that for a given traffic demand density, the Pareto optimal solutions always converge towards the region of the search space where the number of sectors in the active sector 
set is equal to the number of sectors required to serve that traffic demand. Hence, for a given traffic load, the solutions can be obtained by searching only within a sub-population such that the number of sectors in the set of active sectors the sub-population is equal to the number of sectors that is required to provide that traffic demand.

The solid and dotted lines in Figure \ref{Number_of_Pareto_Solutions_vs_Itretation1} shows the number of solutions in the Pareto solution set $|Q_{\mathrm{Pareto}}|$ vs iteration number for different traffic demand density when the search is performed in the entire search space $\mathcal{T}_{\mathcal{B}}$ and the reduced search space $\mathcal{T}_{\mathcal{B}_{\hat{\rho}_q}}$, respectively. From the figure it can be observed that the number of solutions in the Pareto solution set increases with iteration number and it seems converges quickly during low and high load conditions compared to medium load conditions. It can be observed that the number of solutions in the Pareto solution set converges very quickly when the search is performed over the reduced search space.

\section{Conclusions}\label{conclusion}
We have proposed a dynamic multi-objective optimization framework for OFDMA based cellular systems to identify the set of active sectors and RAN parameters which are required to serve a given traffic demand with minimum area power consumption while maintaining a suitable  trade-offs between other metrics like network coverage, area spectral efficiency, and overlap. 

A Genetic Algorithm based solution approach is proposed to obtain the Pareto optimal solution for the considered multi-objective optimization. It is seen that the maximum energy saving is achieved when the solution (i.e. set of active sectors and RAN parameters) are selected with the objective of minimizing area power consumption. However, it is associated with a reduced area spectral efficiency and also affects the coverage performance. On the other hand, if the solution is selected with the objective of minimizing overlap then the area spectral efficiency and coverage performance improves significantly even though the energy saving is not as high as the former case.

It is further seen that when the RAN parameters are optimized along with the optimized active sector set, the transmit power optimization provides more ES compared to tilt and height optimization. By selecting the solutions based on different objectives, at $20\%$, $50\%$, and $80\%$ network traffic load, it is possible to achieve ES between $60\%$ to $85\%$, $30\%$ to $60\%$, and $10\%$ to $40\%$, respectively. The results show that the proposed framework can be used in practical networks in a centralized manner to adaptively change the network configuration at a faster convergence rate and minimal computational complexity. 

\appendix

\subsection{Derivation of PDF of cumulative interference (Equation \eqref{intf_dist})}
The MGF of sum of binary weighted $N_{\mathcal{B}_{\mathrm{on}}}-1$ independent RVs can be written as the product of MGF of individual RVs as
\begin{equation}
\mathcal{M}_{P_{I_i}}(s) = \prod_{g=1}^{N_{\mathcal{B}_{\mathrm{on}}}-1} \big[\beta_g \mathcal{M}_{P_{r,ig}}(s) +(1-\beta_g)\big].
\end{equation}

Using the definition of MGF i.e. $\mathcal{M}_X(s) = \int_{0}^{\infty}e^{-sx} p_X(x) \ dx$ the MGF of a log-normal-Gamma RVs $P_{r,ig}$ can be written as 
\begin{align}
& \mathcal{M}_{P_{r,ig}}(s)\nonumber \\
& =\int_0^\infty\! e^{-sx} \!\! \int_0^\infty \!\! \bigg(\frac{m}{y}\bigg)^m \frac{x^{m-1}e^{ -\frac{mx}{y}}}{\Gamma(m)} \! \frac{ e^{-\frac{(\ln y-\mu_{\hat{\chi}_{ig}})^2}{2\sigma_{\hat{\chi}_{ig}}^2}}}{\sqrt{2\pi} y \sigma_{\hat{\chi}_{ig}}} dy \ dx.
\end{align}

By letting $t=\frac{(\ln y-\mu_{\hat{\chi}_{ig}})}{\sqrt{2}\sigma_{\hat{\chi}_{ig}}}$, the above expression can be rewritten as 
\begin{align}
&\mathcal{M}_{P_{r,ig}}(s)\nonumber\\& =\int_0^\infty\! \!\!  \!\! \int_0^{\infty}\! \!  \frac{m^m x^{m-1}}{\sqrt{\pi}\Gamma(m)}\!\! \frac{e^{\big[\!\!-mxe^{\!\!-(\sqrt{2}\sigma_{\hat{\chi}_{ig}} t+\mu_{\hat{\chi}_{ig}})}{\big(\!\!1+\frac{se^{(\sqrt{2}\sigma_{\hat{\chi}_{ig}}t+\mu_{\hat{\chi}_{ig}})}}{m}\!\!\big)}\!\!\big]}e^{-t^2}}{e^{m(\sqrt{2}\sigma_y t+\mu_{\hat{\chi}_{ig}})}} dx dt.
\end{align}

By letting \begin{equation}
a=m\exp{\bigg(-(\sqrt{2}\sigma_{\hat{\chi}_{ig}} t+\mu_{\hat{\chi}_{ig}})\bigg)}{\bigg(1+\frac{s\exp{\big(\sqrt{2}\sigma_{\hat{\chi}_{ig}}t+\mu_{\hat{\chi}_{ig}}\big)}}{m}\bigg)}\nonumber\end{equation} and rearranging the terms we get
\begin{align}
&\mathcal{M}_{P_{r,ig}}(s)\nonumber \\&=\frac{1}{\sqrt{\pi}}\!  \int_0^\infty\! \! \frac{m^m}{a^m\! \Gamma(m)}\! \bigg[a^m\!\int_0^{\infty}\!x^{m-1}e^{-ax}  dx\bigg]\frac{ e^{-t^2}}{e^{m(\sqrt{2}\sigma_{\hat{\chi}_{ig}} t+\mu_{\hat{\chi}_{ig}})}}\ dt.
\end{align}

By applying the property of Gamma function i.e $\Gamma(m) = a^m\!\int_0^{\infty}x^{m-1}e^{-ax} \ dx$, the above integral can be written as
\begin{align}
&\mathcal{M}_{P_{r,ig}}(s)\nonumber \\&=  \frac{1}{\sqrt{\pi}} \int_0^\infty \frac{m^me^{m(\sqrt{2}\sigma_{\hat{\chi}_{ig}} t+\mu_y)}e^{-t^2}}{\bigg[me^{-(\sqrt{2}\sigma_y t+\mu_{\hat{\chi}_{ig}})}{\bigg(1+\frac{se^{(\sqrt{2}\sigma_{\hat{\chi}_{ig}}t+\mu_{\hat{\chi}_{ig}})}}{m}\bigg)}\bigg]^m}\ dt \nonumber \\
 &= \frac{1}{\sqrt{\pi}} \int_0^\infty \bigg(1+\frac{se^{(\sqrt{2}\sigma_{\hat{\chi}_{ig}}t+\mu_{\hat{\chi}_{ig}})}}{m}\bigg)^{-m}e^{-t^2}\ dt. \nonumber
\end{align}

The above integral can be rewritten using Gauss-Hermite series as
\begin{align}
\mathcal{M}_{P_{r,ig}}&(s;\mu_{P_{r,ig}},\sigma_{P_{r,ig}}) \nonumber\\= &\sum_{n=1}^{N_H}\! \frac{w_n}{\sqrt{\pi}}\bigg(\!\!1\!+\!\frac{s\exp{\big(\sqrt{2}\sigma_{\hat{\chi}_{ig}}t_n+\mu_{\hat{\chi}_{ig}}\big)}}{m}\!\!\bigg)^{-m}\!\!+\!R_{N_H},
\end{align}
where $N_H$ is the order of the Hermite integration, $R_N$ is reminder term. The weights, $w_n$ and abscissas, $a_n$ for $N_H$ are given in \cite[Tab. 25.10]{abramowitz1974}.

The mean and variance of $P_{I_i}$ can be obtained by solving the following two equations 

\begin{align}
 &\mathcal{M}_{P_{I_i}}(s_i;\mu_X,\sigma_X)\nonumber \\ &= \prod_{g=1}^{N_{\mathcal{B}_{\mathrm{on}}}-1} \!\bigg[\beta_g \mathcal{M}_{P_{r,ig}}(s_i;\mu_{P_{r,ig}},\sigma_{P_{r,ig}}) +(1-\beta_g)\bigg]\nonumber \\&= \prod_{g=1}^{N_{\mathcal{B}_{\mathrm{on}}}-1}\!  \Bigg[\!\beta_g\sum_{n=1}^{N_H} \frac{w_n}{\sqrt{\pi}}\bigg(1+\frac{s_ie^{(\sqrt{2}\sigma_{\hat{\chi}_{ig}}t_n+\mu_{\hat{\chi}_{ig}})}}{m}\bigg)^{-m}\!+\!(1-\beta_g)\!\Bigg].
\end{align}
The values of $\mu_X$ and $\sigma_X$ can be easily obtained by numerically solving the above non-linear equations at different real and positive values of $s$, namely $s_1$ and $s_2$.

\subsection{Derivation of Equation \eqref{cov_prob_location}}
\begin{align}\label{cov}
Pr&(\gamma_{ij} \ge \Gamma_{\mathrm{min}},P_{r,ij}\ge P_{r,\mathrm{min}})\nonumber \\
=&Pr\bigg(\frac{P_{r,ij}}{\Gamma_{\mathrm{min}}} \ge P_{I_i} ,P_{r,ij}\ge P_{r,\mathrm{min}}\bigg)\nonumber \\=&\!\int_{P_{r,\mathrm{min}}}^{\infty}\int_0^{Pr/\Gamma_{\mathrm{min}}} p_{P_{I_i}}(I) \ dI\  p_{P_{r,ij}}(P_r) \ dP_r\nonumber \\ =&\!\int_{P_{r,\mathrm{min}}}^{\infty} \bigg[1-Q\bigg(\!\frac{\ln(P_{r}/\Gamma_{\mathrm{min}})-\!\mu_{P_{I_i}}}{\sigma_{P_{I_i}}}\!\!\bigg)\bigg]p_{P_{r,ij}}(P_r)\ dP_r \nonumber \end{align}
\begin{align} =&\int_{P_{r,\mathrm{min}}}^{\infty}\! p_{P_{r,ij}}(P_r)\ dP_r\nonumber \\ 
&-\int_{P_{r,\mathrm{min}}}^{\infty}\! Q\bigg(\!\frac{\ln(P_{r}/\Gamma_{\mathrm{min}})-\!\mu_{P_{I_i}}}{\sigma_{P_{I_i}}}\!\!\bigg)p_{P_{r,ij}}(P_r) dP_r  \nonumber \\ =Q&\bigg(\!\frac{\ln(P_{r,\mathrm{min}})-\!\mu_{P_{r,ij}}}{\sigma_{P_{r,ij}}}\!\bigg)
\nonumber \\&-\int_{P_{r,\mathrm{min}}}^{\infty}\!\! Q\bigg(\!\frac{\ln(P_{r}/\Gamma_{\mathrm{min}})-\!\mu_{P_{I_i}}}{\sigma_{P_{I_i}}}\!\!\bigg)p_{P_{r,ij}}(P_r) dP_r, 
\end{align}
where $Q(x) = \frac{1}{\sqrt{2\pi}} \int_{x}^{\infty}e^{-\frac{t^2}{2}} dt$.

By applying Chernoff bound to the $Q$-function  (i.e. $Q(x) = e^{-\frac{x^2}{2}}$), \eqref{cov} can be approximated as 
\begin{align}
\hat{Pr}&(\gamma_{ij} \ge \Gamma_{\mathrm{min}},P_{r,ij}\ge P_{r,\mathrm{min}})\nonumber \\ 
&=\exp{\bigg(-\frac{(\ln(P_{r,\mathrm{min}})-\mu_{P_{r,ij}})^2}{2\sigma_{P_{r,ij}}^2}\bigg)}\nonumber \\ &-\int_{P_{r,\mathrm{min}}}^{\infty} \!\!\!e^{-\frac{(\ln(P_{r}/\Gamma_{\mathrm{min}})-\mu_{P_{I_i}})^2}{2\sigma_{P_{I_i}}^2}}\frac{e^{\frac{-(\ln(Pr)-\mu_{P_{r,ij}})^2}{2\sigma_{P_{r,ij}}^2}}}{ \sqrt{2\pi}Pr \sigma_{P_{r,ij}}} \ dP_r. \nonumber 
\end{align}\normalsize

After some manipulations
\begin{align} 
\hat{Pr}&(\gamma_{ij} \ge \Gamma_{\mathrm{min}},P_{r,ij}\ge P_{r,\mathrm{min}}) \nonumber \\ 
=&\exp{\bigg(-\frac{(\ln(P_{r,\mathrm{min}})-\mu_{P_{r,ij}})^2}{2\sigma_{P_{r,ij}}^2}\bigg)}\nonumber \\ &-\frac{\sigma_{P_{I_i}}}{\sqrt{\sigma_{P_{I_i}}^2+\sigma_{P_{r,ij}}^2}}\!\int_{P_{r,\mathrm{min}}}^{\infty} \frac{\exp{\bigg(E-\frac{\big(\ln(\frac{Pr}{\Gamma_{\mathrm{min}}})-\mu_c\big)^2}{2\sigma_c^2}\bigg)}}{\sqrt{2\pi}Pr \sigma_c  }  dP_r, \nonumber
\end{align}
where $E=$
\begin{equation}
\frac{-(\mu_{P_{I_i}}-\mu_{P_{r,ij}})^2-(\mu_{P_{I_i}}+\ln \Gamma_{\mathrm{min}})^2+\mu_{P_{I_i}}^2(1+\frac{2\mu_{P_{r,ij}}\ln \Gamma_{\mathrm{min}}}{\sigma_{P_{I_i}}^2})}{(\sigma_{P_{I_i}}^2+\sigma_{P_{r,ij}}^2)}, \nonumber 
\end{equation}
\begin{equation}
\mu_c=\frac{(\mu_{P_{r,ij}}-\ln \Gamma_{\mathrm{min}})\sigma_{P_{I_i}}^2+\mu_{P_{I_i}}\sigma_{P_{r,ij}}^2}{\sigma_{P_{I_i}}^2+\sigma_{P_{r,ij}}^2}, \nonumber 
\end{equation}
and \begin{equation}
\sigma_c=\frac{\sigma_{P_{I_i}}\sigma_{P_{r,ij}}}{\sqrt{\sigma_{P_{I_i}}^2+\sigma_{P_{r,ij}}^2}}.\nonumber 
\end{equation}

 The integral in the second term is a $Q$-function. By applying Chernoff bound to it the coverage probability can be obtained as 
\begin{align}\label{cov_prob}
\hat{Pr}&(\gamma_{ij} \ge \Gamma_{\mathrm{min}},P_{r,ij}\ge P_{r,\mathrm{min}}) \nonumber \\ 
=&\exp{\bigg(-\frac{(\ln(P_{r,\mathrm{min}})-\mu_{P_{r,ij}})^2}{2\sigma_{P_{r,ij}}^2}\bigg)}\nonumber \\ 
&-\frac{\sigma_{P_{I_i}}\exp{\bigg(E-\big(\frac{\ln(\frac{P_{r,\mathrm{min}}}{\Gamma_{\mathrm{min}}})-\mu_c}{\sqrt{2}{\sigma_c}}\big)^2\bigg)}}{\sqrt{\sigma_{P_{I_i}}^2+\sigma_{P_{r,ij}}^2}}.
\end{align}\normalsize

\subsection{Proof of Theorem 1}

Let $Q_l(t)$ be the random variable represents the number of class-$l$ objects in the queue at time $t$. Let 
\begin{equation}
\boldsymbol{Q}(t):=(Q_1(t),...,Q_l(t),...,Q_L(t))\end{equation}
be the state of the queue at time $t$ and $\{\boldsymbol{Q}(t)\}$ be the corresponding stationary stochastic process. Let $\pi_e(\boldsymbol{n_{uj}})$ be the probability that the queue is in state $\boldsymbol{n_{uj}}$ in equilibrium. In the long run, all the users in the system generates traffic demand belongs to different classes ($l=1,2,...,N_L$). The traffic demand generated by the users belongs to $l$-th class can be written as 
\begin{equation}
\rho_{j}(l) =  w_j(l) \  \rho_{j}.\nonumber
\end{equation}
Then the equilibrium distribution of number of users is given as \cite[Theorem: 2.1]{ross1995} 
\begin{equation}
\pi_e(\boldsymbol{n_{uj}}) =  \frac{\prod\limits_{l=1}^{N_L} \frac{(\rho_{j}(l))^{n_{uj}(l) }}{n_{uj}(l)!}}{{\sum\limits_{\boldsymbol{n_{uj}}\in \boldsymbol{\mathcal{S}}}^{}}\prod\limits_{l=1}^{N_L}\frac{(\rho_{j}(l))^{n_{uj}(l)}}{n_{uj}(l)!}}, \ \boldsymbol{n_{uj}}\in \boldsymbol{\mathcal{S}}. \nonumber
\end{equation}

The blocking probability of calls is obtained using Kaufman-Roberts Algorithm (KRA) \cite{karray2010,kaufman1981}. Let 
\begin{equation}
\boldsymbol{\mathcal{S}}(c):=\{ \boldsymbol{n_{uj}}\in \boldsymbol{\mathcal{S}}: \boldsymbol{n_{uj}}.\boldsymbol{n_{sc}}=c\}\nonumber 
\end{equation} be the set of states for which exactly $c$ number of sub-channels are occupied. Let $g_j(c)$ be the probability that there are \begin{equation}
c=\sum_{l=1}^{N_L}n_{uj}(l) n_{sc}(l)\end{equation} number of sub-channels are occupied i.e. 
\begin{equation}
g_j(c)= \sum_{\boldsymbol{n_{uj}} \in \boldsymbol{\mathcal{S}}(c)} \pi_e(\boldsymbol{n_{uj}}).\nonumber 
\end{equation}
The occupancy probabilities satisfy the following recursive equations \cite{kaufman1981} 

\begin{align}\label{occupancy_probability_1}
g_j(c) = &\frac{1}{c} \sum_{l=1}^{N_L}\rho_j(l) . n_{sc}(l). g_j(c-n_{sc}(l)), \ c=0,...,N_{sc},\nonumber \\ & \sum_{c=0}^{N_{sc}}g_j(c)=1.
\end{align}
The blocking probability of $l$-th class in cell $j$ can be obtained by 
\begin{equation}
P_{bj}(l) = \frac{1}{\sum_{c=0}^{N_{sc}} g_j(c)}\sum_{c=N_{sc}-n_{sc}(l)+1}^{N_{sc}} g_j(c),\ l=1,...,N_L.\nonumber
\end{equation}
 The average blocking probability in $j$-th cell is obtained as the weighted sum of all class-wise blocking probabilities i.e. 
\begin{equation}
P_{bj} = \sum_{l=1}^{N_L}{w_j(l)P_{bj}(l)}.
\end{equation}
Hence Eqn. \eqref{average_bp_cell}.

\subsection{Proof of Lemma 1}

Since there is no closed form expression available for sum of interference components in case of log-normal fading, we prove the lemma considering a single interferer. The fraction of users belong to $l$-th class attached to sector $j$ is obtained as 
	\begin{align}\label{cov}
	w_j(l)=&Pr(\Gamma_{l+1}\ge \gamma_{ij} \ge \Gamma_{l},P_{r,ij}\ge P_{r,\mathrm{min}})\nonumber \\ =& Pr(v_2==0) Pr\bigg(\frac{P_{r,ij}}{\Gamma_{l}} \ge P_{I_i(v_2==0)} \ge \frac{P_{r,ij}}{\Gamma_{l+1}}, P_{r,ij}\ge P_{r,\mathrm{min}}\bigg) \nonumber \\ &+ Pr(v_2==1)  Pr\bigg(\frac{P_{r,ij}}{\Gamma_{l}} \ge P_{I_i(v_2==1)} \ge \frac{P_{r,ij}}{\Gamma_{l+1}}, P_{r,ij}\ge P_{r,\mathrm{min}}\bigg)\nonumber	\end{align}
		\begin{align}
	=&(1-\beta_2) \int_{P_{r,\mathrm{min}}}^{\infty}\int_{Pr/\Gamma_{l+1}}^{Pr/\Gamma_{l}} p_{P_{I_i(v_2==0)}}(I) \ dI\  p_{P_{r,ij}}(P_r) \ dP_r \nonumber \\ &+ \beta_2\int_{P_{r,\mathrm{min}}}^{\infty}\int_{Pr/\Gamma_{l+1}}^{Pr/\Gamma_{l}} p_{P_{I_i(v_2==1)}}(I) \ dI\  p_{P_{r,ij}}(P_r) \ dP_r\nonumber \\= &(1-\beta_2)\int_{P_{r,\mathrm{min}}}^{\infty} \bigg[Q\bigg(\frac{\ln(P_{r}/\Gamma_{l+1})-P_N}{0}\bigg)\nonumber \\ &\hspace{3cm}-Q\bigg(\frac{\ln(P_{r}/\Gamma_{l})-P_N}{0}\bigg)\bigg]p_{P_{r,ij}}(P_r)\ dP_r \nonumber \\ &+\beta_2\int_{P_{r,\mathrm{min}}}^{\infty} \bigg[Q\bigg(\frac{\ln(P_{r}/\Gamma_{l+1})-\mu_{P_{I_i}}}{\sigma_{P_{I_i}}}\bigg)\nonumber \\ &\hspace{3cm}-Q\bigg(\frac{\ln(P_{r}/\Gamma_{l})-\mu_{P_{I_i}}}{\sigma_{P_{I_i}}}\bigg)\bigg]p_{P_{r,ij}}(P_r)\ dP_r \nonumber
\end{align}
\begin{align}	=&0+\beta_2\int_{P_{r,\mathrm{min}}}^{\infty} \bigg[\exp{\bigg(\frac{-(\ln(P_{r}/\Gamma_{l+1})-\mu_{P_{I_i}})^2}{2\sigma_{P_{I_i}}^2}\bigg)}\nonumber \\&\hspace{2cm}-\exp{\bigg(\frac{-(\ln(P_{r}/\Gamma_{l})-\mu_{P_{I_i}})^2}{2\sigma_{P_{I_i}}^2}\bigg)}\bigg]p_{P_{r,ij}}(P_r)\ dP_r.
	\end{align}
	
After applying Chernoff bound, 
\begin{align}\label{fraction_l}
w_j(l)=&\frac{\beta_2}{\sqrt{2}}\bigg[\exp{\bigg(E2-\frac{\big(\ln(\frac{P_{r,\mathrm{min}}}{\Gamma_{l+1}})-{\frac{\mu_{P_{r,ij}}+\mu_{P_{I_i}}-\ln \Gamma_{l+1}}{2}}\big)^2}{2\sigma_{P_{r,ij}}^2}\bigg)} \nonumber \\
	&- \exp{\Bigg(E3-\frac{\big(\ln(\frac{P_{r,\mathrm{min}}}{\Gamma_{l}})-{\frac{\mu_{P_{r,ij}}+\mu_{P_{I_i}}-\ln \Gamma_{l}}{2}}\big)^2}{2\sigma_{P_{r,ij}}^2}\Bigg)}\bigg]
	\end{align}
Here 
\begin{equation}
E2=\frac{-(\mu_{P_{I_i}}-\mu_{P_{r,ij}})^2-(\mu_{P_{I_i}}+\ln \Gamma_{l+1})^2+\mu_{P_{I_i}}^2(1+\frac{2\mu_{P_{r,ij}}\ln \Gamma_{l+1}}{\sigma_{P_{I_i}}^2})}{(\sigma_{P_{I_i}}^2+\sigma_{P_{r,ij}}^2)}\nonumber
\end{equation}
and 
\begin{equation}
E3=\frac{-\big((\mu_{P_{I_i}}-\mu_{P_{r,ij}})^2+(\mu_{P_{I_i}}+\ln \Gamma_{l})^2\big)+\mu_{P_{I_i}}^2(1+\frac{2\mu_{P_{r,ij}}\ln \Gamma_{l}}{\sigma_{P_{I_i}}^2})}{(\sigma_{P_{I_i}}^2+\sigma_{P_{r,ij}}^2)}.\nonumber
\end{equation}

By letting 
\begin{equation}
X(\Gamma_{l+1}) = \frac{1}{\sqrt{2}}\exp{\bigg(E2-\frac{\big(\ln(\frac{P_{r,\mathrm{min}}}{\Gamma_{l+1}})-{\frac{\mu_{P_{r,ij}}+\mu_{P_{I_i}}-\ln \Gamma_{l+1}}{2}}\big)^2}{2\sigma_{P_{r,ij}}^2}\bigg)},\nonumber
\end{equation}
and 
\begin{equation}
X(\Gamma_{l})= \frac{1}{\sqrt{2}} \exp{\bigg(E3-\frac{\big(\ln(\frac{P_{r,\mathrm{min}}}{\Gamma_{l}})-{\frac{\mu_{P_{r,ij}}+\mu_{P_{I_i}}-\ln \Gamma_{l}}{2}}\big)^2}{2\sigma_{P_{r,ij}}^2}\bigg)},\nonumber
\end{equation}
Eqn. \eqref{fraction_l} is rewritten as
\begin{equation}
w_j(l) = \beta_2 [X(\Gamma_{l+1})-X(\Gamma_l)].  
\end{equation}
	
Consider a simple scenario where $N_L=2, \ N_{\mathrm{sc}} = 2, \ \boldsymbol{n_{sc}} = (1,1)$. Then from Eqn. \eqref{occupancy_probability}
\begin{align} 
\beta_{j} = &\frac{1}{2}\frac{\sum_{c=0}^{2} g_j(c)}{\sum_{c=0}^{2}\frac{1}{c} g_j(c)} \nonumber \\= &\frac{1}{2}\frac{\sum_{c=0}^{2} \sum_{l=1}^{2} \rho_{j}. w_j(l).  g_j(c-n_{sc}(l))}{\sum_{c=0}^{2} \frac{1}{c}\sum_{l=1}^{2}\rho_{j}. w_j(l). g_j(c-n_{sc}(l))}.
\end{align}

After expanding the terms we get
\begin{equation} 
\beta_{j} =  \frac{1}{2}\frac{[1+(w_j(1)+w_j(2))]}{[1+\frac{1}{2}(w_j(1)+w_j(2))]}
\end{equation}\normalsize
	
After substituting $w_j(1) = \beta_2 [X(\Gamma_{2})-X(\Gamma_1)]$, $w_j(2) = \beta_2 [X(\Gamma_{\mathrm{max}})-X(\Gamma_2)]$ we get
\begin{equation} 
\beta_{j} =  \frac{1+\beta_2[X(\Gamma_{\mathrm{max}})-X(\Gamma_1)]}{2+\beta_2[X(\Gamma_{\mathrm{max}})-X(\Gamma_1)]}
\end{equation}

After taking derivative of $\beta_j$ with respect to $\beta_2$ we get
\begin{equation}
\frac{\partial \beta_j}{\partial \beta_2} = \frac{[X(\Gamma_{\mathrm{max}})-X(\Gamma_1)]}{\big(2+\beta_2[X(\Gamma_{\mathrm{max}})-X(\Gamma_1)]\big)^2}
\end{equation}

Let $\mu_{P_{r,ij}} = 0, \ \mu_{P_{I_i}} = 0, \ \sigma_{P_{r,ij}}^2 = \sigma_{P_{r,ij}}^2 = \sigma^2$. Then, 
\begin{align}
X&(\Gamma_{\mathrm{max}})\nonumber\\&=\frac{1}{\sqrt{2}}\exp{\bigg(\frac{-\frac{5}{4}(\ln \Gamma_{\mathrm{max}})^2-(\ln P_{r,\mathrm{min}})^2+\ln (P_{r,\mathrm{min}})\ln \Gamma_{\mathrm{max}}}{2\sigma^2}\bigg)}\nonumber\end{align}
and
\begin{equation}
X(\Gamma_{1})=\frac{1}{\sqrt{2}}\exp{\bigg(\frac{-\frac{5}{4}(\ln \Gamma_{1})^2-(\ln P_{r,\mathrm{min}})^2+\ln (P_{r,\mathrm{min}})\ln \Gamma_{1}}{2\sigma^2}\bigg)}.\nonumber\end{equation}

It can be seen that $X$ is increasing function of $\Gamma$. Since $\Gamma_{\mathrm{max}}>\Gamma_1$, $X(\Gamma_{\mathrm{max}}) > X(\Gamma_1)$. 	
	
After taking second derivative we get
\begin{equation}
\frac{\partial^2 \beta_j}{\partial \beta_2^2} =\frac{-2[X(\Gamma_{\mathrm{max}})-X(\Gamma_1)]^2}{\big(2+\beta_2[X(\Gamma_{\mathrm{max}})-X(\Gamma_1)]\big)^3},
\end{equation}
where $(X(\Gamma_1)-X(\Gamma_{\mathrm{max}}))\in [0,1], \beta_2 \in [0,1]$. Therefore $\beta_2(X(\Gamma_1)-X(\Gamma_{\mathrm{max}}))<2$ and hence the denominator term is always positive for $\beta_2 \in [0,1]$. Then it follows that $\frac{\partial^2 \beta_j}{\partial \beta_2^2}< 0$. Therefore, it can be concluded that $\beta_j$ is concave downward on $\beta_2$.

\ifCLASSOPTIONcaptionsoff
  \newpage
\fi
\section*{Acknowledgment}
The authors are grateful to Basabdatta Palit (G. S. Sanyal School of Telecommunications, Indian Institute of Technology Kharagpur) for useful discussions, and to anonymous reviewers whose comments helped to improve the manuscript significantly. 


\end{document}